\documentclass[11pt,runningheads]{llncs}

\usepackage[margin=1in]{geometry}

\usepackage[T1]{fontenc}
\usepackage{graphicx}
\usepackage{hyperref}

\usepackage{color}

\usepackage{booktabs} % For formal tables
\usepackage[linesnumbered,ruled,vlined]{algorithm2e}

\SetAlFnt{\small}
\SetAlCapFnt{\small}
\SetAlCapNameFnt{\small}
\SetAlCapHSkip{0pt}
\IncMargin{-\parindent}

\usepackage{iftex}
\usepackage{underscore} 
\usepackage{comment}
\usepackage[T1]{fontenc} 
\usepackage[english]{babel}
\usepackage{array}
\usepackage{multirow,bigdelim}
\usepackage{centernot}
\usepackage{musicography}
\usepackage{csquotes}

\usepackage{tikz}
\usetikzlibrary{arrows,chains,matrix,positioning,scopes}
\makeatletter
\tikzset{join/.code=\tikzset{after node path={%
\ifx\tikzchainprevious\pgfutil@empty\else(\tikzchainprevious)%
edge[every join]#1(\tikzchaincurrent)\fi}}}
\makeatother
\tikzset{>=stealth',every on chain/.append style={join},
         every join/.style={->}}
\tikzstyle{labeled}=[execute at begin node=$\scriptstyle,
   execute at end node=$]

\usepackage{booktabs} % For formal 

\usepackage{amsmath, amsfonts}
\usepackage{mathtools}
\usepackage{thm-restate}
\usepackage{color}
\usepackage{lscape}
\usepackage{soul}
\usepackage{bm, bbm}
\usepackage{enumerate}
\usepackage{tabto}

\newcommand{\Xcomment}[1]{{}}

\DeclareMathOperator*{\argmin}{arg\,min}
\DeclareMathOperator*{\argmax}{arg\,max}

\newcommand{\noaccents}[1]{#1}
\newcommand{\newagentvar}[3][\noaccents]{%
\expandafter\newcommand\expandafter{\csname #2\endcsname}{#1{#3}}%
\expandafter\newcommand\expandafter{\csname #2s\endcsname}{#1{\boldsymbol{#3}}}%
\expandafter\newcommand\expandafter{\csname #2smi\endcsname}[1][i]{#1{\boldsymbol{#3}}_{-##1}}%
\expandafter\newcommand\expandafter{\csname #2i\endcsname}[1][i]{#1{#3}_{##1}}%
\expandafter\newcommand\expandafter{\csname #2ith\endcsname}[1][i]{#1{#3}_{(##1)}}%
}

\newcommand{\newvecagentvar}[3][\noaccents]{%
\expandafter\newcommand\expandafter{\csname #2\endcsname}{#1{\boldsymbol{#3}}}%
\expandafter\newcommand\expandafter{\csname #2s\endcsname}{#1{\boldsymbol{#3}}}%
\expandafter\newcommand\expandafter{\csname #2smi\endcsname}[1][i]{#1{\boldsymbol{#3}}_{-##1}}%
\expandafter\newcommand\expandafter{\csname #2i\endcsname}[1][i]{#1{\boldsymbol{#3}}_{##1}}%
\expandafter\newcommand\expandafter{\csname #2ith\endcsname}[1][i]{#1{#3}_{(##1)}}%
}

\newagentvar{val}{v}
\newagentvar{bid}{b}
\newagentvar{dist}{F}
\newagentvar{alloc}{x}
\newagentvar{util}{u}
\newagentvar{pay}{p}
\newagentvar{ability}{a}
\newagentvar{effort}{e}

\newcommand{\yotam}[1]{{\color{red}{Yotam: #1}}}

\newcommand{\items}{\mathcal{M}} 
\newcommand{\agents}{\mathcal{N}} 
\newcommand{\sellable}{\mathcal{S}} 

\usepackage[]{color-edits}

\addauthor{UF}{red}

\newcommand{\ufe}[1]{{\UFedit{#1}}}

\addauthor{YG}{blue}

\begin{document}

\title{%To Sell or Not to Sell:  
Fair Allocation with Optional Selling}

% 1. List authors and use \inst{} to map them to the \institute list below.
% Use \and to separate multiple authors.
\author{Uriel Feige \and 
        Yotam Gafni %\inst{1}
        }

% 2. List the affiliations in the exact order of the \inst{} numbers.
% Use \and to separate different institutes. 
\institute{Weizmann Institute of Science \\
\email{\{uriel.feige, yotam.gafni\}@weizmann.ac.il}
}
\maketitle

\begin{abstract}
We consider fair allocation of indivisible goods in a setting in which agents have subjective valuation functions over the set of goods, and in addition, goods may be sold at given market prices.
In this setting, a fair allocation involves 
{deciding which goods to sell, how to allocate the unsold goods, and how to divide the money received from the sold goods.} We adapt to this setting the definitions of share-based fairness notions, such as the maximin share (MMS) and the truncated proportional share (TPS), and comparison-based fairness notions such as EF1 and EFX (which we adapt to SEF1 and SEFX). 
We show the following results when the utility of each agent is additive both over goods and over money.  With two agents, there are allocations that are simultaneously MMS and SEFX. With three agents, there are instances in which no allocation gives every agent more than $\frac{11}{12}$-MMS. With any number of agents, there are $\frac{2}{3}$-MMS allocations. There also are allocations that are simultaneously SEFX and $\frac{n}{2n-1}$-TPS. This latter ratio is best possible, even without the SEFX requirement.

\end{abstract}

%\def\authorrunning{}

%\settopmatter{printacmref=true}

\allowdisplaybreaks
\raggedbottom

 %\begin{document}
% The file aaai.sty is the style file for AAAI Press 
% proceedings, working notes, and technical reports.
%

%\begin{titlepage}

% Optionally include a table of contents
%\enlargethispage{3\baselineskip}

\setcounter{tocdepth}{2} % adjust to 1 if desired
%\tableofcontents

%\end{titlepage}

\section{Introduction and Preliminaries}

The problem of Fair Division, popularized by the Cut \& Choose cake-cutting procedure \cite{steinhaus1948problem} has seen immense theoretical attention and progress in the last decade, with applications in diverse fields such as rent and land division \cite{gal2017fairest,segal2017fair}, course allocation \cite{budish2011combinatorial}, and household chores management \cite{igarashi2023kajibuntan}. However, one assumption that is absent in the core-model of indivisible items is that of an outside option, in the form of a market price. 
Consider, for example, a couple going through divorce with a single shared property. If we consider the house as an indivisible good, then there is no fair settlement: Only one of the partners can have it. However, a natural solution is to sell the house and split the proceeds equally. In formal fair allocation terms, we could say that in the purely indivisible setting, the Maximin Share (MMS) of each agent is $0$, but with selling, their maximin share is half of the house's market-price. Since selling only opens up more options, a definition of MMS for this setting always (weakly) outperforms the MMS guarantee in the purely indivisible setting. As we show, this superior guarantee is provably harder to approximate in some settings than the usual MMS without selling (see Table~\ref{tab:results}). Generally, a useful distinction in fair allocation problems is between \textit{share}-based and \textit{envy}-based notions, where the former are minimal utility guarantees each agent can expect, while the latter is a comparison-based guarantee for how agents view other's bundles in the final outcome. Since in the setting we introduce with selling, there is always the option of selling all items and sharing the proceeds equally, achieving an envy-free outcome is straight-forward. However, when the subjective valuations far supersedes the market price, this results in arbitrarily bad share guarantees. Thus, many of our results aim at a ``Best of Both Worlds'' guarantee: Optimal share approximation, together with a strong envy-free notion, adapted to our setting (SEFX).  

\subsection{Our Fairness Notions}

%\ufc{Define and motivate our versions of MMS and SEFX, SEFL. Say that our definitions are related to EFM and EFXM of two papers by Bei etal. This is part of our contribution. EF relaxations should not just be relaxations. They should be fair, and consistent with the intended use of selling so as to minimize envy. Hence, for example, if there is one item and it is sellable, our intention in envy based notions is that it must be sold, and any relaxation of EF that allows us not to sell it is not fair (it is a relaxation that is hard to justify).}

%\ygc{I didn't mention it's related to EFM or EFXM, because this appears in the related work, and also it is fundamentally different than their notion, that doesn't consider the selling counter-factual.}

%\ufc{Either define TPS here, or when it is first used, refer the reader to where the definition is}

In this section we introduce and motivate the fairness notions that we consider. They are adaptations of fairness notions that are commonly used in settings without sellable goods. The main share-based fairness notion that we will consider is the {\em maximin share} (MMS), and we shall also consider
the {\em truncated proportional share} (TPS). For envy-based fairness notions, we consider SEFX and SEF1, which are relaxations of envy-freeness adapted to our setting.

We consider a fair allocation problem with a set $\agents = \{1, 2, \ldots, n\}$ of agents and a set $\items = \{g_1, g_2, \ldots, g_m\}$ of indivisible goods. We may use the term ``item'' instead of ``good''. Each agent $i$ has a subjective valuation $v_i:2^{\items}\rightarrow \mathbb{R}_{\geq 0}$. 
We assume that valuations are additive ($v(S) = \sum_{g\in S} v(g)$) and non-negative. %\ufc{I think that here we should use standard definition for valuation (over all subsets)} 
In addition, there are market prices for the goods $p:\items \rightarrow \mathbb{R}_{\geq 0}$. Let $\sellable \subseteq \items$ be the subset of goods which are sold in a certain outcome. Given the sold goods $\sellable$, we get $\sum_{g \in \sellable} p(g)$ in sale proceeds,
%\ufc{we should say explicitly that the term sale proceeds refers to money} 
and are left with the unsold goods. We assume the sale proceeds are in the form of money, meaning they are divisible and have the same (additive) value for all agents. 
We can then decide a partition $((A_1, P_1), \ldots, (A_n, P_n))_{\sellable}$ that allocates the unsold goods and the money from the sale. For feasibility, the following condition must hold w.r.t. $(\sellable,A,P)$:

\begin{itemize}

\item All allocated kept goods are not sold: %\ufc{reformulate clearly. Perhaps use the notation $\sellable$ for the sold goods, rather than $\xi$}
\begin{equation}
\label{eq:kept_goods_not_sold}
    \sellable \cap (\cup_{i=1}^n A_i) = \emptyset,
\end{equation}

\item Distributed sale proceeds do not exceed the proceeds from sold goods:
\begin{equation}
\label{eq:sale_proceeds_feasibility}
\sum_{i=1}^n P_i \leq \sum_{g_j \in \sellable} p(g_j).\end{equation}
\end{itemize}

Moreover, if the allocation is a \textit{full} allocation (rather than a \textit{partial} allocation), it must hold that all goods are either allocated or sold, and the distributed sale proceeds equal the proceeds from sold goods:

\begin{equation}
\label{eq:all_non_sold_goods_allocated}
\sellable \cup \bigcup_{i=1}^n A_i = \items , \quad \sum_{i=1}^n P_i = \sum_{g_j \in \sellable} p(g_j).\end{equation}

We assume the agents utilities are \textit{additive} w.r.t. both their subjective valuations and the sale proceeds, i.e., given a bundle $B_j$ that consists of allocated kept goods $A_j$ and sale proceeds $P_j$, the agent utility is:

\begin{equation}
\label{eq:utility}
u_i(A_j,P_j) = \sum_{g\in A_j} v_i(g) + P_j.\end{equation}

%{We present here the definition for the MMS that applies when agents have linear marginal utility for money (sales proceeds). }

\begin{definition}
    \textit{Proportional Share (PS).}  $$PS(\items,v,p, n) = \frac{1}{n} \sum_{j\in \items} \max \{p(j), v(j)\}.$$
\end{definition}

Even in the setting without selling, the proportional share is not attainable or approximable (consider one item and two agents with low market price and very high subjective valuations). 

\begin{definition}
\label{def:MMS}
For agent $i \in \agents$, the maximin share (MMS) is the maximum value that agent $i$ can guarantee for themselves if they were to optimally sell the goods, and then partition the remaining goods and sale proceeds into $n$ bundles, where the agent receives the least valuable bundle according to their own valuation. Formally:

\begin{equation}
\text{MMS}_i(\items,v,p,n) = \max_{(\sellable,A,P)} \min_{j \in \{1, 2, \ldots, n\}} u_i(A_j, P_j),
\end{equation}

\noindent where the maximum is taken over all valid outcomes $(\sellable,A,P)$.
When the parameters $\items, v, p, n$ are clear from context, we use $MMS_i$ in short. 
We say that an outcome is $\rho$-MMS for some $0 \leq \rho \leq 1$ if each agent $i$ gets a utility of at least $\rho \cdot MMS_i$ in that outcome. 
\end{definition}

Observe that unlike the definition of MMS in settings without sellable goods, in Definition~\ref{def:MMS} the agent gets to optimize two aspects: both the decision of which goods to sell, and the decision of how to do the partition for the remaining goods (and sale proceeds). For the purpose of determining the MMS value, it will always benefit an agent to sell those goods $e$ for which $p(e) \ge v_i(e)$. Moreover, it may also benefit the agent to sell some goods $e$ with $v_i(e) > p(e)$. For example, this is necessary in order to get positive MMS value whenever $m < n$.

\textit{Truncated Proportional Share} (TPS) was previously introduced for the case of indivisible goods without selling \cite{bestBothWorlds}. We suggest the following extension to our case:

\begin{definition}
\label{def:tps_selling}
    \textit{Truncated Proportional Share With Selling (from here onwards, TPS):} With a set of agents $\agents$, set of goods $\items$, a valuation function $v$ and a market-price $p$, define $TPS(\items,v,p, n)$ as the largest value $t$ so that $$\frac{1}{n} \sum_{j\in \items} \max \{p(j), \min \{v(j), t\}\} = t.$$ 

\end{definition}

We note that if $p = 0$, the definition takes the same form as TPS for the setting without selling. We also show:

\begin{restatable}{lemma}{MMSTPSPS}
\label{lem:mms_tps_ps}
    $MMS(\items,v,p, n) \leq TPS(\items,v,p, n) \leq PS(\items, v,p,n)$. 
\end{restatable}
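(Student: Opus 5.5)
Write $f(t) = \frac{1}{n}\sum_{j\in\items}\max\{p(j),\min\{v(j),t\}\}$ for $t \ge 0$. The plan is to establish the basic properties of $f$ first, and then read both inequalities off them.

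\textbf{Properties of $f$.} Each summand $\max\{p(j),\min\{v(j),t\}\}$ is nondecreasing, continuous and $1$-Lipschitz in $t$. Hence $f$ is nondecreasing and continuous, and its slope is at most $1$ everywhere; the slope is strictly less than $1$ as soon as $n\ge 2$, or as soon as some item saturates. For large $t$, $f(t)$ is constant and equal to $PS$, so $f(t)<t$ once $t>PS$.

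\textbf{TPS $\le$ PS.} Every fixed point $t$ satisfies $t=f(t)\le f(\infty)=PS$. In particular the largest fixed point, $TPS$, is at most $PS$. Existence of a fixed point follows from $f(0)\ge 0$, $f(PS)\le PS$ and continuity.

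\textbf{Reduction for MMS $\le$ TPS.} Since $f(t)-t$ is continuous and negative for $t>TPS$, the maximality of $TPS$ gives $f(t)<t$ for all $t>TPS$. So it suffices to show $f(\mu)\ge \mu$, where $\mu=MMS$; this forces $\mu\le TPS$.

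\textbf{Showing $f(\mu)\ge\mu$.} Fix an outcome $(\sellable,A,P)$ attaining the MMS, so $v(A_k)+P_k\ge\mu$ for every bundle $k$. Feasibility gives $\sum_k P_k\le\sum_{j\in\sellable}p(j)$. Write $w(j)=\max\{p(j),\min\{v(j),\mu\}\}$, so that $f(\mu)=\frac1n\sum_j w(j)$; it is enough to exhibit bundles each of $w$-mass at least $\mu$. Sold items contribute $w(j)\ge p(j)$. Hence
\[\sum_{j\in\items} w(j)\;\ge\;\sum_{j\in\sellable}p(j)+\sum_k\sum_{j\in A_k}w(j)\;\ge\;\sum_k\Big(P_k+\sum_{j\in A_k}\min\{v(j),\mu\}\Big).\]
It remains to show each term $P_k+\sum_{j\in A_k}\min\{v(j),\mu\}$ is at least $\mu$. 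If some $j\in A_k$ has $v(j)\ge\mu$, that item alone contributes $\mu$, and all other contributions are nonnegative. Otherwise the truncation changes nothing, and the term equals $v(A_k)+P_k\ge\mu$. Summing over the $n$ bundles gives $nf(\mu)\ge n\mu$.

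\textbf{Main obstacle.} The only delicate point is the reduction: concluding $\mu\le TPS$ from $f(\mu)\ge\mu$ relies on the "largest fixed point" characterization, via continuity and $f(t)<t$ for large $t$. This is routine.
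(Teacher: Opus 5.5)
Your proof is correct and is essentially the paper's argument. $TPS\le PS$ holds because $f(t)<t$ for every $t>PS$. $MMS\le TPS$ follows by summing truncated values over an MMS outcome to get $f(MMS)\ge MMS$, and then applying the intermediate-value argument the paper isolates as Claim~\ref{clm:tps_median_value}. Your explicit case analysis for the truncation step is, if anything, cleaner than the paper's wording.

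One statement is false, though nothing depends on it: the remark that $f$ has slope at most $1$, and strictly less for $n\ge 2$. The slope at $t$ is $|\{j : p(j)<t<v(j)\}|/n$, which exceeds $1$ when, e.g., $p\equiv 0$ and more than $n$ items have large value. If the remark held, $f(t)-t$ would be strictly decreasing and ``largest'' in the TPS definition would be vacuous, so delete it.

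Also, the opening sentence of your reduction assumes its own conclusion. It should instead say that $f(t)-t$ has no zero on $(TPS,\infty)$ by maximality and is negative beyond $PS$, hence negative on all of $(TPS,\infty)$.
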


Before defining envy-based fairness notions, we introduce some notation. Given an additive valuation function $v_i$ and item price function $p$, 
%(recall that for simplicity we assume that all goods are sellable), 
we define a modified valuation function $\bar{v}_i$ as follows. For every good $g \in \items$, $\bar{v}_i(g) = \max \{p(g), v_i(g)\}$, and for every set $S \subset \items$, $\bar{v}_i(S) = \sum_{g\in S} \bar{v}_i(g)$.
%The fair allocation literature commonly distinguishes \textit{share} and \textit{envy-freeness} notions as the two main approaches towards achieving fairness. Consider the second approach. For the purpose of these definitions, for any good $g$ that is not sellable, we consider $p(g) = 0$. We let . 
The use of $\bar{v}_i$ is natural when we give the agent autonomy to decide if the unsold goods in their possession are kept or sold.

\begin{definition}
\label{def:EF}
    An allocation $(A_1, P_1), \ldots, (A_n, P_n)$ is {\em envy free} (EF) if for every agent $i$ and other agent $j$, {$\bar{v}_i(A_i) + P_i \ge \bar{v}_i(A_j) + P_j$.} %\ufc{what is $u_i$?} 
    {An allocation satisfies the following relaxations of EF if the above envy free condition holds whenever $P_j > 0$, and either the EF condition or the following conditions hold if $P_j = 0$.}
    \begin{enumerate}
        \item SEF1 (strong/sellable envy free up to one good). $\bar{v}_i(A_i) + P_i \ge \bar{v}_i(A_j \setminus \{g\}) + p(g)$ for some good $g \in A_j$.
        % EFL RELATED
        %\item $\epsilon$-SEFL (strong/sellable envy free up to one less preferred good). Either $|C_j| = 1$ and $\bar{v}_i(C_j) + \epsilon \ge p(C_j)$, or $u_i(C_i) + \epsilon \ge \bar{v}_i(C_j \setminus g) + p(g)$ for some item $g \in C_j$ satisfying $u_i(C_i) + \epsilon \ge \bar{v}_i(g)$. Since $\bar{v}_i(C_j) \ge p(C_j)$ by definition, the definition is equivalent to strengthening SEF1 with the additional requirement that either  $|C_j| = 1$, or $u_i(C_i) + \epsilon \ge \max_{g \in C_j} \bar{v}_i(g)$. 
         \item SEFX (strong/sellable envy free up to any good). $\bar{v}_i(A_i) + P_i \ge \bar{v}_i(A_j \setminus \{g\}) + p(g)$  for every good $g \in A_j$.
    \end{enumerate}
\end{definition}

Definition~\ref{def:EF} differs in several aspects from the standard definitions of envy-based fairness that are used when there are no sellable goods. To motivate these aspects, consider the simple setting of $n=2$ and $m=1$. Suppose that the single good $g$ has positive market price $p(g) > 0$. Then, if we wish the outcome to qualify as satisfying envy-based fairness notions, there seems to be only one acceptable outcome, which is to sell $g$ and give each agent $\frac{p(g)}{2}$. Consequently, we wish EF and any relaxation of EF (such as SEF1 and SEFX) to enforce this outcome, while a straight-forward application of EFX from the setting without selling would also accept allocating the good in full to one of the agents. Definition~\ref{def:EF} indeed does enforce this outcome (not just for EF, but also for SEFX and SEF1).

%The %strong/sellable versions of the 
%above definitions do not \textit{remove} goods (as is done in the corresponding EF1, EFX definitions), but \textit{sell} them. However, since non-sellable items are assumed to have price~0 for the purpose of the definition, if there are no sellable items, the %strong/sellable 
%definitions coincide with the EF1 or EFX definitions for the setting with no sellable items.

%We find this to be the natural definition for the setting with selling, out of the following reasoning: If $g$ was the last good remaining and was to be given either to $i$ of $j$, the commonly-used EF1 definition says that it is okay to give it to $j$. The SEF1 definition offers a third option (rather than allocating in full to one of the agents): that of selling the item. It is then not okay to give it to $j$ if selling it would give $i$ part of the money (one could think of it as a ``local MMS'' condition). 

%The above relaxations differ in what constitutes a legitimate complaint by $i$ against $j$. For SEF1, the complaint is that the minimum of $i$ and $j$ (according to $\bar{v}_i$) would increase if any one good $g \in C_j$ chosen by $i$ is sold. For SEFX, the complaint is that the minimum of $i$ and $j$ would increase if some good $g \in C_j$ chosen by $j$ is sold.

\subsection{Our Results}
\label{sec:technical_overview}

%\ufc{With the page limitations, makes sense to have a section of our results, and separate section of technical overview, and then discussion, and proofs in the appendix.}

All results in this section are stated for additive valuations in the setting of indivisible goods with optional selling.

We first present combined share \& envy notions results, where we combine tight share approximation guarantees with our strongest approximation notion of SEFX:

%\ufc{Note that EF is trivial, but may give very little value. So one theme is to instead give SEFX. but while guaranteeing high value. For the SEFX theorems, perhaps mention SEFX first, and MMS or TPS second.}

%With two agents, in the setting without selling, the MMS can be guaranteed by the Cut \& Choose procedure: One of the agents suggests a partition of the indivisible goods (``cut''), and the other agent chooses their preferred bundle in the partition (``choose''). However, once optional selling is introduced, this procedure fails. This is because the agent that cuts may require (as part of their partition) to sell a good that the other agent highly values, leaving the choosing agent no good choice. In Appendix~\ref{sec:two_agents}, we show that Cut \& Choose guarantees $\frac{1}{2}$-MMS, and provide examples that shows that the guarantee cannot be improved, no matter which agent cuts.
%Despite this hurdle, we design an allocation algorithm that gives the full MMS guarantee for two agents. Moreover, the resulting allocation is also SEFX.

\begin{restatable}{theorem}{TwoSEFX}
    \label{thm:2EFX}
    For two agents with additive valuations, there is an allocation that is simultaneously SEFX and MMS. 
    
     %\ufc{Consider removing this part from the theorem, and mention it later only in the proofs.} Moreover, if all goods have strictly positive value for all agents (in terms of $\bar{v}_i$), then \textit{every} MMS allocation is SEFX. 
\end{restatable}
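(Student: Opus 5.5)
The approach is to pick a single outcome that optimizes a carefully chosen objective over the constrained feasible set. Both the MMS property and the SEFX property will then come out of exchange arguments against that optimum.

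\textbf{Setting up the choice.} Any full outcome for two agents is a triple $(A_1,A_2,\sellable)$ together with a split $P_1+P_2=p(\sellable)$, and money is perfectly divisible. So for fixed $(A_1,A_2,\sellable)$ the set of achievable utility pairs $(u_1,u_2)$ is a segment of slope $-1$. I would first restrict to the set $\mathcal{F}_1$ of outcomes in which $u_1 \ge MMS_1$. This set is nonempty: take agent $1$'s maximin partition and give agent $1$ either bundle. Among outcomes in $\mathcal{F}_1$, I would choose one maximizing $u_2$. Ties would be broken first by minimizing the amount of money agent $2$ receives, and then by a potential such as $u_1$ or the number of sold goods.

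\textbf{Step 1: agent 2 gets $MMS_2$.} Take agent $2$'s maximin outcome $(\sellable',B_1,B_2,Q_1,Q_2)$. Agent $2$ values both of its bundles at least $MMS_2$ (evaluated with $v_2$ on kept goods plus money). By the additive-over-money structure, agent $1$ values one of these two bundles, say $(B_k,Q_k)$, at least $\tfrac12\bigl(v_1(\items\setminus\sellable')+p(\sellable')\bigr)$.

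The difficulty is that this quantity need not reach $MMS_1$: agent $1$'s MMS may require selling goods that agent $2$'s partition keeps, or the reverse. This is exactly why plain cut-and-choose achieves only $\tfrac12$. The plan is a "merge" of the two partitions good by good:
\begin{itemize}
\item a good sold in one partition and kept in the other is either sold or handed to the agent who values it more, whichever keeps agent $1$ at $MMS_1$;
\item money is used to rebalance, which is possible because both agents value money identically.
\end{itemize}
Concretely, I would argue by contradiction. Suppose the maximizer gives $u_2 < MMS_2$. Compare it with agent $2$'s MMS partition. At least one of agent $2$'s bundles then contains a good $g$, or an amount of money, that can be moved to agent $2$, or sold with proceeds given to agent $2$, while compensating agent $1$ from the bundle agent $1$ does not need. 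Since $\bar v$ dominates both the keep and sell options, the transfer raises $u_2$ while preserving $u_1\ge MMS_1$, contradicting maximality. I expect this step to be the main obstacle. The cleanest route I would try first is the following. Show that in the optimum each good $g$ is given to agent $1$ only if needed for agent $1$'s MMS. Then show the optimum has a threshold structure in the ratio $\bar v_2(g)/\bar v_1(g)$, which reduces the comparison to a one-dimensional, moving-knife-like argument.

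\textbf{Step 2: SEFX.} Given an MMS outcome selected by the tie-breaking rule, suppose agent $i$ violates SEFX toward $j$ with $P_j=0$. Then there is $g\in A_j$ with $\bar v_i(A_i)+P_i<\bar v_i(A_j\setminus\{g\})+p(g)$. Selling $g$ and redistributing money so that agent $j$ keeps at least its MMS level gives a new outcome, still in the constrained set. This is possible because $j$ strictly prefers its bundle and, being one of only two agents, has surplus relative to the equal split. I would check that the new outcome either strictly improves the primary objective or ties it while strictly improving the tie-breaker, which is a contradiction. If instead $P_j>0$ and $i$ envies $j$, money can be shifted from $j$ to $i$ directly, again improving the potential.

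The subtle point here is ensuring these SEFX fixes never drop $u_1$ below $MMS_1$ when $i=2$. I would handle it by noting that in that case agent $1$ strictly exceeds what it needs, because envy-freeness with respect to $\bar v$ for two agents implies proportionality with respect to $\bar v$.
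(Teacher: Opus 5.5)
\paragraph{Step 1 is not proved.} Your Step~1 carries the entire MMS half of the theorem. Maximizing $u_2$ subject to $u_1\ge MMS_1$ is a harmless reformulation, but ``the maximizer has $u_2\ge MMS_2$'' is \emph{equivalent} to the existence of an MMS allocation. So everything rests on the exchange sketch, and that sketch never exhibits a move that works. There is no reason a single good or slice of money can be shifted to agent~2 while agent~1 stays at $MMS_1$; an improvement may need to re-decide several keep/sell choices at once. The threshold structure in $\bar v_2(g)/\bar v_1(g)$ is also false here: covering agent~1's requirement at minimum cost to agent~2 is a knapsack problem over indivisible goods, and ratio-threshold optima exist only for its fractional relaxation.

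The idea you are missing is the paper's Cut \& Give. By Lemma~\ref{lem:number_of_split_goods}, each agent's MMS partition splits at most one good. If one agent splits none, or both split the same good, plain Cut \& Choose works. Otherwise, call Cutter the agent whose split good $g_C$ is cheaper, and let her MMS partition be $(A_1,P_1),(A_2,P_2)$ with the Giver's split good $g_G\in A_1$. If the Giver values some part at least $MMS_G$, she takes it. Otherwise she leaves $(A_2,P_2)$ to the Cutter and takes $A_1\cup\{g_C\}\setminus\{g_G\}$ with money $p(g_G)-P_2$. This is nonnegative precisely because $p(g_G)\ge p(g_C)=P_1+P_2$. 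The two bundles contain exactly the goods and money of the Giver's own MMS partition. Since $(A_2,P_2)$ is worth less than $MMS_G$ to her, her bundle is worth more, and she is envy-free.

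\paragraph{Step 2 fails outright.} This is not just a lack of detail. Your objective drives agent~1 down to her threshold, and the maximizer can be unique and violate SEFX, so no tie-breaking rescues it. Take all prices $0$ and goods $a,b,c,z$ with $v_1=(3,2,2,0)$ and $v_2=(0,1,1,1)$. Then $MMS_1=3$, and $u_2=3$ forces $\{b,c,z\}\subseteq A_2$, so the unique maximizer is $A_1=\{a\}$, $A_2=\{b,c,z\}$. Here $P_2=0$ and $\bar v_1(A_1)=3<4=\bar v_1(A_2\setminus\{z\})+p(z)$, so SEFX fails. Your repair (sell $z$) lowers $u_2$, so it contradicts nothing, even though $A_1=\{a,b\}$, $A_2=\{c,z\}$ is MMS and even EF. The claim that $j$ ``has surplus relative to the equal split'' is also unsupported: envy is measured in $\bar v_i$, and $j$ may need $g$ (say $\bar v_j(g)\gg p(g)$) to stay at $MMS_j$.

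One observation is true and worth keeping. If $\bar v_i(A_i)+P_i\ge MMS_i$, then $i$ cannot envy a bundle containing money, nor SEFX-envy through a good $g$ with $\bar v_i(g)>0$. Otherwise, moving a little money (from $P_j$, or from selling $g$), or $g$ itself, into her bundle would give a $2$-partition whose minimum exceeds $MMS_i$. Hence in a two-agent MMS allocation, SEFX can fail only through goods of $\bar v_i$-value $0$, and your selection rule must control where those go. The paper instead gets SEFX straight from the construction: the Giver ends up envy-free, and the Cutter holds a part of her own MMS partition.
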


%\begin{theorem}
%\label{thm:2agents}
%    Every allocation instance with $n = 2$ agents has an allocation that is simultaneuosly MMS and SEFX.
%\end{theorem}

In the setting without sellable items, a theorem analogous to Theorem~\ref{thm:2EFX} is proved via the Cut \& Choose allocation algorithm. In Appendix~\ref{sec:two_agents} we show that with sellable items, Cut \& Choose guarantees only $\frac{1}{2}$-MMS. To prove Theorem~\ref{thm:2EFX}, we introduce a {\em Cut} \& {\em Give} paradigm, which is a variation on Cut \& Choose. See more details in Section~\ref{sec:two_agents}.

%\ufc{The above replaces: Our allocation rule has an elegant reinterpretation of the Cut \& Choose algorithm, we call ``Cut \& Give'' (In Appendix~\ref{sec:two_agents}, we show that a straight-forward interpretation of Cut \& Choose guarantees only a tight $\frac{1}{2}$-MMS). 
%See more details in Section~\ref{sec:two_agents}.}
%The details of this construction are in Theorem~\ref{thm:2agents}. 

The other share notion we consider is the Truncated Proportional Share (TPS) \cite{bestBothWorlds}. In settings without sellable items, it was shown in~\cite{bestBothWorlds} that $\frac{n}{2n-1}$-TPS allocations always exist, and that $\frac{n}{2n-1}$ is best possible. 
Our main theorem concerning the TPS is the following.
%dominates MMS as a share notion for additive valuations, as it is always weakly higher.  For TPS we are able to obtain tight approximation results from the setting without selling , of $\frac{n}{2n-1}$, using an adjusted bag-filling algorithm. 

\begin{restatable}{theorem}{TPSAlloc}
\label{thm:TPSallocations}
    Every allocation instance has a $\frac{n}{2n-1}$-TPS allocation. Moreover, every allocation instance has an allocation that is both $\frac{n}{2n-1}$-TPS and SEFX.
\end{restatable}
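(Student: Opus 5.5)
Let $\rho = \frac{n}{2n-1}$ and $t_i = TPS(\items,v_i,p,n)$. I will use a bag-filling algorithm adapted to sellable goods, and then argue that its output can be made SEFX. The starting point is a normalization: scale each agent so that $t_i = 1$. By Definition~\ref{def:tps_selling},
\[
\sum_{g} \max\{p(g), \min\{v_i(g),1\}\} = n .
\]
Call a good $g$ \emph{large} for $i$ if $v_i(g) \ge \rho$ and $v_i(g) > p(g)$.

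\textbf{Step 1: large goods.} Handle goods that some agent values at least $\rho$ (and above its price) by giving each such good alone to one agent. This satisfies that agent, and that agent then leaves. The share of every remaining agent should not drop below what is needed. Removing one good and one agent keeps the truncated sum per remaining agent at least
\[
\frac{n - 1}{n-1}\cdot(\text{new } n)
\]
in the truncated sense, because each good contributes at most $1$ to the truncated sum. This is the standard reduction, and it carries over since $\max\{p,\min\{v,1\}\}\le 1$ whenever $p\le 1$. Goods with $p(g) > 1$ must be treated separately: sell them, since each contributes its full price. All money is then kept as a divisible pool.

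\textbf{Step 2: bag-filling with money as sand.} After Step 1, for every remaining agent each remaining good $g$ satisfies $\max\{p(g), \min\{v_i(g),1\}\} < \rho$ in the relevant sense. I will run bag-filling. Add goods one at a time, each either kept or sold (with proceeds added to the bag as divisible money). Stop as soon as some agent values the bag at $\rho$ under $\bar v_i$, and give the bag to such an agent. The agent may then sell, within its bag, the goods with $p\ge v_i$, so its utility equals $\bar v_i$ of the bag.

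For an agent $i$ who never claims a bag, each bag it did not claim had $\bar v_i$-value less than
\[
\rho + \max_g \bar v_i(g) < 2\rho ,
\]
and actually less than $\rho$ just before the last item was added. The usual counting argument then shows that the remaining truncated value stays at least $\rho$ for the last agent. This uses $n - (n-1)\cdot 2\rho\cdot(\ldots)$ arithmetic, giving exactly $\frac{n}{2n-1}$ as in \cite{bestBothWorlds}.

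The subtle point is that the feasibility of "selling" in the bag must be decided jointly. I would therefore let the \emph{receiving} agent decide which of its goods to sell, which makes the money fungible.

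\textbf{Step 3: SEFX.} To obtain SEFX simultaneously, I would start from the $\rho$-TPS partial allocation and apply an envy-cycle / reallocation procedure in the style of EFX-with-charity. I would work with the valuations $\bar v_i$, and replace "removing a good" by "selling a good and splitting proceeds," which is consistent with Definition~\ref{def:EF}. Concretely:
\begin{itemize}
\item Maintain the invariants that each agent has utility at least $\rho t_i$, and that the allocation is SEFX.
\item Insert the leftover goods one by one, either by selling them and distributing the money to agents (which only decreases envy toward money-holders if done equally), or by envy-cycle swaps, which only increase utilities.
\end{itemize}

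\textbf{Main obstacle.} I expect the main obstacle to be this combination. It is especially hard because the SEF definitions demand \emph{full} EF toward any agent holding money, so bundles containing money must not be envied. I would try to ensure that money is only ever distributed in equal shares and at the end, after the bundles of goods are SEFX, with a final leveling step that gives money to the least-off agents first.
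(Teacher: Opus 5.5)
Your plan for the first sentence of the theorem matches the paper's: preprocess big goods, then bag-fill, charging each served agent a loss of at most $2\rho t_k$ to every unserved agent $k$. However, each half of the proposal has a genuine gap.

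\textbf{Expensive goods (the TPS part).} You cannot normalize $t_i=1$ for every agent. Prices are common to all agents, so rescaling $v_i$ does not rescale $t_i$, and a price may exceed one agent's TPS while being negligible relative to another's. Your Step~1 fails exactly in this case. The claim that afterwards $\bar{v}_k(g)<\rho t_k$ holds for every remaining good and every active agent $k$ is unsupported.

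Take $n=3$ and $v_1\equiv 0$. Let goods $g_1,g_2$ have $p=1$ and $v_2=v_3=100$, and let $g_3$ have $p=0$ and $v_2=v_3=100$. Then $t_1=\frac{2}{3}$ and $t_2=t_3=100$.
\begin{itemize}
\item Suppose you sell every good priced above some agent's TPS. Then $g_1,g_2$ are sold. Each sale costs agents $2$ and $3$ a truncated value of $99$, and this loss is charged to no served agent. These two agents, each needing $60$, must then share $g_3$ and at most $2$ units of money.
\item Suppose instead you sell only goods priced above everyone's TPS. Then nothing is sold, and all three goods are ``large'' for agents $2$ and $3$. Handing $g_1$ to agent $2$ and $g_2$ to agent $3$ leaves agent~1 with $g_3$, which is worth $0$ to her.
\end{itemize}

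The missing ingredient is the paper's moving knife, run before any large good is handed out. While some good has $p(g)\ge\rho t_i$ for an active $i$, sell it, pay exactly $\rho t_{i^*}$ to the active agent $i^*$ of smallest TPS, and keep the remainder available. Since $\rho t_{i^*}\le p(g)$ and $t_{i^*}\le t_k$, each such step costs every other active agent $k$ at most $t_k$ of truncated value, net of the remainder that stays available. Once this loop ends, $p(g)<\rho t_k$ for all remaining goods and active $k$. So giving away a good with $v_i(g)\ge\rho t_i$ also costs $k$ at most $t_k$, and only then does your bag-filling count go through.

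\textbf{SEFX.} Step~3 assumes what must be proved. The partial allocation from Steps~1--2 is generally not SEFX. An agent served early gets a bag barely above her threshold, and a later bag may be worth much more to her; if that bag contains money, even plain envy is forbidden. So there is no SEFX invariant to maintain.
\begin{itemize}
\item Rotating bundles along envy cycles cannot help when the envied agent envies no one.
\item Distributing leftover money in equal shares is actively harmful. It makes $P_j>0$ for every $j$, which upgrades the requirement to full EF between all pairs. Money may only be poured into bundles nobody envies, as in \cite{bei2021fair}.
\end{itemize}

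The idea you are missing is to control envy during the allocation rather than afterwards. In the paper, whenever a bundle is about to be handed out, an already-served agent who $\epsilon$-SEFX-envies it instead steals a minimal sub-bundle, which requires selling at most one good. She releases her old bundle, and its forced sale is undone. This only raises the stealer's utility and only reduces what unserved agents lose, so the $\frac{n}{2n-1}$-TPS accounting survives. Forcing $\epsilon$ progress in every steal guarantees termination. Lemma~\ref{lem:limit} then turns $\epsilon$-SEFX into exact SEFX. Finally, the partial allocation is completed by selling the remaining goods and distributing the proceeds with the procedure of \cite{bei2021fair}.
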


%We note that the ratio $\frac{n}{2n-1}$ is best possible for this benchmark, as is already known from the setting without selling \cite{bestBothWorlds}, and this transfers to our setting. 

%The proof of Theorem~\ref{thm:TPSallocations} is based on adapting proof techniques that were previously used in settings without sellable items. \ygc{Remove? Unlike the proof of Theorem~\ref{thm:nMMS}, here the adaptation does not require ideas that substantially differ from ideas appearing in other proofs in our paper.}

For our other main results, we set aside achieving combined share and envy notions guarantees, and focus purely on MMS approximation. As in the case without sellable items, with $n=3$, MMS cannot be guaranteed. 
%For a general number of agents $n\geq 3$, our negative result of Proposition~\ref{prop:upper_bound} shows that the full MMS can not be guaranteed. 

\begin{restatable}{proposition}{UpperBound}
\label{thm:upper_bound}
    For $n=3$ agents with additive valuations over $m=8$ indivisible goods:
    \begin{enumerate}
        \item With sellable goods, for every $\epsilon$, there is an allocation instance in which no allocation gives every agent at least $\frac{11}{12} + \epsilon$ of her MMS.
        %\item Without sellable goods, there is always an allocation that gives every agent her full MMS (As shown by \cite{negativeMMSExample}). \ufc{Consider removing this part from the theorem, and only mention it in the paragraph afterwards.}
    \end{enumerate}
\end{restatable}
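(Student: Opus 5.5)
The proof will be an explicit construction: three agents, $m=8$ goods, a market price vector $p$, and three additive valuations $v_1,v_2,v_3$. These will be chosen so that every agent's MMS (with selling) equals $12$, while no outcome $(\sellable,A,P)$ gives all three agents more than $11+O(\epsilon)$.

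\textbf{Design of the instance.} I would build it around one "overlap" good, or a small cluster of goods, that every agent values strongly above its price. Every agent's MMS-certifying partition must keep some of these goods rather than sell them, and the agents' partitions will be chosen to be mutually incompatible.

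Concretely, I would give the eight goods a common price $p$ that is small relative to the valuations, but not negligible. For each agent $i$ I would choose a partition $\Pi_i$ that sells a specific set $\sellable_i$ and splits the rest into three bundles of utility exactly $12$. Where needed, the money is divided fractionally to equalize the bundles; this use of money is the new feature compared to the no-selling setting. The sets $\sellable_i$ should differ across agents: a good agent~$i$ wants sold, some agent $j$ wants kept, because $v_j$ is far above $p$ on it.

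A natural template is to perturb the known three-agent counterexamples. The $\frac{39}{40}$-type MMS examples have too many goods. Instead, I would use the $8$-good structure as follows:
\begin{itemize}
\item two "big" goods $b_1,b_2$;
\item six "small" goods arranged as two rows of three;
\item valuations defined so that agent~1 pairs goods by rows, agent~2 by columns, and agent~3 sells $b_1,b_2$ and splits the money.
\end{itemize}
The $\epsilon$ in the statement suggests perturbing values by $\epsilon$, to break ties that would otherwise allow a fully satisfying allocation.

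\textbf{Verifying the MMS values.} This step is routine. The upper bound $MMS_i\le 12$ comes from Lemma~\ref{lem:mms_tps_ps}: I would arrange $PS_i=\frac13\sum_g \bar v_i(g)=12$. The lower bound $MMS_i\ge 12$ is exhibited by the partition $\Pi_i$.

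\textbf{Impossibility argument.} Suppose an outcome gives every agent at least $11+\delta$. I would do a case analysis on which goods are sold.

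First, the sold set: if some good with $v_i(g)\gg p(g)$ is sold, that agent loses its surplus. Summing $\bar v_i$ minus losses then shows that the total welfare available falls short of what is needed to give everyone $11+\delta$. Here I would use the fact that money counts the same for everyone, so the money can be redistributed freely.

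Second, if only cheap goods are sold, the kept goods must be split integrally. A counting argument over who receives the big goods then shows that some agent receives a bundle worth at most $11$ in its own valuation, even after receiving all the money.

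\textbf{Main obstacle.} The hard part is finding numbers for which both branches simultaneously cap the outcome at $11/12$. Fractional money makes many allocations feasible, so the instance must make it expensive to sell, yet impossible to keep every good. I would set this up as a small LP or MILP over candidate valuations: enumerate the $2^8$ sold sets and the integral assignments of kept goods, and maximize the worst-case gap. This would guide the choice of numbers, and the final proof would then be a finite case check over sold sets, grouped by which big goods are sold.
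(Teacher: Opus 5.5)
\paragraph{There is a genuine gap: the proposal never produces an instance or an impossibility argument.} The step that carries all the weight is showing that no outcome $(\sellable,A,P)$ gives all three agents more than $\frac{11}{12}$ of their MMS. You defer it to a computer search that you only hope will succeed. That search is a nested max--min problem over valuations, sold sets, integral assignments and fractional money splits, not a ``small LP or MILP''. The case analysis you sketch does not fill the gap. Summing $\bar{v}_i$ ``minus losses'' over agents mixes different agents' valuations and yields no per-agent bound. The counting over who receives $b_1,b_2$ is not tied to any numbers. Only the routine part ($MMS_i\le PS_i=12$ plus an exhibited partition) is in place. Note also that without selling, full MMS allocations always exist for $3$ agents and $8$ goods. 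So the instance must exploit selling in an essential way, and your template (a common moderately small price, row/column pairings, an $\epsilon$ used only to break ties) gives no indication of how it would do so.

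\paragraph{The missing idea is a reduction that imports a known hard instance instead of searching for a new one.}
\begin{enumerate}
\item Make agent~3 \emph{sell-seeking}, e.g.\ $v_3\equiv 0$. Then $\bar{v}_3=p$ and $MMS_3=\frac13\sum_{e}p(e)$, so agent~3 effectively demands her \emph{proportional share} of the money, which is much harder to satisfy than an MMS.
\item Make prices negligible for the others: $p=\delta\tilde{v}_3$, $v_1=\tilde v_1$, $v_2=\tilde v_2$, with $\delta\ll\epsilon$. Then $MMS_1,MMS_2$ with selling are essentially the ordinary MMS values of $\tilde v_1,\tilde v_2$, and any money is worth almost nothing to agents~1 and~2.
\item An outcome giving everyone $(\frac{11}{12}+\epsilon)$-MMS therefore gives agents~1 and~2 kept goods worth more than $\frac{11}{12}$ of their MMS. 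It also forces the sold goods to have $\tilde v_3$-value more than $\frac{11}{12}$ of the proportional share of $\tilde v_3$.
\item Hand the sold goods (together with $A_3$) to agent~3. This gives an ordinary allocation of the Feige--Norkin $8$-good instance in which the first two agents get more than $\frac{11}{12}$ of their MMS and the third gets more than $\frac{11}{12}$ of her proportional share. Their construction rules this out.
\end{enumerate}
This is also where the $\epsilon$ in the statement comes from: it absorbs the negligible money, rather than serving as a tie-breaking perturbation.
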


It is instructive to note that the approximation gap of $\frac{11}{12}$ in Proposition~\ref{thm:upper_bound} is wider than the currently known MMS approximation gap in the setting without selling, which is $\frac{39}{40}$. Moreover, as the result in \cite{negativeMMSExample} shows, for these parameters ($n=3, m = 8$) it is possible to guarantee the full MMS in the setting without selling, which proves a gap between the two settings. 

Given that full MMS cannot always be guaranteed, we consider approximations. The following result is %perhaps 
the most technically involved among the results in this paper.

\begin{restatable}{theorem}{nMMS}
\label{thm:nMMS}
    For every $n \ge 4$ and every allocation instance, there is a $\frac{2}{3}$-MMS allocation. For $n=3$, there is a $\frac{3}{4}$-MMS allocation.  
\end{restatable}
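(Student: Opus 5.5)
We argue by scaling, reducing to MMS instances without selling, and then running a bag-filling procedure that may also sell goods.

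\emph{Normalization.} Fix $\rho=\frac{2}{3}$ (respectively $\frac34$ for $n=3$) and scale each $v_i$ so that $MMS_i=1$. By Lemma~\ref{lem:mms_tps_ps}, $\frac1n\bar v_i(\items)\ge 1$.

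The key structural observation concerns agent $i$'s MMS-witnessing outcome $(\sellable^i,A^i,P^i)$. Its money part $P^i$ is divisible. So each bundle is a set of kept goods, worth at most $1$ beyond what is needed, plus cash topping it up to $1$. Goods with $p(g)\ge v_i(g)$ may be assumed sold, so agent $i$ values them at $\bar v_i(g)=p(g)$.

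\emph{Reductions.} First, if some unsold good $g$ has $v_i(g)\ge\rho$ for some agent $i$, give $g$ to such an agent and remove both. We must show the MMS of each remaining agent does not drop when one agent and one good leave. If $g$ is kept in agent $k$'s partition, delete the bundle containing it. If $g$ is sold in $k$'s partition, the lost money $p(g)$ must be absorbed. Here we use that any single bundle has at most $\frac1n$ of the money, and we redistribute the remaining money so the $n-1$ remaining bundles each still have value at least $1$. I expect this step needs care: removing a sold item reduces the money pool by $p(g)$, which can exceed one bundle's share. The fix is that $k$ can instead re-decide to keep $g$ in one bundle and delete that bundle. Since $v_k(g)\ge p(g)$ is not guaranteed, we argue that the bundle containing the largest share of money, plus $g$'s sale money, can be dissolved. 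Second, if $\bar v_i$ of a pair of small goods is large, apply the analogous pair reduction, as in the $\frac23$ algorithms without selling.

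\emph{Bag filling with money.} After the reductions, every good has $\bar v_i(g)<\rho$ for all $i$, so selling a good contributes at most $\frac{p(g)}{n}$ per agent. We run a bag-filling procedure in which each bag may contain kept goods and money. Sell every good whose price is at least every remaining agent's value. For the rest, fill a bag with goods in arbitrary order until some agent values it at least $\rho$, and give it to such an agent. Money from sold goods is poured into bags as a divisible good. Since items are small, each bag is worth less than $2\rho$ to every agent who has not yet taken a bag at the moment it is closed. A counting argument then shows the remaining $\bar v$-value stays above $\rho$ times the number of remaining agents. This is where $\rho=\frac23$ (and $\frac34$ for $n=3$, using the tighter structure with three bundles) enters. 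The subtlety is that the algorithm's keep/sell decision for a good may disagree with agent $i$'s MMS partition. We handle this by crediting agent $i$ with $\bar v_i$ only for goods she would keep, and bounding the loss from goods that are sold while she values them more by using the MMS partition's money structure.

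\emph{Main obstacle.} The hardest part is the reduction lemma when the removed good is sold in some agent's MMS partition. The case $n=3$, which needs a finer case analysis to reach $\frac34$, is also difficult.
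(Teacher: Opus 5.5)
Your proposal has a genuine gap at its foundation: the two reductions you rely on are false once goods can be sold, and your proposed fix does not repair them. Take an agent $a$ with $v_a\equiv 0$ whose $MMS_a$ comes entirely from selling one expensive good $g$ and splitting $p(g)$ over all $n$ bundles. Let another agent $b$ have $v_b(g)\ge\rho\cdot MMS_b$. Your first reduction hands $g$ to $b$, and agent $a$ is left with value $0$. Dissolving ``the bundle with the largest money share'' does not help, because every one of $a$'s $n$ bundles loses all its money. The pair reduction fails for the same reason. Note also that your reductions only treat $v_i(g)\ge\rho$, so $\bar{v}_i(g)<\rho\cdot MMS_i$ is never established for expensive goods.

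The paper avoids this with a moving-knife stage that must come \emph{first}. While some good $e$ has $p(e)\ge\rho\cdot MMS_i$ for an active $i$, it sells $e$, pays exactly $\rho\cdot MMS_j$ to the active agent $j$ of \emph{lowest} MMS, and keeps the rest of the money in the pool. Only afterwards are goods with $v_i(e)\ge\rho\cdot MMS_i$ given away. Even then, the paper does not claim that the other agents' MMS is preserved; it records a bounded loss (types $\ell_1,\ell_2$).

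Your normalization is also not legitimate. The price function $p$ is common to all agents, and $MMS_i$ is homogeneous only in $(v_i,p)$ jointly, so you cannot rescale each $v_i$ separately to make $MMS_i=1$. The paper's algorithm in fact exploits the differing MMS values: the lowest-MMS agent is used for the moving knife, and the highest-MMS agent is the divider.

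Second, even granting that all goods are small, your counting does not reach $\frac23$. Bags worth less than $2\rho\cdot MMS_i=\frac43 MMS_i$ to non-recipients leave only $(n-\frac43 k)MMS_i$ after $k$ bags. This falls below $\frac23(n-k)MMS_i$ once $k>n/2$, and gives nothing for the last agent when $n\ge 4$. Without selling, the bag-filling route to $\frac23$ needs bags worth at most $MMS_i$, i.e.\ goods below $\frac13 MMS_i$ after single and pair reductions. Those are exactly the steps that break here.

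The real difficulty is the one you defer with ``bounding the loss by the MMS partition's money structure.'' If you only sell goods priced above everyone's value, an agent whose MMS needs the split price of a good that someone else values more never gets that money. If you do split such a good $e$, an active agent $i$ with $v_i(e)>p(e)$ loses the unacceptable bag \emph{plus} $v_i(e)-p(e)$. So non-acceptability no longer bounds her loss by $\rho\cdot MMS_i$.

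The paper's key ideas address exactly this:
\begin{itemize}
\item A Lone Divider scheme in which the divider is the active agent of highest MMS.
\item Canonical partitions: pure bundles, singleton bundles (one good plus money from a single sold good), and at most one leftovers bundle.
\item Selling only goods with $p(e)\ge(1-\rho)MMS_i$, so the extra loss per sold good is at most $(2\rho-1)MMS_i$.
\item A Hall-type matching that excludes the divider and the leftovers bundle.
\item A global accounting over loss types $\ell_1$--$\ell_4$ (merging bundles, removing redundant $\ell_4$ events, pairing large goods), showing that a canonical partition exists whenever needed for $\rho=\frac23$.
\end{itemize}

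For $n=3$ you give no argument at all. The paper uses a lemma producing a $\frac34$ canonical partition. A second lemma shows that after one unacceptable pure or singleton bundle is removed, each remaining agent can still split what is left into two parts worth $\frac34 MMS_i$. It then finishes with the two-agent Cut \& Give theorem.
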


The proof of Theorem~\ref{thm:nMMS} is based on an {approach referred to as {``Lone Divider"~\cite{steinhaus1948problem,Kuhn1967,Dawson2001}} %\cite{fairEnough}, 
that guarantees $(\frac{2}{3} + \Omega(\frac{1}{n}))$-MMS in the setting {without selling~\cite{procaccia2014fair}}.} The algorithm works in rounds. In each round it allocates bundles that are acceptable (guarantee at least $\frac{2}{3}$-MMS) to those agents that receive them, but are unacceptable to each remaining agent $i$. To prove that eventually agent $i$ gets an acceptable bundle, one uses the fact that bundles allocated in previous rounds have only small value according to $v_i$, so sufficient value remains in the set of yet unallocated items, enabling the formation of bundles that are acceptable for $i$. However, with selling, it is not only that bundles were allocated in previous rounds, but also goods were sold. If goods of high value under $v_i$ were sold for a low price, then it may be that goods that remain towards later rounds do not have sufficient value for $i$ to get an acceptable bundle.

Overcoming this difficulty involves multiple new ideas: 

{
\begin{enumerate}
\item Handle items of large value in a preliminary phase. This is a standard component in other allocation algorithms, but not needed in the lone divider approach, in the setting without selling. 

\item  Choose the Lone Divider agent carefully (in our case, as the agent with highest MMS). Without sellable items, any agent can be the divider, but with sellable items, this is not true (as we show even for the $n=2$ case, see Example~\ref{ex:bad_cut_choose}).

\item Insist on a canonical partition, rather than any partition. In this partition, all bundles have convenient structure, except for at most one (referred to as a ``leftovers'' bundle).

\item In the matching phase of Lone Divider, we exclude the divider and the leftovers bundle, and add them to the matching when appropriate. 

\end{enumerate}

Points $3$ and $4$ are our major departure from the standard lone divider approach, and may turn out to be useful also in other settings.
}

{With the above ideas and an intricate proof, we recover nearly the same approximation ratio as without sellable items. Specifically, we get $\frac{2}{3}$-MMS approximation, without the additional $\Omega(\frac{1}{n})$ term. We provide a detailed overview of these ideas in Section~\ref{sec:general_n}, and the full proof is in Appendix~\ref{sec:mms_23_full_proof}.

For the important special case of $n=3$, we are able to fully recover the Lone Divider approximation ratio in the setting without selling, of $\frac{3}{4}$-MMS. To do so, we combine the principles developed for general $n$ (implementing them becomes somewhat simpler, as $n$ is so small) together with the $n=2$ full-MMS algorithm, and get an improved $\frac{3}{4}$-MMS approximation. The details appear in Appendix~\ref{sec:mms_23_full_proof}. }

%\ufc{For polytime computations, maybe degrade theorems to propositions, one for TPS, one for MMS. Then say what they imply about each previous theorem separately. In theorem~\ref{thm:2EFX}, do we get approximate MMS with exact EFX? In Theorem~\ref{thm:TPSallocations}, is the first part polytime? In the second part, do we get the TPS approximation with an $\epsilon$ relaxation for SEFX? Or with no relaxation? In Theorem~\ref{thm:23}, do we lose $\epsilon$?}

Beyond our existence results, we are also interested in polytime computation. Without sellable items, it is known that computing the MMS value of a single agent is weakly NP-hard for $n=2$ and strongly NP-hard for general $n$. However, a polynomial time approximation scheme (PTAS) for calculating MMS without selling is known, and we show one for our setting as well. We also show that when $n$ is constant, the problem has a pseudo-polynomial algorithm (for both results, see Appendix~\ref{sec:computation}). Our PTAS, in turn, means that our existence results for MMS approximation can be appropriately made into a computational result:

\begin{restatable}{proposition}{MMScomputational}
\label{prop:mms_comp_results}
An allocation as guaranteed by Theorem~\ref{thm:2EFX} can be computed in pseudo-polynomial time. 
%\ufc{remove: (that guarantees SEFX as well)}, the PTAS \ufc{FPTAS?} also guarantees $\epsilon$-SEFX. \ufc{why not SEFX?}
Each of our other MMS approximation existence results (Theorem~\ref{thm:nMMS}) that guarantees $\rho$-MMS approximation has a PTAS that guarantees $(\rho - \epsilon)$-MMS. 
\end{restatable}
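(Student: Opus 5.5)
The plan has two parts: give a pseudo-polynomial algorithm for the two-agent construction, and a PTAS for each of the $\rho$-MMS results of Theorem~\ref{thm:nMMS}. The common ingredient is computing (or approximating) MMS values with selling. In both parts the existence proofs will be treated as algorithms. I will check that every step other than computing MMS values or MMS-type partitions is polynomial, namely: handling items of large value in the preliminary phase, choosing the divider as the agent with highest MMS, building the canonical partition with a leftovers bundle, the matching phase, and the Cut \& Give steps. I will then replace the exact MMS oracles by the algorithms from Appendix~\ref{sec:computation}.

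\textbf{Part 1: Theorem~\ref{thm:2EFX}.} For $n=2$, I will use the pseudo-polynomial dynamic program over goods. Each good is in one of three states: sold, kept in bundle 1, or kept in bundle 2. The DP tracks the value of bundle 1, counting money from sold items as divisible. Since money can be split arbitrarily, it suffices to track the pair (total $v_i$-value of kept bundle 1, total $v_i$-value of kept bundle 2) together with total proceeds. An equivalent choice is to track the kept value and the proceeds, which are integers bounded by $\sum_g \bar v_i(g)$. The MMS is then the maximum over reachable states of $\min(a+\text{half-balancing of proceeds}, \dots)$, which is computed in closed form. This DP also returns the optimal partition itself, which the cutter in Cut \& Give needs. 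The remaining steps of the Cut \& Give construction are: the other agent evaluating bundles, the exchange steps that fix SEFX, and the final splitting of money. I will verify that each of these is a polynomial number of operations on additive valuations. Because MMS is computed exactly, the output is exactly MMS and SEFX.

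\textbf{Part 2: Theorem~\ref{thm:nMMS}.} I will first use the PTAS for MMS with selling to obtain, for each agent $i$, a value $\mu_i$ with $(1-\epsilon)MMS_i\le \mu_i\le MMS_i$, together with a partition witnessing $\mu_i$. Then I will rerun the existence proof with $\mu_i$ in place of $MMS_i$ everywhere. The key claim is that the proof uses $MMS_i$ only through inequalities of the form ``some partition gives every bundle at least $MMS_i$ to agent $i$'' and through derived upper bounds such as $\bar v_i(\items)\ge n\cdot MMS_i$ or bounds on large items relative to $MMS_i$. Every such use remains valid when $MMS_i$ is replaced by any $\mu_i\le MMS_i$ that has a witnessing partition. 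The result is an allocation that is $\rho$-$\mu$, hence $\rho(1-\epsilon)\ge(\rho-\epsilon)$-MMS.

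\textbf{Main obstacle.} The subtle step is the divider. The divider's role requires an actual partition in canonical form, and possibly properties of a maximum-MMS agent. I plan to handle it as follows. Choose the agent with the largest $\mu_i$, take its PTAS-produced partition, and apply the canonicalization procedure to it; this is polynomial because it only rearranges and sells items. Then check that the proof's comparisons between the divider and other agents use only $\mu_{\text{div}}\ge\mu_j$, with no exact optimality. If some argument genuinely needs exact maximality, I will absorb the discrepancy into $\epsilon$ by rescaling thresholds.
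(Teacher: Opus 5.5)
Your proposal is correct and takes the same route as the paper. For $n=2$, an exact pseudo-polynomial computation of each agent's MMS partition, normalized to at most one sold good via Lemma~\ref{lem:number_of_split_goods}, is fed into Cut \& Give; exactness matters here because the cutter's SEFX relies on her partition being genuinely MMS. For Theorem~\ref{thm:nMMS}, the PTAS for MMS with selling supplies values $\mu_i\ge(1-\epsilon)MMS_i$ with witnessing partitions, and the $\frac{2}{3}$- and $\frac{3}{4}$-MMS algorithms are run with $\mu_i$ in place of $MMS_i$; for $n=3$ this includes the final two-agent Cut \& Give step, which should use the explicit partitions from Lemma~\ref{lem:remainder_three_four} rather than exact residual MMS values, and this suffices because its MMS guarantee uses only lower bounds on the bundles. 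The one difference is your three-state DP tracking kept values and pooled proceeds. It is pseudo-polynomial but slower than the paper's COMP-MMS-n=2 (Algorithm~\ref{alg:MMS2}), which guesses the single split good and runs subset-sum on $\bar{v}$, justified by Lemmas~\ref{lem:number_of_split_goods} and~\ref{lem:split_equal_mms}.
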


For TPS, we show it can be computed in polynomial time, and moreover, the approximation algorithm that incorporates this calculated TPS value runs efficiently. Our proof for combining TPS with SEFX builds on an algorithm that combines TPS approximation with $\epsilon$-SEFX, where the potential argument it provides yields a fully polynomial time approximation scheme (FPTAS).

\begin{restatable}{proposition}{TPScomputational}
\label{prop:tps_comp_results}

An allocation that guarantees $\frac{n}{2n-1}$-TPS to each agent can be computed in polynomial time. An FPTAS exists for an allocation that is $\frac{n}{2n-1}$-TPS  and $\epsilon$-SEFX.   
\end{restatable}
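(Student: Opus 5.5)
The plan has two parts, matching the two claims of Proposition~\ref{prop:tps_comp_results}. First I compute each $TPS_i$ exactly in polynomial time. Then I run a bag-filling procedure (the same one underlying the first part of Theorem~\ref{thm:TPSallocations}) that uses these values as thresholds. For the second claim I discretize the potential-based improvement procedure behind the second part of Theorem~\ref{thm:TPSallocations}.

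\textbf{Computing $TPS_i$.} Write $f_i(t)=\frac1n\sum_{j\in\items}\max\{p(j),\min\{v_i(j),t\}\}$. This function is nondecreasing, continuous and piecewise linear in $t$, with breakpoints only at the values $v_i(j)$ and at the points where $\min\{v_i(j),t\}$ crosses $p(j)$, i.e.\ at $t=p(j)$. Its slope on each segment is $k/n$ for some integer $k\le m$.
\begin{itemize}
\item I sort the at most $2m$ breakpoints.
\item On each segment, $f_i(t)=t$ is a linear equation, so its solution there (if any) is found in constant time.
\item Once $t$ exceeds every breakpoint, $f_i$ is constant, so a solution exists. $TPS_i$ is the largest root over all segments.
\end{itemize}
This takes $O(m\log m)$ time per agent.

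\textbf{Polytime $\frac{n}{2n-1}$-TPS allocation.} Let $\rho=\frac{n}{2n-1}$ and $\tau_i=\rho\, TPS_i$. A bundle handed to agent $i$ is a set of goods, and the agent sells those goods it prefers to sell, so it is worth $\bar v_i(\cdot)$ to $i$. The algorithm has two phases.
\begin{itemize}
\item \emph{Large goods.} While some remaining good $g$ and remaining agent $i$ satisfy $\bar v_i(g)\ge\tau_i$, give $g$ alone to $i$ and remove both.
\item \emph{Bag filling.} Add goods one at a time to a bag. As soon as some remaining agent $i$ has $\bar v_i(\text{bag})\ge\tau_i$, give the bag to such an agent and start a new bag. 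Goods left over at the end go to the last bag.
\end{itemize}
There are at most $n$ rounds and each scans at most $m$ goods, so the running time is polynomial.

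Correctness is the existence argument of Theorem~\ref{thm:TPSallocations}, which I take as given. The only point I recheck is the invariant that the scale-free definition of TPS is preserved under removing one agent with one large good (this is where truncation at $t$ is used). Under that invariant, a bag that does not satisfy some agent $i$ has value below $2\tau_i$ to $i$. The total truncated value then suffices, since $(n-1)\cdot 2\rho+\rho\le n$ when $\rho=\frac{n}{2n-1}$.

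\textbf{FPTAS for TPS together with $\epsilon$-SEFX.} The existence proof of the second part of Theorem~\ref{thm:TPSallocations} starts from a TPS-good allocation and repeatedly fixes SEFX violations (selling a good, or moving a good or money between bundles) while a potential strictly increases and the TPS guarantees are kept. The algorithm performs a fix only when the violation exceeds $\epsilon$ in normalized units; the natural scale is $\epsilon\cdot\min_i TPS_i$ (or $\epsilon$ times each agent's own TPS). With this rule, each step raises the potential by at least a fixed amount polynomial in $\epsilon/\mathrm{poly}(n,m)$ relative to its maximum. The potential is bounded by $\sum_j\max_i \bar v_i(j)$, so the number of steps is $\mathrm{poly}(n,m,1/\epsilon)$. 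At termination no violation exceeds $\epsilon$, which is exactly $\epsilon$-SEFX.

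\textbf{Main obstacle.} The delicate step is this last bound. The potential used in the existence proof may be lexicographic, or may depend on agent-specific scales, so a single improvement need not obviously add $\Omega(\epsilon)$ to one scalar quantity. My first attempt is to round all values and prices down to multiples of $\delta=\epsilon\,TPS_{\min}/(nm)$ and run the exact procedure on the rounded instance. Every potential increment is then at least $\delta$, and the loss in TPS and in envy is $O(\epsilon)$. That loss is absorbed into the $\epsilon$-SEFX slack, and into the TPS guarantee by starting from slightly raised thresholds.
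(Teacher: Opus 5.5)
Your computation of $TPS_i$ by scanning the breakpoints $\{v_i(j),p(j)\}$ is fine and is essentially the paper's (Claim~\ref{clm:s_k}). The gap is in your polynomial-time allocation. It hands out only whole goods and never splits the proceeds of a sold good between agents, and no algorithm of that kind can work. Take $n=2$ and one good $g$ with $p(g)=2$ and $v_1(g)=v_2(g)=0$. Then $TPS_1=TPS_2=1$, yet whoever does not receive $g$ gets $0$.

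The step that fails is your invariant that removing one agent with one large good costs every other agent about $TPS_k$ at most, ``because of truncation at $t$''. In $\max\{p(j),\min\{v_k(j),t\}\}$ only $v$ is truncated, not $p$. So a good with $p(g)\gg TPS_k$ contributes its full price to $k$'s sum. Giving it whole to another agent, which your test $\bar v_i(g)\ge\tau_i$ allows, breaks the counting $(n-1)\cdot 2\rho+\rho\le n$.

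What is missing is the moving-knife loop that opens Algorithm~\ref{alg:APX-TPS}. While some remaining good (or leftover proceeds) has $p(g)\ge\frac{n}{2n-1}TPS_i$ for an active $i$, sell it, pay the active agent $i^*$ of \emph{minimum} TPS exactly $\frac{n}{2n-1}TPS_{i^*}$, and return the rest of the proceeds to the pool as a virtual good. Since $TPS_{i^*}\le TPS_k$, such a step costs any other active agent $k$ at most $\frac{2n}{2n-1}TPS_k$ of truncated value, which is the same budget as a bag. Only afterwards are goods with $v_i(g)\ge\frac{n}{2n-1}TPS_i$ given whole, and then bag filling runs with your counting. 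Your algorithm is therefore not the one behind Theorem~\ref{thm:TPSallocations}, so you cannot inherit its correctness.

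Your FPTAS argument also does not go through as written. The existence proof is not ``start from a TPS allocation and repair violations under a potential.'' It is a stealing process inside the bag-filling algorithm (Algorithm~\ref{alg:APX-TPS-EFX}). An allocated agent who envies a bundle about to be assigned takes a minimal sub-bundle of it, selling at most one good, and releases her current bundle, undoing its forced sale.

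More importantly, rounding values and prices to multiples of $\delta$ does not force progress. Money is divisible, and the amounts in shrunk bundles are set by the thresholds $\frac{n}{2n-1}TPS_i$ and by envy equalities, not by your grid. The exact dynamics can cycle with zero gain even on integral data: Example~\ref{ex:epsilon_SEFX_reason} (values $1,0$, price $2$) has an agent repeatedly stealing $\frac{2}{3}$ of proceeds while releasing a bundle worth $\frac{2}{3}$.

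Nor can a TPS loss from rounding be absorbed by ``slightly raised thresholds.'' The statement asks for exactly $\frac{n}{2n-1}$-TPS, and the counting $(2n-1)\rho\le n$ is tight at $\rho=\frac{n}{2n-1}$, so there is no slack.

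The paper instead forces progress inside Shrink (Algorithm~\ref{alg:Shrink}). If the stealing agent's gain is below $\epsilon$, extra leftover proceeds are added to her sub-bundle, or a partially sold good's proceeds are rounded up to the whole good. Every reallocation is then a Pareto improvement for the allocated agents, either by at least $\epsilon\min_i TPS_i$ in value or by a whole good. This built-in progress, not a rounding of the instance, is what the FPTAS claim rests on.
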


\section{Related Work}
\label{sec:related_work}

%\ufc{Better to not start with this, but to end with saying that additional related work is discussed in Section~\ref{} in the appendix. Our work is interlinked to many fair allocation works. We focus on related work that conceptually relates to an environment with market prices in this section, and overview other related work that relates to our bounds and techniques in the appendix. \ufc{last sentence, grammar can be improved.} }

%\ygc{Instead of the above, in the end say "additional related work on ... is in the appendix" also, if there's space, do all the related work in main text. }

%To summarize, 
Table~\ref{tab:results} provides an overview of our main results concerning approximate MMS allocations for goods, and compares them with known results without sellable items.

\begin{table}
    \renewcommand{\arraystretch}{2}
    % Added an extra column 'c|' at the end
    \begin{tabular}{cc|c|c|c|c|}
        & \multicolumn{1}{c}{} 
        & \multicolumn{1}{c}{MMS $n=2$} 
        & \multicolumn{1}{c}{MMS $n=3$} 
        & \multicolumn{1}{c}{MMS $n\geq 4$} 
        & \multicolumn{1}{c}{TPS ($n\geq 2$)} \\\cline{3-6}
        \multirow{2}{*}{} & No Selling 
        & $1$  
        & $[\frac{11}{12}, \frac{39}{40}]$  
        & $[\frac{7}{9}, 1 - \frac{1}{n^4})$
        & $\frac{n}{2n-1}$  \\\cline{3-6}
        & With Selling 
        & $1$  
        & $[\frac{3}{4}, \frac{11}{12}]$  
        & $[\frac{2}{3}, 1 - \frac{1}{n^4})$  
        & $\frac{n}{2n-1}$  \\\cline{3-6}
    \end{tabular}
\caption{Share approximation results for our new setting of goods with selling, with a comparison to the current state of the art for indivisible goods (without selling). The full MMS guarantee with $n=2$, and the TPS guarantee, are tight, and we can guarantee that the allocations are SEFX. 
    %SEFL (Theorem~\ref{thm:halfEFX}, Lemma~\ref{lem:extend_sefl}). 
    The No Selling results are due, respectively, %\ufc{the EC format changes the order of references} 
    to \cite{budish2011combinatorial}, \cite{feigeNorkin},  \cite{negativeMMSExample}, \cite{huang2025fptas79approximationmaximinshare}, \cite{negativeMMSExample}, and \cite{bestBothWorlds}. The upper bound for $n\geq 4$ with selling, is not a new result, and follows simply because our model generalizes the setting without selling. Other than that, we show the With Selling results, respectively, in Theorem~\ref{thm:2EFX}), Theorem~\ref{thm:nMMS}, Proposition~\ref{thm:upper_bound}, Theorem~\ref{thm:nMMS}, Theorem~\ref{thm:TPSallocations}. }     
 %\ufc{for $n \ge 4$, the negative results are not as strong as those in the table}
    \label{tab:results}
\end{table}

%\ufc{In all cases of comparing with a different work, first define the model of the other work, and only then start comparing. Do not start by saying that models are similar, as this might be misunderstood to mean identical.}

 %\ufc{a sentence cannot start with [28]} 
A model with sellable additive goods and $n=2$ agents is introduced in \cite{procaccia2014}. In this model, the market price has a specific form that depends on the subjective valuations. Subjective valuations are also normalized so that their total sum is $1$. They focus on studying the \textit{price of fairness}, meaning the loss of welfare due to restricting attention to envy-free allocations.
Algorithm aspects of the \textit{price of fairness} problem in a model of sellable goods are considered by \cite{bilo2024}. The authors show the price of fairness is NP-hard to compute with $n=2$, give a PTAS for the case of $n=2$ and identical valuations, and a PTAS and pseudo-polynomial algorithm when agent subjective valuations do not vary too much (i.e., there is a constant bound between the maximal and minimal positive valuations). 

A setting in which some goods are divisible and some are indivisible is considered in \cite{bei2021fair}. In this setting, divisibility is fixed exogenously and does not depend on agents' preferences.  %divisible and indivisible goods.
A main difference from our model is that in our model the endogenous selling choice during allocation decides which items will be treated as divisible and which as indivisible. %\ufc{THink carefully what t say and rephrase} \ygc{Because now we dont have 'sellable' and 'unsellable': So we need a new positioning w.r.t. them} 
Another is that in our model, once a good is sold, all agents have exactly the same value for the resulting divisible good, which is the sales proceed, and this value is independent (and possibly different) from their subjective values for the unsold good. In contrast, for a divisible good in \cite{bei2021fair}, each agent has a subjective value for it, and moreover, this value is the respective fraction of the subjective value of the agent for the whole good.
The work adapts the notions of EF1 and EFX to the mixed setting, naming the new notions EFM and EFXM. For these new notions, if agent $j$ holds part of a divisible good, then agent $i$ must not envy $j$, and otherwise, the notion behaves like EF1 (or EFX, respectively).  Our definitions for SEF1 and SEFX have a similar flavor (though are technically different, as our setting is different).
%\ufc{removed: However, in \cite{bei2021fair}, the divisible/indivisible dichotomy is given exogenously, and is not part of the design space. }
Within the same mixed divisible and indivisible goods 
model, it can be shown \cite{beiMMS} that the MMS approximation of the worst mixed divisible and indivisible instance is not worse than the worst MMS approximation of an indivisible goods instance. %\ufc{I did not understand the last sentence, given that the next sentence mentions a ratio of 1/2.} 
A complementary result is an MMS allocation algorithm, which gives an approximation $\alpha$ ranging from the MMS approximation for indivisible goods and up to $1$, where $\alpha$ is monotonically increasing in the minimum (among all agents) of the utility of the divisible part relative to the indivisible part. 
%with $\alpha \in [\frac{1}{2}, 1]$, where $\alpha$ is monotonically increasing in the size \ufc{what does size mean? Each agent sees different values} of the divisible goods.  %\ufc{What does "cake" mean? How come the approximation ratio is as bad as 1/2? Either explain or remove the last sentence.} 

%We draw on their definition for envy-notions in the mixed setting (EFM) when defining our notion of SEFX (sellable envy-free up to any good). \cite{bei2021fair} also shows how to extend an indivisible-goods EFX allocation to an EFM allocation, given additional divisible goods, and we use this for our envy results. 
  The notion of  \textit{subjective divisibility} is introduced in \cite{bei2025SubjectiveDivisibility}. In this setting, all items are divisible. However, the valuations of agents are such the each agent views some of the items as indivisible, meaning that she herself gets no value from fractions of the item (though other agents might value fractions of that item), and other items as divisible, meaning that a $\beta$-fraction of the item gives a $\beta$-fraction of the value of the whole item, for all $0 \le \beta \le 1$. %some agents may consider a certain good indivisible, while others consider it divisible. If an agent does not view a good as divisible, they get no value from a partial allocation of it. 
 In the subjective divisibility setting with $n=2$ and $n=3$ agents, a tight $\frac{2}{3}$-MMS can be guaranteed, and with general $n$, they show a procedure to guarantee $\frac{1}{2}$-MMS. Compared to the mixed setting described above, {subjective divisibility} captures better some difficulties that arise also in our model (some agents want to divide a good, some do not, mirroring a dilemma regarding selling a good that appears in our model), but in some respects differs more from our model, in that sales proceeds can be enjoyed by all agents, whereas fractions of a good can be enjoyed only by agents that view it as divisible.  
 %\ufc{remove: is similar to our model in that the allocator has an endogenous choice of whether to divide the goods. It differs from our model in that for us, the divisible choice (market-price) is objective, and also pertinent to all agents. Our envy notions have the novelty that goods that are not sold (divided) as part of the allocation, but are sellable, are required to guarantee envy-freeness if they are sold. Thus, SEF1 is stronger than EFM, and SEFX is stronger than EFXM.} 

%\ufc{the above is an important section that needs to be written more clearly, without assuming that the reader is familiar with the previous model, and without assuming that the reader can infer that there is a difference between our SEFX and the previous definition. Our model is of indivisible items. Items can only be allocated as a whole, or sold as a whole. The previous models are of divisible items (every item is divisible) but the valuations of agents over items are of one of two types. Important to see exactly how they defined EF1 and EFX. As a side remark, their set of allowable valuations is disconnected and not convex, whereas our is connected and convex.} \ygc{I felt the distinctions here are a bit confusing, but anyway I explained and compared. }

 A model where both subjective valuations and market prices exist for the goods, but where goods are not sold is considered in \cite{fairDivisionMarketValues}. Rather, they aim for an {allocation} that is fair both towards market values, and towards the subjective valuations. I.e., the allocation should be fair both in ``objective'' terms, and in the agents' subjective perception. 
The type of questions and results that arise from this dualistic approach are different than our combined (``monistic'') approach. For example, one result is that it is impossible to achieve $\frac{1}{n}$-MMS for the subjective values and EF1 for the objective values, or vice-versa, where the MMS is computed using only the subjective (or only the objective) values. In our approach, the MMS is computed using both subjective and objective values, and we are able to guarantee a good approximation of it. %The only exception in our work is Appendix~\ref{sec:equal_proceeds}, where we impose a secondary fairness restriction, and indeed find bad approximation. \ufc{remove last sentence?}

Another setting that intertwines market values and subjective preferences is considered in \cite{dallaglio}. However, the subjective valuation take a specific form of ``correction'' to the market values, to allow for easy elicitation. The solution concept and discussion are tailored for their setting. 

Fair allocation when it is possible to make monetary transfers between the agents is considered in \cite{bogomolnaia2025fair}. Introducing this option results in different auction-like mechanisms. Our setting is more limited in that agents do not have ``outside money'' that can streamline the fair allocation. For example, consider a divorce setting: \textit{All} the partners' assets are being redistributed, and so, while they can sell existing goods and share the proceeds, they do not have extra-marital money to facilitate compensation payments. Similarly, inheritors / stakeholders in a liquidated company may have outside assets that are negligible compared to the value of assets of the company itself.

{%In the indivisible goods setting, MMS approximation has improved significantly beyond the $\frac{2}{3}$ approximation. 
In the setting of indivisible goods, the notions of {\em maximin share} (MMS)   and {\em envy-freeness up to one good} (EF1) are attributed to~\cite{budish2011combinatorial}. A line of works \cite{fairEnough,ghodsi2018,MMSImproved,MMSImprovement,MMSImproving,heidari2025improvedmaximinshareguarantee,huang2025fptas79approximationmaximinshare} develops allocations with increasingly better MMS approximations, going from $\frac{2}{3}$ to $\frac{7}{9}$.
  Envy-freeness up to \textit{any} good (EFX) was first introduced in \cite{caragiannis2019unreasonable}. EFX was shown to exist in several notable settings, such as with $n=3$ additive agents \cite{efxExistsThree}, lexiocographic preferences \cite{unifiedGoodsChores},
and agents with identical valuations \cite{plautRoughgarden}. }

\section{Overview of Proofs}

\subsection{MMS + SEFX for \texorpdfstring{$n=2$}{2} Agents}
\label{sec:two_agents}

In this section we restate and prove Theorem~\ref{thm:2EFX}. Our proof is based on a Cut \& Give protocol, which is a variation on the well known Cut \& Choose protocol. We remark that for the giver in the protocol, we get the stronger EF property, and not just SEFX, and for the cutter, we get EF in interesting special cases. 

\TwoSEFX*

\begin{proof}
    We may assume that in her own MMS partition, each agent sells at most one good (see Lemma~\ref{lem:number_of_split_goods}). We refer to the agent whose sold good has the lower price (or that does not sell a good, and breaking ties arbitrarily) as the {\em Cutter}, and to the other agent as the {\em Giver}. We refer to the goods that they intend to sell in their MMS partitions as $g_C$ and $g_G$, respectively. If one of the agents does not sell a good, or if $g_C = g_G$, then the standard Cut \& Choose protocol proves the theorem. So, we assume that $g_C \not= g_G$, with $p(g_C) \le p(g_G)$.

    Let $(A_1, P_1), (A_2, P_2)$ be the MMS partition of Cutter, with $P_1 + P_2 = p(G_C)$, and assume without loss of generality that $g_G \in A_1$. Being her MMS partition, getting any of these two parts also ensures for Cutter the SEFX property. (In fact, if $g_C$ is non-empty, meaning that Cutter really intends to sell a good, she gets EF, due to Lemma~\ref{lem:split_equal_mms}.) 
    
    If at least one of the two parts is acceptable to Giver (gives her at least her MMS value), then Giver takes the part that she prefers, ensuring both MMS and EF for Giver. Cutter gets the other part.

    If no part is acceptable for Giver, then she gives $(A_2,P_2)$ to Cutter. However, Giver does not take $(A_1,P_1)$ for herself (and for this reason, we refer to the protocol as Cut \& Give, not Cut \& Choose). Instead, she sells $g_G$ and cancels the sale of $g_C$, and keeps for herself $(A'_1, P'_1)$, with $A'_1 = A_1 \cup \{g_C\} \setminus \{g_G\}$ and $P'_1 = p(g_G) - P_2$. Importantly, $P'_1 \ge 0$, because $p(g_G) \ge p(g_C)$. Giver gets at least her MMS value, because the total value of the two parts $(A'_1, P'_1)$ and $(A_2, P_2)$ is at least twice her MMS (as they contain exactly the same items and sum of payments as the join of the two parts of her MMS partition), and she did not take the part that has value smaller than her MMS. For the same reason, the EF property is satisfied as well.      
\end{proof}

\begin{figure}[!htb]
\centering
\includegraphics[width=0.85\linewidth]{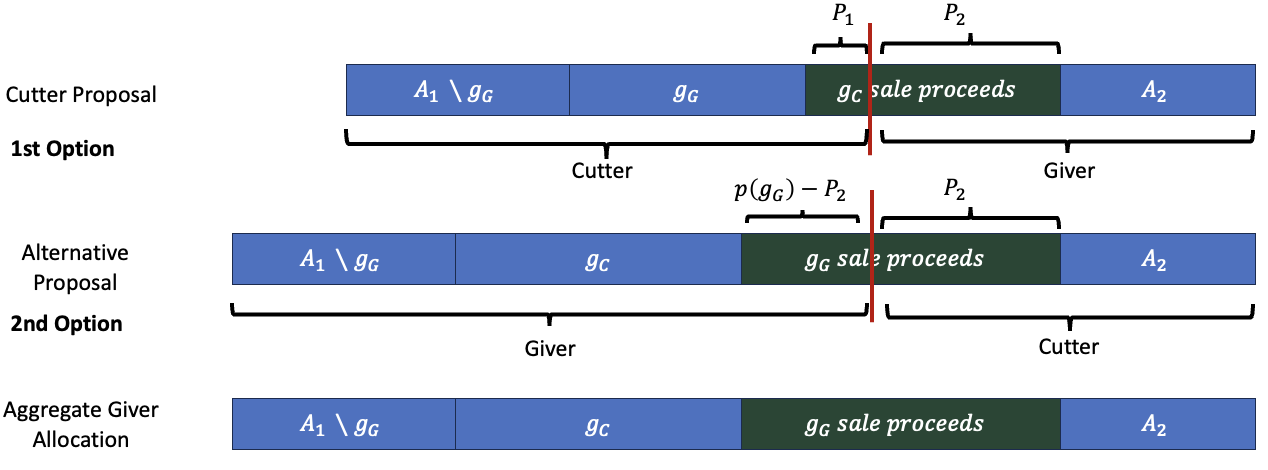}
\caption{A visual proof for the $n=2$ MMS existence result, which follows the outline of the proof of Theorem~\ref{thm:2EFX}.}
\label{fig:visual_proof_two_agents}
\end{figure}

\subsection{\texorpdfstring{$\frac{n}{2n-1}$}{n/(2n-1)}-TPS + SEFX for General \texorpdfstring{$n$}{n}}

The TPS (see Definition~\ref{def:tps_selling}) is efficiently computable, by an algorithm we describe in Appendix~\ref{app:tps}. 
We know that in the indivisible setting, it is impossible to generally achieve an approximation better than $\frac{n}{2n-1}$ to TPS. This immediately extends to our more general setting as well. We show how a combination of a ``moving knife'' and bag-filling procedures achieves the optimal $\frac{n}{2n-1}$-TPS approximation. The algorithm works as follows. First, if there is a good $g$ with a market price higher than $\frac{n}{2n-1}\cdot TPS_i$ for some agent $i$, we sell the good and let the agent $i$ with lowest MMS value take $\frac{n}{2n-1}\cdot TPS_i$ of the received money. We add the remaining sale proceeds from the good back as a virtual ``good'' (it is already sold, but for the purpose of the moving-knife loop, and later bag-filling, we treat it as one complete good). 
If we can satisfy another agent using the sale proceeds from at most one single good, we continue to do so (always prioritizing the agent requiring the least amount of sale proceeds), until all agents $i$ and goods $g$ satisfy $p(g) < \frac{n}{2n-1}TPS_i$. {Then, we allocate any item $g$ for which some agent $i$ has $v_i(g) \geq \frac{n}{2n-1}TPS_i$ to this agent.}
We then perform a bag-filling procedure into a bag $B$, where we add an arbitrary good at each step, %\ufc{do we need to first give single-item bundles?}
until some agent has a value of at least $\bar{v}_i(B) \geq \frac{n}{2n-1}TPS_i$, and then we give the bag to the agent, and repeat. For the full algorithm pseudo-code, see Algorithm~\ref{alg:APX-TPS} in the appendix. 

\begin{restatable}[]{lemma}{halfTPS}
\label{lem:halfTPS}
    In the setting with indivisible sellable goods and additive valuations, the algorithm described above for TPS approximation (with pseudo-code in Algorithm~\ref{alg:APX-TPS}) produces a partial %\ufc{somewhat strange to see "partial" here} 
    allocation that is  $\frac{n}{2n-1}$-TPS.
    %Algorithm~\ref{alg:APX-TPS}.  
\end{restatable}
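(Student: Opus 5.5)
I will prove that every agent who remains unsatisfied when the procedure ends would still be able to fill a bag. So the only way the algorithm stops is by serving all $n$ agents. Write $\rho=\frac{n}{2n-1}$ and $t_i = TPS_i$. Normalize so that, from $i$'s point of view, each good (or virtual money-good) $g$ is worth $\bar v_i(g)$ while unallocated. By the definition of TPS, $\sum_{g} \max\{p(g),\min\{v_i(g),t_i\}\} = n t_i$.

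Define the \emph{truncated} value $w_i(g)=\max\{p(g),\min\{v_i(g),t_i\}\}$, so that $w_i(\items)=nt_i$. The goal is to show that whenever some agent $i$ is still unserved, the unallocated goods have $w_i$-value at least $\rho t_i$. Since $\bar v_i\ge w_i$, the final bag-filling step can then serve $i$. I would track how much $w_i$-value each earlier allocation removes, in terms of how many agents it served.

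\textbf{Phase 1 (sale proceeds).} Suppose an agent $j$ is served by taking $\rho t_j$ out of the proceeds of a single good $g$. Consider an agent $i$ still unserved at that moment. The good was chosen by prioritizing the agent needing the least money, so $\rho t_j \le \rho t_i$ whenever $i$ could also have been served from a single good. Otherwise $p(g) < \rho t_i$. In either case the money removed is at most $\rho t_i$. The remainder of the proceeds is returned as a virtual good, so $i$ loses at most $\rho t_i$ in $w_i$-value.

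I need to check the subtle point that a good $g$ with $p(g)<v_i(g)$ is worth at most $\min\{v_i(g),t_i\}$ to $i$ under $w_i$. The loss to $i$ from selling $g$ is then $w_i(g)-(\text{returned money})$. Since $w_i(g)\ge p(g)$, this loss is at most $w_i(g) - p(g) + \rho t_j$. This may exceed $\rho t_i$ when $v_i(g)$ is large. I would bound it crudely by $w_i(g)\le t_i$, or better by $\max\{\rho t_i,\, t_i-(p(g)-\rho t_j)\}$, which is at most $t_i$.

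\textbf{Phase 2 (single large goods).} Giving $i$'s competitor $j$ a single good $g$ with $v_j(g)\ge\rho t_j$ removes $w_i(g)$. If $i$ did not claim $g$, then $v_i(g)<\rho t_i$ and $p(g)<\rho t_i$, so $w_i(g)<\rho t_i$.

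\textbf{Phase 3 (bag filling).} When a bag is given to $j \ne i$, agent $i$ did not claim it one step earlier. Hence $\bar v_i(B\setminus\{\text{last}\})<\rho t_i$. All single goods now have $\bar v_i<\rho t_i$, since large goods and high prices were cleared. So $w_i(B)\le\bar v_i(B)<2\rho t_i$. I would tighten this to the standard bound: each bag removes less than $2\rho t_i$, while the count of bags versus Phase-1/2 allocations gives total removal at most $(n-1)\cdot 2\rho t_i$ in the worst case. Then the remaining value is $\ge nt_i-2(n-1)\rho t_i = \rho t_i\bigl(\tfrac{2n-1}{1}\cdot\tfrac{n}{n}\bigr)-2(n-1)\rho t_i=\rho t_i$, using $nt_i=(2n-1)\rho t_i$. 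This is exactly the identity that makes $\rho=\frac{n}{2n-1}$ tight.

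The main obstacle is Phase 1. I must show that each such step also removes at most $2\rho t_i$ of $w_i$-value, which seems true. The money taken is at most $\rho t_i$, and the good's truncated value exceeding its price is at most $t_i - p(g)$ with $p(g)\ge \rho t_j$. I would work this out as follows. The loss equals $w_i(g) - (p(g)-\rho t_j) \le \max\{p(g),t_i\} - p(g) + \rho t_j$. If $p(g)\ge t_i$, this is $\le \rho t_j\le$ the money bound. Otherwise it is $\le t_i-p(g)+\rho t_j\le t_i \le 2\rho t_i$, since $2\rho \ge 1$. Thus every one of the at most $n-1$ allocations to others removes less than $2\rho t_i$, which completes the argument.
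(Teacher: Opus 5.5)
Your proof is correct and takes essentially the paper's route. You track the truncated value $\sum_g \max\{p(g),\min\{v_i(g),TPS_i\}\}=n\cdot TPS_i$, counting leftover proceeds as a virtual good, and show that each allocation to another agent removes at most $\frac{2n}{2n-1}TPS_i$ of it, so $n\cdot TPS_i-(n-1)\frac{2n}{2n-1}TPS_i=\frac{n}{2n-1}TPS_i$ remains for every unserved agent; your single Phase~1 bound (net loss at most $TPS_i$) neatly merges the paper's two cases $p(g)\ge\frac{n}{2n-1}TPS_i$ and $p(g)<\frac{n}{2n-1}TPS_i$.

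One small patch is needed in Phase~2: agent $i$ may also have $v_i(g)\ge\frac{n}{2n-1}TPS_i$ while $g$ is given to another agent, so ``$i$ did not claim $g$'' is not guaranteed. This is harmless, because after Phase~1 we have $p(g)<\frac{n}{2n-1}TPS_i$, hence the truncated loss is at most $TPS_i\le\frac{2n}{2n-1}TPS_i$, which is all your count requires.
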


%\ufc{Distinguish below between which written parts are full proofs, and which are only sketches for which a full proof appears in the appendix. The parts that are full proofs have to be written well.} \ygc{I added some more content around the bag-filling loop to Theorem 5.14's proof, but let's discuss how to make the proofs less sketchy}

In the appendix, we extend the algorithm to produce a (partial) allocation that is both a $\frac{n}{2n-1}$-TPS approximation and SEFX. We do so using the idea of ``stealing'' (with relevant precedents in \cite{akramiRathiMMS_EFX,feige2025residualmaximinshare}), where during the allocation process agents %that are already inactive
may steal a minimal subset of the bundle that is meant to be allocated {to some other agent}, if they are envious of it (in the SEFX sense), and release their current bundle. Implementing this process with sellable goods adds complexity {to the stealing process, and to its analysis. %because we may not always be able to allocate goods as a whole while satisfying SEFX. Solving this issue requires multiple caveats:
One modification that we make is not to implement SEFX directly, but instead implement $\epsilon$-SEFX, and to show that taking the limit as $\epsilon$ tends to~0 gives an SEFX allocation. Another principle that we use is that the stealing agent can sell only one item that was not sold in the bundle that she envied.
%\ufe{subdividing} at most one good when \ufe{stealing} the bundle, using , 
Another aspect that we take care of is money contained in the bundle released by the stealing agent, as this money came from a past selling of some good, and we might need to undo some of the past sales.}
%\ufc{Is it the same stealing as previous work, or a different version that involves also selling?} 

%\ufc{update statement of the next theorem}

\TPSAlloc*

%\begin{restatable}{theorem}{halfTPSSEFX}
%\label{thm:halfEFX}
%    In the setting with indivisible sellable goods and additive valuations, there is always a partial allocation that is $\epsilon$-SEFX and $\frac{n}{2n-1}$-TPS, given by Algorithm~\ref{alg:APX-TPS-EFX}. By Lemma~\ref{lem:limit} and Lemma~\ref{lem:extending_indivisible_EFX}, this implies a full SEFX and $\frac{n}{2n-1}$-TPS allocation. 
%\end{restatable}

\subsection{\texorpdfstring{$\frac{11}{12}$}{11/12}-MMS Gap for \texorpdfstring{$n=3$}{3 agents}}

Full MMS cannot be guaranteed with $n=3$ agents. Moreover, our result shows a quantitative gap between the MMS guarantees in the setting without sellable goods, and the one with sellable goods. 

\UpperBound*

\begin{proof}
 We say that an agent $i$ is {\em sell-seeking} if $v_i(e) \le p(e)$ for every good $e$. For a sell-seeking agent $i$, her MMS is equal to her proportional share. That is, $MMS_i = \frac{1}{n}\sum_{e\in \items} p(e)$. 

For $\epsilon > 0$, we say that an agent $i$ is {\em $\epsilon$-monetary} if $\sum_{e \in \items} p(e) \le \epsilon \cdot MMS_i$. If $\epsilon$ is sufficiently small, then $MMS_i$ is determined by $v_i$ alone, up to negligible terms. 

Suppose that there is one sell-seeking agent $i$, and all other agents are $\epsilon$-monetary. Then, up to $\epsilon$, a $\rho$-MMS allocation needs to give each of the $\epsilon$-monetary agents $\rho$-MMS through undivided goods, while the price of the sold goods needs to be at least $\rho \cdot MMS_i$. This is similar to the standard setting of $\rho$-MMS allocations for indivisible goods, with the following two changes:

\begin{enumerate}
    \item For the sell-seeking agent $i$, the valuation $v_i$ is replaced by the price function $p$.
    \item Agent $i$ needs to get at least $\rho$ times her proportional share, rather than $\rho$-times MMS.
\end{enumerate}

In \cite{feigeNorkin}, it was shown that for three agents, there is an allocation instance with \ufe{8} indivisible goods in which in every allocation, either at least one of the first two agents gets at most $\frac{11}{12}$ of her MMS, or the third agent gets at most $\frac{11}{12}$ of her proportional share. 
Concretely, the example in \cite{feigeNorkin} shows valuations $\tilde{v}_1, \tilde{v}_2, \tilde{v}_3$, so that either $\tilde{v}_1$ or $\tilde{v}_2$ get at most $\frac{11}{12}$ their MMS value, or $\tilde{v}_3$ get at most $\frac{11}{12}$ its proportional value. Moreover, in their instance, the MMS values for the two agents and proportional share for the third agent are all equal. 
Adjusting to our setting, let $p = \epsilon \cdot \tilde{v}_3, v_1 = \tilde{v}_1, v_2 = \tilde{v}_2, v_3 = 0$. Then, the MMS with selling of agent $3$ equals the proportional value of $\tilde{v}_3$, and the MMS with selling of agents $1,2$ equals (up to $\epsilon$) the MMS of $\tilde{v}_1, \tilde{v}_2$. We can assume any good sold goes fully to agent $3$, since it is negligible for the other agents. %\ufc{remove:  (since the proportional share of agent $3$ equals the MMS of agents $1$ and $2$,
%the sold good is worth at most $\epsilon \cdot \sum_{j\in \items} \tilde{v}_3(g_j) = \epsilon \cdot MMS(\items, v_i, p, 3)$ for $i\in \{1,2\}$.} 
Then, a partition that achieves a better than $\frac{11}{12}+\epsilon$ approximation for our constructed instance, implies a partition that violates the construction in \cite{feigeNorkin}.
\end{proof}

%\ufc{This should be discussed only in the results section, but not part of this proof.} In contrast, \cite{negativeMMSExample}
%show that in the setting without selling, with three agents and $8$ goods, a full MMS allocation always exists. Consequently, we have the following separation between allocation instances with and without sellable goods.

%\ufc{I would add the EFX for two agents already here.}

%\subsection{A \texorpdfstring{$\frac{11}{12}$}{11/12} MMS Gap for \texorpdfstring{$n\geq 3$}{3 agents}}
%\ufc{Need to say in the introduction that we use item and good interchangeably}

%All valuations in this section are assumed to be additive.

\subsection{A \texorpdfstring{$\frac{2}{3}$}{2/3}-MMS Algorithm for General \texorpdfstring{$n$}{n}}
\label{sec:general_n}

\nMMS*

For the general $n$ part of the theorem, we keep $\rho$ as a parameter (rather than fix it throughout to be $\frac{2}{3}$). We assume that $\frac{1}{2} < \rho < 1$.

We recall that the valuation function of agent $i$ is additive and denoted by $v_i$, the price of good $e$ is denoted by $p(e)$, and we use $\bar{v}_i$ to denote the additive valuation function satisfying $\bar{v}_i(e) = \max[v_i(e), p(e)]$ for every good $e \in \items$. 

As we show in Corollary~\ref{corr:value_readjust}, we may assume that each agent $i$ has an MMS partition in which each part has $\bar{v}_i$ value of exactly $MMS_i$. %For simplicity of the presentation, and without loss of generality, we may assume that the value of each part is exactly $MMS_i$.} \ygc{Maybe say a few words on why that's true? Adjutsing the valuations and so on.} \ufc{It is worth explaining this, but its better to explain it in a "preliminaries" section, as this is not an argument specific to this proof, but rather one that holds in general.} 
In this MMS partition, some goods are sold and the money received is split over more than one part. (A good that is sold but its money all belongs to the same part need not be sold at all, as we consider $\bar{v}_i$.) We refer to the set of goods that $i$ desires to sell (and split the money) as $\sellable_i$. 

%\subsection{Overview of proof}

Our allocation algorithm is inspired by a known algorithm \cite{procaccia2014fair,apxMMS2017} that was developed in the case in which goods are not sellable. However, to handle sellable goods, we introduce multiple new components to that algorithm.

Let us first provide a brief overview of the case without sellable goods. Fix a value of $\rho$ that is sufficiently small so that the algorithm does not get stuck, and conditioned on that, as high as possible.  A bundle $B$ is said to be {\em acceptable} to agent $i$ if $v_i(B) \ge \rho \cdot MMS_i$. 

An agent is said to be {\em active} if she did not yet receive any bundle. At a given step of the algorithm, $n'$ denotes the number of active agents. 

\begin{enumerate}
    \item Pick an arbitrary active agent $i$. Agent $i$ proposes a partition of the remaining goods into $n'$ bundles, each acceptable for $i$. We refer to such an $n'$-partition as acceptable.
    \item Employ a {\em matching procedure}. Hall's theorem for bipartite matching can be shown to imply that for some $1 \le k \le n'$, we can match $k$ active agents each with a bundle that is acceptable for her, with the additional  property that none of the matched bundles is acceptable to any of the remaining agents. Allocate the matched bundles to the respective active agents.
\end{enumerate}

A key aspect in the analysis is to show that for every agent, if $n - n'$ bundles not acceptable to her are removed, the remaining goods still have an $n'$-partition into acceptable bundles. In the setting without selling, it is known that for this property to hold, the best possible value of $\rho$ is $\frac{2}{3} + \Theta(\frac{1}{n})$.

Adapting the algorithm to the setting with sellable goods raises a major challenge. The fact that a bundle $B$ is not acceptable to agent $i$ no longer implies that by giving $B$ to another agent, the total value of the remaining goods decreases by at most $\rho \cdot MMS_i$. The problem is that $B$ might contain money received from selling a good $e$. If agent $i$ did not desire to sell $e$ (in particular, $e \not \in \sellable_i$), then in her original MMS partition agent $i$ got value of $v_i(e)$ from good $e$, but now that $e$ is sold, its value is $p(e)$. If $p(e) < v_i(e)$, the allocation of $B$ caused agent $i$ not only a loss of $\rho \cdot MMS_i$, but also an additional loss of $v_i(e) - p(e)$. Due to this additional loss, the analysis of the above algorithm breaks down.

We now explain some high level ideas that we use in order to cope with this additional loss. %\ufe{
To better appreciate our new contributions, it is convenient to note that the $\frac{2}{3}$ bound without sellable goods is a consequence of the following more general statement, that we shall refer to as the {\em approximate partition guarantee}. For every $\frac{1}{2} \le \rho \le \frac{2}{3}$, if $n - n'$ bundles each of value at most $2(1 - \rho) MMS_i$ (or more generally, of total value at most $2(1 - \rho)(n - n')MMS_i$) are removed, then the remaining goods have an $n'$-partition in which each part has value at least $\rho \cdot MMS_i$. 
%}

One principle that we use is to modify the algorithm in a way that allows us to upper bound this additional loss. This principle has three components:
\begin{itemize}
    \item The first component involves the structure of the $n'$-partition. Rather than allowing an agent $i$ to pick an arbitrary acceptable $n'$-partition, 
we require that for each part by itself, it suffices to add money from a single good that is sold in order to make the part acceptable (Lemma~\ref{lem:number_of_split_goods} shows that this property can be enforced). This limits the additional loss from each bundle to at most $v_i(e) - p(e)$ for some element $e$. 

\item The second component is to ensure that $v_i(e)$ is not too large. In the case with no sellable goods, we can always assume that $v_i(e) \le \rho \cdot MMS_i$, because we otherwise give $e$ to agent $i$, and the MMS of other agents does not decrease. With sellable goods, it is no longer true that by giving $e$ to $i$ the MMS of other agents does not decrease. For example, there may be an agent $i'$ who values all goods other than $e$ at $0$ and only derives value from selling $e$. Consequently, we employ a somewhat more complicated procedure so as to arrive to the situation in which $v_i(e) \le \rho \cdot MMS_i$ for every $i$ and $e$. 
%\ufe{
Let $i$ be the agent with smallest $MMS_i$ value for which there is a good $e$ satisfying $p(e) \ge \rho \cdot MMS_i$. If there are such $e$ and $i$, then sell $e$, and from the money received give $i$ only $\rho \cdot MMS_i$. (This is analogous to the moving-knife procedure 
used with divisible goods: giving an agent the smallest part of a good that is still acceptable to her.) The rest of the money that is received can be used by other agents. After all such cases are cleared, if there are such $e$ and $i$ so that $v_i(e) \ge \rho \cdot MMS_i$, let agent $i$ take good $e$ in full. 

\item The third component is to ensure that $p(e)$ is not too small. For this, we allow agent $i$ to sell a good $e$ only if $p(e) \geq (1 - \rho)MMS_i$. If $p(e) < (1 - \rho)MMS_i$, then good $e$ is required to be contained as a whole in one of the parts of the $n'$-partition (regardless of whether it was sold in $i$'s original MMS partition).
An aspect that helps absorb the loss incurred by failing to sell $e$ is that it suffices that each part of the $n'$-partition has value at least $\rho \cdot MMS_i$.
To ensure that $p(e) \ge (1 - \rho)MMS_j$ also for other active agents $j$, at each iteration we choose the partitioning agent $i$ to be the active agent with {\em highest} MMS value.
\end{itemize}

Using the above three components, we can upper bound the loss incurred to active agent $i$ due to the forced sale of goods when a bundle $B$ was given to agent $j$. The first component implies that there was only one sold good $e$. The second component implies that $v_i(e) \le \rho \cdot MMS_i$. The third component implies that $p(e) \ge (1-\rho) \cdot MMS_i$. Hence the loss is upper bounded by $v_i(e) - p(e) \le \rho \cdot MMS_i - (1 - \rho)MMS_i = (2\rho - 1)MMS_i$.
Another principle that we use in our analysis is the following.
We show that we can arrange the partition so that if a removed bundle $B$ causes a loss larger than $\rho \cdot MMS_i$, then $B$ has 
a particular simple structure. 
%holds for every  removed bundle $B$ that causes a loss larger than $\rho \cdot MMS_i$: 
Namely, in this case $B$ is composed of one good allocated in full and part of the money from selling an additional good. To see why this may be helpful, %\ufe{
observe that for $\rho = \frac{2}{3}$, the loss of value due to selling a single good  is at most $(\rho - (1-\rho))\cdot MMS_i = \frac{1}{3} \cdot MMS_i$, and so the combined loss incurred by allocating $B$ to a different agent is at most $MMS_i$, since in the matching provided by Hall's theorem, the active agents find $B$ unacceptable (i.e., its value is bounded by $\frac{2}{3} MMS_i$).
%} 
%\ygc{This sentence is a bit confusing}. 
Consider a related (but easier situation) in which $B$ is simply composed of two goods and has total value at most $MMS_i$. In this case, if both goods belong to the same bundle in the original MMS partition we can simply give up the whole bundle, and if they belong to different bundles, then the union of what remains from these bundles gives a new bundle of value at least $MMS_i$ (because valuations are additive). Hence in either case, the MMS value is not hurt by having some other agent receive $B$. In our setting, $B$ is composed of one good and some money rather than two goods, and the above local analysis (involving only two bundles) does not work. However, a more global type of analysis can be made to work.

To implement this structural property, we would like the $n'$-partition proposed by agent $i$ to be composed of only two types of parts. We refer to the first type as {\em pure} bundles. These bundles do not contain any money from goods that $i$ proposes to sell. For a pure bundle, if it is not acceptable to an agent $j$, then having it allocated to some other agent causes $j$ a loss of at most $\rho \cdot MMS_i$. We refer to the other type as {\em singleton} bundles. They contain a single good, and money from selling at most one other good, as explained in the second principle 
above. One thing that eases the analysis is allowing one special bundle we call the ``leftovers'' bundle to have no restriction on its type.  

% I revised the statement, and moved the example into a separate file, for future reference. 
We show that such partitions (that we refer to as {\em canonical}) do exist. To make use of canonical partitions, we need to ensure that in the matching procedure, we do not end with a partial matching in which some agent receives the leftovers bundle. We ensure this property by modifying the matching procedure. We drop from the bipartite graph $G$ both the leftovers bundle, and the agent $i$ who proposed the $n'$-partition.

We introduced above two main principles used in our proof. One is to make the additional loss due to sale of goods as small as feasibly possible (unfortunately, we cannot bring it down to~0). The other is that bundles that cause a loss larger than $\rho \cdot MMS_i$ must have a very simple structure. These principles provide a high level understanding of the proof, but in order to obtain a rigorous proof, there are many details to fill in. These are addressed in our full proof. For lack of space, it is presented in the appendix. See Appendix~\ref{sec:mms_23_full_proof}.

\subsection{A \texorpdfstring{$\frac{3}{4}$}{3/4}-MMS Algorithm for \texorpdfstring{$n=3$}{3} Agents}

We prove two lemmas that allow us to further optimize the $n=3$ case, and get a $\frac{3}{4}$ approximation. 

\begin{restatable}{lemma}{CanonicalThree}
\label{lem:canonical_three}
 Given an MMS partition for $n=3$, and where any good $g$ has $\bar{v}_i(g) < \frac{3}{4}\cdot MMS_i$, we can create a canonical form partition of acceptable bundles each of value at least $\frac{3}{4}\cdot MMS_i$. 
\end{restatable}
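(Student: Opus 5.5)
We start from agent $i$'s MMS partition for $n=3$. By Corollary~\ref{corr:value_readjust}, each of the three parts has $\bar{v}_i$-value exactly $MMS_i$; normalize so $MMS_i=1$. By Lemma~\ref{lem:number_of_split_goods}, each part needs money from at most one sold good; moreover, since there are only three parts, we may assume at most two goods have their proceeds split across parts. The target is a \emph{canonical} partition into three bundles, each of $\bar{v}_i$-value at least $\frac34$. In such a partition every bundle is either pure (no money from a good whose proceeds are split) or a singleton (one good plus money from at most one other sold good), except for at most one leftovers bundle. A further constraint is that no good with $p(e)<(1-\rho)=\frac14$ may be split.

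The construction proceeds by cases on the number of split goods, $0$, $1$ or $2$. Throughout we use the hypothesis $\bar{v}_i(g)<\frac34$, which keeps singleton bundles from being wastefully large.

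\emph{Zero split goods.} All three parts are pure, so the MMS partition is already canonical.

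\emph{One split good $e$.} Suppose first that $p(e)<\frac14$. Then $e$ must be kept whole. Put $e$ into the part that received the largest share of its money. That part gains value, and each other part loses less than $\frac14$ of money, so every part still has value above $\frac34$. The result is a partition with no sold goods, hence pure. Suppose instead that $p(e)\ge\frac14$. Then at most two parts receive money from $e$. Any part with no money is pure, and we designate one money-receiving part as the leftovers bundle. If two parts receive money, the second one must be turned into a singleton or made pure. We do this by reassigning: move the goods of that part other than its largest good $g$ into the leftovers bundle, and shift money so that this part becomes $\{g\}$ plus money from $e$ worth $\frac34-\bar{v}_i(g)$. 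This amount is positive because $\bar{v}_i(g)<\frac34$. The leftovers bundle absorbs everything else, and its value is at least the total, $3$, minus the two other bundles (at most $1$ and $\frac34$), which is at least $\frac54$. A subtlety is that this money must not exceed what the part originally had; if it would, the part instead keeps enough goods and takes the needed money.

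\emph{Two split goods $e_1,e_2$.} This is the main obstacle. Typically the money from each good is spread so that one part gets money from $e_1$, one from $e_2$, and one from both. I would first dispose of any split good of price below $\frac14$ by the merging trick above. This costs at most $\frac14$ per affected part, and we must check that no part loses twice; this follows because each part draws on at most one good (Lemma~\ref{lem:number_of_split_goods}), and the leftovers bundle can be the one touched by both. When both prices are at least $\frac14$, we make the part that uses both goods the leftovers bundle. Each of the other two parts, which receives money from a single good, is converted to a singleton exactly as in the one-good case. The freed goods and money flow into the leftovers bundle, whose value is again at least $3-2\cdot\frac34\ge\frac34$.

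The delicate points are these:
\begin{itemize}
\item verifying the money bookkeeping, so that the total money assigned to each sold good equals its price and the singleton bundles' money never exceeds what is available;
\item handling the case where the leftovers bundle is the only part touching a given sold good, in which case that good need not be sold at all, since we use $\bar{v}_i$.
\end{itemize}
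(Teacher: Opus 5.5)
Your proposal has a genuine gap in the step that converts a money-receiving part into a singleton, which is the core of both your one-good and two-good cases. You only use that part's own largest good $g$, which needs $\frac34-\bar{v}_i(g)\le p(e)$. When all of the part's goods are small this fails. Your fallback (``the part keeps enough goods and takes the needed money'') then produces a bundle with several goods plus sale money. That bundle is neither pure nor a singleton, so together with the leftovers bundle the partition is not canonical.

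For a concrete instance with $MMS_i=1$, take $p(e)=0.6$, $v_i(e)=0$, ten goods of value $0.14$ and two goods of value $0.5$. Let the given MMS partition be $\{5\times 0.14\}+0.3$, $\{5\times 0.14\}+0.3$, $\{0.5,0.5\}$. Whichever money-receiving part you call leftovers, the other has largest good $0.14$ and needs $0.61>p(e)$ of money. It also cannot be made pure from its own goods, since $0.7<\frac34$. Your two-split-goods case reuses the same conversion and inherits the same hole.

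The missing idea, which is how the paper argues, is to choose the singleton's good globally, and otherwise to give up selling. Fix a sold good $F$ with $p(F)\ge\frac14$.
\begin{itemize}
\item If some good $g\neq F$ has $\bar{v}_i(g)\ge\frac34-p(F)$, take $\{g\}$ plus exactly $\frac34-\bar{v}_i(g)$ of $F$'s money. Bag-fill one pure bundle to threshold $\frac34$; its value is at most $\frac32$ because every good is below $\frac34$. Put everything else into the leftovers bundle, which then has value at least $3-\frac34-\frac32=\frac34$.
\item If no such $g$ exists, every good other than $F$ is below $\frac34-p(F)\le\frac12$. Then greedy (largest-first) bag-filling of two pure bundles, with no sale at all, works. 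If the first bundle has two goods, the two bundles are at most $1$ and $\frac54$. If it has three or more, all later goods are below $\frac38$, so both bundles are at most $\frac98$. Either way the third bundle gets at least $\frac34$.
\end{itemize}

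There are two further problems. First, your claim that in the one-good case ``at most two parts receive money from $e$'' is unjustified. Lemma~\ref{lem:number_of_split_goods} bounds the number of sold goods, not the number of parts sharing one good's proceeds. Second, the lemma also does not say that each part draws on at most one good, and that reading contradicts your own ``typical'' structure in the two-good case.

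A single split good can in fact be forced to pay into all three parts. Take six goods with values in $[0.37,0.43]$ summing to $2.4$, and $p(e)=0.6$. Any three of these goods exceed $1$, so every MMS partition pairs them, and every part receives money.

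The clean way to organize the cases, as in the paper, is as follows. Any part receiving at most $\frac14$ of money can drop it and become pure, since its goods are then worth at least $\frac34$. So if at most one part receives more than $\frac14$, you are done with that part as leftovers. Otherwise the parts jointly receive more than $\frac12$ from at most two sold goods. Hence some sold good $F$ has $p(F)\ge\frac14$, which puts you in the dichotomy above.
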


\begin{restatable}{lemma}{RemainderThreeFour}
\label{lem:remainder_three_four}
Suppose that $B$ is either a singleton bundle or a pure bundle, and $\bar{v}_i(B) \le \frac{3}{4}MMS_i$ holds for agent an $i$. Then if $B$ is allocated to some other agent, 
    agent $i$ can partition the remaining goods into two disjoint bundles $B'_1, B'_2$, each of value $\bar{v}_i(B'_j) \geq \frac{3}{4}MMS_i$.
\end{restatable}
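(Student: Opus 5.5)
The plan is to work directly with an MMS partition of agent $i$ and account for the damage caused by removing $B$, in units of $MMS_i$. By Corollary~\ref{corr:value_readjust}, fix an MMS partition $(X_1,X_2,X_3)$ of agent $i$ in which each part has $\bar{v}_i$-value exactly $MMS_i$. Here a part may contain money from goods in $\sellable_i$. I will use the standing invariants of the $n=3$ algorithm, which hold when Lemma~\ref{lem:remainder_three_four} is invoked:
\begin{itemize}
\item after the preliminary phase, every good $e$ satisfies $\bar{v}_i(e) < \frac{3}{4}MMS_i$;
\item any good that is sold inside a singleton bundle has price $p(e)\ge (1-\rho)MMS_j = \frac14 MMS_j$ for the partitioning agent $j$, and since the divider has the highest MMS, $p(e)\ge \frac14 MMS_i$.
\end{itemize}
Removing goods from a part of the MMS partition lowers its value by at most their $\bar{v}_i$-value. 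If a removed good was sold in $i$'s partition, we simply delete its money from the parts that shared it, and the loss is at most $p(e)\le\bar{v}_i(e)$.

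\emph{Pure bundles.} $B$ consists only of goods, so removing it reduces the total $\bar{v}_i$-value of the three (modified) parts by at most $\bar{v}_i(B)\le\frac34 MMS_i$. The remaining three parts $Y_1,Y_2,Y_3$ therefore have total value at least $\frac94 MMS_i$, and each has value at most $MMS_i$. Let $Y_1$ be the most valuable part. Then $\bar{v}_i(Y_1)\ge\frac34 MMS_i$, since it is at least the average. Merging the other two parts gives value at least $\frac94 MMS_i - MMS_i=\frac54 MMS_i$. This yields the two required bundles $B'_1,B'_2$.

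\emph{Singleton bundles.} Now $B=\{g\}$ together with money $m$ obtained by selling some good $e\neq g$, with $\bar{v}_i(g)+m\le\frac34MMS_i$. The money $p(e)-m$ that is not in $B$ remains available to $i$.
\begin{itemize}
\item If $e\in\sellable_i$, then $e$ was already sold in $i$'s partition. The total loss is at most $\bar{v}_i(g)+m\le\frac34MMS_i$, and the pure-bundle argument applies verbatim.
\item Otherwise $e$ lies as a good in some part $X_b$, and $g$ lies in some part $X_a$ (or $g$ was sold by $i$, in which case it is treated like any part containing its money).
\begin{itemize}
\item If $a=b$, discard $X_a$ entirely. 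The other two parts each keep value $MMS_i\ge\frac34MMS_i$, and the leftover money $p(e)-m$ can be added to either of them.
\item If $a\ne b$, let $X_c$ be the third part. It is untouched and has value $MMS_i$. Merge what remains of $X_a$ and $X_b$ and add the leftover money. Its value is at least
\[
2MMS_i-\bigl(\bar{v}_i(g)+m\bigr)-\bigl(\bar{v}_i(e)-p(e)\bigr)\ \ge\ 2MMS_i-\tfrac34MMS_i-\bigl(\tfrac34-\tfrac14\bigr)MMS_i=\tfrac34MMS_i,
\]
using $\bar{v}_i(e)<\frac34MMS_i$ and $p(e)\ge\frac14MMS_i$.
\end{itemize}
\end{itemize}

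The step I expect to be the main obstacle is bookkeeping rather than arithmetic. First, I must check that the two invariants, $\bar{v}_i(e)<\frac34MMS_i$ and $p(e)\ge\frac14MMS_i$ for the good sold inside $B$, are really in force for agent $i$ (not only for the divider) at the time $B$ is removed. Second, I must handle the case where $g$ or $e$ had been sold in $i$'s MMS partition and its money was spread across several parts. In that case I would reassign the removed money part by part and verify that each affected part loses at most the corresponding amount, so the displayed inequality still holds with $\bar{v}_i$ in place of $v_i$.
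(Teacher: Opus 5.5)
\textbf{Gap: the type-$\ell_4$ case is not handled.} Most of your argument is correct. Your pure-bundle argument (take the most valuable remaining part, merge the other two) is cleaner than the paper's greedy bag-filling for loss type $\ell_2$. Your $a\neq b$ merge is the $\ell_3$ argument from Lemma~\ref{lem:canonical}. The invariants you worry about also hold: the divider has the highest MMS, and step (ii) is reached only when no agent values any good above $\frac34$ of her MMS.

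The unresolved case is $g\in\sellable_i$ with $e\notin\sellable_i$, which is the paper's type $\ell_4$. Your parenthetical does not settle it. By definition of $\sellable_i$, the money of $g$ is split over at least two parts, so there is no single $X_a$. If that money sits in the two parts $X_a,X_c$ other than $X_b\ni e$, then every part is damaged, and your ``untouched $X_c$ plus merge'' construction is unavailable. Concretely, suppose $g$ contributed $q_a,q_c>\frac14MMS_i$ to $X_a,X_c$. Then neither part alone is acceptable, and merging $X_b\setminus\{e\}$ with one of them leaves the other below $\frac34MMS_i$. Reassigning the lost money part by part cannot help, because that money is simply gone.

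\textbf{The missing idea is to argue globally, as the paper does.} The remaining value, including the leftover money $L=p(e)-m$, is at least
\[
3MMS_i-p(g)-m-\bigl(\bar{v}_i(e)-p(e)\bigr)\ \ge\ 3MMS_i-\tfrac34MMS_i-\tfrac12MMS_i=\tfrac74MMS_i .
\]
So it suffices to exhibit one bundle of value in $[\frac34MMS_i,\,MMS_i]$ and put everything else in the second bundle. To find it, note that
\[
q_a+q_c-L\le p(g)+m-p(e)\le\tfrac34MMS_i-\tfrac14MMS_i=\tfrac12MMS_i,
\]
hence $\min(q_a,q_c)\le\frac14MMS_i+\frac L2$. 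Take the less damaged of $X_a,X_c$ and top it up with only as much of $L$ as is needed to reach $\frac34MMS_i$. Its value then lies between $\frac34MMS_i$ and $MMS_i$, and the rest has value at least $\frac34MMS_i$. This is exactly how the paper treats $\ell_3$ and $\ell_4$ uniformly: total loss is at most $\frac54MMS_i$, and one part is repaired to value in $[\frac34,1]\cdot MMS_i$. The same argument also covers your $a=b$ and $a\neq b$ subcases.
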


%\ufc{discarded formulation: Given that a single singleton or pure bundle is allocated to an agent, 
 %   each agent $i$ of the agents not allocated a bundle can partition the remaining goods into two disjoint bundles $B'_1, B'_2$, each of value $\bar{v}_i(B'_j) \geq \frac{3}{4}MMS_i$.} 

The two lemmas can be applied through an algorithm that adapts the general $n$ algorithm (for $\frac{2}{3}$-MMS approximation). %\ufc{To shorten and clarify the section, do not repeat the description of the algorithm. Say that we basically run the same algorithm, but with different $\rho$. Then, state the two lemmas. Finally explain that it cannot happen that two agents receive a bundle in the first stage, and in the second stage, we use our previous result for two agents.} 
The general structure is as follows: (i) If there is a good that an agent values above $\frac{3}{4}$-MMS, allocate it (or part of its sale proceeds) through a moving-knife procedure. If indeed there is such a good, then we have a single agent that is allocated a bundle they accept, and two remaining agents that do not accept the bundle already allocated. Lemma~\ref{lem:remainder_three_four} then ensures that the remaining two agents value the remaining goods as at least $\frac{3}{4}$-MMS. 
(ii) If there is no such good, the highest MMS agent creates a canonical partition (made of pure or singleton bundles, and at most one leftovers bundle) with value of at least $\frac{3}{4}$-MMS for each bundle (as Lemma~\ref{lem:canonical_three} ensures can be done). (iii) {We create a bi-partite graph between the two non-partitioning agents and the two non-leftover bundles in the partition, where an edge between an agent and a bundle exists if the agent finds the bundle acceptable (i.e., values it at least as $\frac{3}{4}$-MMS). If there is a perfect matching, we assign according to the matching, and assign the leftover bundle to the partitioning agent (denote her agent $1$). If there is a bundle in the bi-partite graph that no agent wants, we assign it to the partitioning agent, and apply Lemma~\ref{lem:remainder_three_four}. The only remaining case is that one of the non-partitioning agents (denote her agent $2$) accepts both bundles, and the other (denote her agent $3$) finds both unacceptable. If any of the two bundles is pure, then agent $3$'s loss from the two bundles being allocated to agent $1$ and $2$ (which is possible, as they both find both bundles acceptable) is at most $2 \cdot \frac{3}{4}$-MMS by the bundles themselves (since they are unacceptable to agent $3$), and another at most $\frac{3}{4}$-MMS from $v_3(e) - p(e)$ due to the sale of at most a single good (as there is at most one non-pure bundle among the two). All in all, this implies that the leftover bundle is valued at least as $\frac{3}{4}$-MMS by agent $3$, and can be allocated to her. Then, the only remaining subcase is where both non-leftover bundles are singleton bundles. In this case, it is w.l.o.g. to assume that agent $1$ values them exactly at $\frac{3}{4}$-MMS (as any superfluous sale proceeds can be moved to the leftover bundles). Then, we can allocate one of the singleton bundles to agent $2$, and apply Lemma~\ref{lem:remainder_three_four} to agent $1$ and $3$.} 

In all events that we invoke Lemma~\ref{lem:remainder_three_four}, we thereafter use our $n=2$ algorithm (Theorem~\ref{thm:2EFX}) to output an allocation that gives each of these agents at least $\frac{3}{4}$-MMS.

%We run one iteration of the matching procedure, where each agent accepts a bundle if it values it at least $\frac{3}{4}$-MMS. Following the matching procedure, we
 %either arrive at a perfect matching, or a partial matching. In any case, the maximal matching is not empty, since the partitioning agent accepts all bundles. Moreover, since $n=3$, if it is a partial matching, {then the violating set (due to Hall's theorem) can not contain all bundles, as this implies there is an agent that finds all bundles unacceptable. Thus, the violating set matches exactly one bundle to one agent.} 
 %(iv) If at this point all agents are allocated a bundle, then we are done (all agents received a bundle they accept as having value at least $\frac{3}{4}\cdot MMS$). Otherwise, either through (i) or (iii), we have a single agent that is allocated a bundle they accept, and two remaining agents that do not accept the bundle already allocated. Lemma~\ref{lem:remainder_three_four} then ensures that the remaining two agents value the remaining goods as at least $\frac{3}{4}$-MMS. We then use 
 %We show that over the remaining goods, each agent $i$ of the two agents has an MMS for two players of $\frac{3}{4}MMS_i$ (i.e., each agent $i$ of the two agents can partition the remaining goods into two bundles, so that they each have value $\frac{3}{4}MMS_i$). We then use 
%em \ufc{what do the last 4 words mean?} \ygc{bundles? Remaining items?}

\section{Discussion}

In the appendices, we consider two natural extensions of our model.

\subsection{Equal Proceeds}

For our discussion so far, we have considered \textit{unrestricted} proceeds distribution, where the sale proceeds may be arbitrarily distributed among the agents. An interesting case is \textit{equal} proceeds distribution, where each agent must receive an equal monetary compensation: 

$$\forall i, P_i = \frac{1}{n}\sum_{g_j \in sellable} p(g_j).$$

%\ufc{Use $\sum_{g \in S} p(g)$}

%\ufc{Text below needs editing} 
Clearly, with this requirement, the MMS value of agents is weakly smaller than without this requirement. We show an example where only a $\frac{1}{n}$ approximation to this weaker MMS notion can be guaranteed. We also show an example where this weaker MMS notion guarantees only $\frac{1}{n}$ of the MMS notion with unrestricted proceeds. We complement the above negative results with an algorithm that guarantees $\frac{1}{n}$-TPS (as defined for the unrestricted proceeds) to each agent.  

\subsection{Chores with Outsourcing}

We consider allocation of chores (items of negative value that must be allocated). With chores, instead of the option of selling, agents have the option of outsourcing, where the agents collectively need to pay the market-price of outsourced chores (and then, split the payment in some way among the agents). Our results for MMS approximation for chores mirror in spirit our results for goods. Namely, we find that outsourcing adds difficulties to the design of allocation algorithms, and show approximation gaps that are wider than those known without outsourcing. We design allocation algorithms with constant approximation of the MMS, but not as good a constant as is known without outsourcing.

\begin{theorem}
\label{thm:chores}
    Every allocation instance with chores (and optional outsourcing) has a 2-MMS allocation. For every $\epsilon > 0$, there are allocation instances that do not have ${(\frac{19}{18} - \epsilon)}$-MMS allocations.
\end{theorem}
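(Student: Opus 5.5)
We would prove the two parts of Theorem~\ref{thm:chores} separately: an algorithm for the upper bound, and a reduction-style construction for the lower bound. Both mirror the goods case. Throughout, agent $i$ has costs $c_i(e)\ge 0$ and outsourcing prices $p(e)$, and we write $\bar c_i(e)=\min\{c_i(e),p(e)\}$. In the definition of $MMS_i$, agent $i$ chooses which chores to outsource, splits their total price arbitrarily among the $n$ bundles, and minimizes the maximum bundle cost.

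\textbf{Lower bounds on $MMS_i$.} Two bounds are immediate from $i$'s MMS partition.
\begin{itemize}
\item Its total cost is at most $n\cdot MMS_i$. More precisely, if $O_i$ is the set of chores $i$ outsources there and $K_i$ the set she keeps, then $\sum_{e\in O_i}p(e)+\sum_{e\in K_i}c_i(e)\le n\,MMS_i$.
\item For every chore $e$, we have $\min\{c_i(e),p(e)/n\}\le MMS_i$. This holds because $e$ is either kept whole in one bundle, or outsourced with its price split over $n$ bundles.
\end{itemize}

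\textbf{The outsourcing decision.} Call chore $e$ \emph{heavy for $i$} if $c_i(e)>MMS_i$. A heavy chore cannot be kept in $i$'s MMS partition, so $e\in O_i$. The algorithm outsources exactly the set $O$ of chores that are heavy for every agent, and splits each such price equally, so each agent pays $\frac1n p(e)$. Since $O\subseteq O_i$, agent $i$ pays at most $\frac1n\sum_{e\in O_i}p(e)$. Write $\pi_i$ for this amount.

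The residual budget $\beta_i=2MMS_i-\pi_i$ must then absorb the kept chores. These satisfy $c_i(K)\le c_i(K_i)+\dots$, which we bound by $n\,MMS_i-n\pi_i$ using the first lower bound. Every kept chore is light for at least one agent.

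\textbf{Allocating the kept chores.} We then run a bag-filling / greedy procedure, the chores analogue of the classical 2-approximation. Process agents one at a time. Each active agent in turn takes light-for-her chores while her chore cost stays at most $MMS_i$. Since any single light chore adds at most $MMS_i$, she ends with total cost at most $\pi_i+2MMS_i$, and we will need to sharpen this to $2MMS_i$.

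The main obstacle is exactly this accounting. We must ensure that the last agent can absorb everything left, and that payment plus chores stays within $2MMS_i$, even though agents disagree about which chores are heavy. It may help to let agents whose payments are large receive fewer chores.

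I would first try a potential argument. Each agent who stops having taken more than $\beta_i-MMS_i$ of cost (in her own eyes, but also in the eyes of any remaining agent $j$, by a choose-the-agent-with-smallest-MMS ordering as in Section~\ref{sec:general_n}) removes at least a $\frac1n$ fraction of $j$'s remaining budget $n\,MMS_j-n\pi_j$. If this fails, the fallback is to let agents also buy out chores individually: agent $i$ outsources a heavy-for-her chore and pays its full price. Such a price is bounded by what $i$ pays for it in her own MMS partition, up to a factor $n$.

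\textbf{Lower bound of $\frac{19}{18}$.} We would follow the template of the proof of Proposition~\ref{thm:upper_bound}. Take one \emph{outsource-seeking} agent whose costs equal $p$, so that her MMS equals the proportional share of $p$. The other agents are made \emph{$\epsilon$-monetary}: their prices are negligible relative to their costs. For such agents outsourcing is essentially free, so they can be set aside, and the question reduces to a pure chores instance in which one agent is measured against her proportional share and the others against their MMS.

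We would then search for (or adapt from the known chores counterexamples) a small instance with $n=3$ with the following property: every allocation gives either one of the MMS-agents more than $\frac{19}{18}$ of her MMS, or the proportional agent more than $\frac{19}{18}$ of her proportional share. Finding and verifying such an instance is a finite, computer-checkable step.
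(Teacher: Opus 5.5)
Your proof has genuine gaps in both parts: the $2$-MMS step that needs the key idea is left open, and the lower-bound reduction is set up in the wrong direction.

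\textbf{Upper bound.} By your own accounting the greedy phase yields only $\pi_i+2MMS_i$. The step that must bring this down to $2MMS_i$ is just a list of things to try. In particular, nothing shows that the last agent can absorb whatever is left, including chores heavy for her.

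The budget inequality you rely on is also false as written. A chore in $O_i\setminus O$ (outsourced in $i$'s MMS partition but light for someone else) can have arbitrarily large $c_i$, so only $\bar c_i(K)\le n(MMS_i-\pi_i)$ holds. Working with $\bar c_i$ means an agent may outsource a chore alone at full price, and $p(e)$ can be as large as $n\cdot MMS_i$. This is exactly why the classical ``last chore added costs at most $MMS_i$'' argument breaks, and it is the obstacle the paper's proof is built around.

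The paper does no upfront outsourcing. It bag-fills with respect to $\bar c_i$. Whenever the chore just added pushes the bag above $2MMS_i$ for every active agent, that chore is set aside. If everything left gets set aside, the bag is allocated anyway to an agent who sees it at cost at most $2MMS_i$.

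The key point is that such a \emph{short} bag (one that some active agent $i$ sees below $MMS_i$) occurs at most once from $i$'s perspective. At that moment every set-aside chore $t$ satisfies $\bar c_i(B\cup\{t\})>2MMS_i$, hence $\bar c_i(t)>MMS_i$. These are all the chores that remain, so every later bag costs $i$ at least $MMS_i$. Since $\bar c_i(\items)\le n\cdot MMS_i$, the last agent is left with at most $n\cdot MMS_i-(n-2)MMS_i=2MMS_i$.

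If instead the bag comes out empty, every remaining chore has $\bar c_i>2MMS_i$ for every active $i$. So it is outsourced in every active agent's MMS partition. Outsourcing all of them with an equal split among the $n-k$ active agents costs each at most $\frac{n-k+1}{n-k}MMS_i\le\frac32 MMS_i$.

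Nothing in your plan plays the role of this at-most-one-short-bag argument. Also, ordering agents by MMS value cannot make one agent's subjective cost bound another's; in Section~\ref{sec:general_n} it worked only because prices are common to all agents.

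\textbf{Lower bound.} Your reduction points the wrong way. In the goods construction the $\epsilon$-monetary agents are those for whom selling is worthless. The chores analogue is agents for whom outsourcing is prohibitively \emph{expensive}, i.e., prices huge relative to their costs (e.g.\ $c_3=p=L\cdot\tilde c_3$, $c_1=\tilde c_1$, $c_2=\tilde c_2$, with $L\to\infty$).

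The goods-style condition $p(\items)\le\epsilon\cdot MMS_i$ is in fact impossible for chores, since outsourcing everything shows $MMS_i\le\frac1n p(\items)$. If prices are merely at most the costs, as in your proposal, then $\bar c_i=p$. Then $MMS_i=\frac1n p(\items)$ for every agent, and outsourcing everything with an equal split is exactly MMS-fair, so no lower bound can arise.

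In the correct direction, agents $1$ and $2$ can afford only a negligible share of any price. So agent $3$ effectively pays $p$ on everything she performs or that is outsourced. The question then becomes a pure-chores instance with two agents judged against their MMS and one against her proportional share.

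Even then, the entire quantitative content is deferred to a search: you still need an explicit instance where every allocation exceeds $\frac{19}{18}$ times one of these benchmarks. So this part is not established either.
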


\subsection*{LLM Usage Disclosure}

When finalizing the manuscript, we were assisted by the AI feedback tool Refine.ink. We have also used the auto-formalizing system Aristotle \cite{aristotle} to proofread the paper and produce a Lean4 formalization of the manuscript, available at \href{https://github.com/yotam-gafni/FairAllocationWithOptionalSelling}{https://github.com/yotam-gafni/FairAllocationWithOptionalSelling}. 

\subsection*{Acknowledgments}

This research was supported in part by the Israel Science Foundation (grant No. 1122/22).

\bibliographystyle{splncs04}
\bibliography{ref.bib}

\appendix

%\ufc{Order the sections in the appendix in an order consistent with that of the main part. Move to after end{document} those parts of the appendix that are not needed for the submission. Maybe break the appendix into two main parts (each containing subsections). One with proofs for main results. The other with proofs for extensions. Then, we can check carefully the first part and less carefully the second part.} \ygc{I don't want to do the 'two sections' thing because it will mix REALLY different things in the same section. I kept to the "Extension:" convention for extensions...}

\section{Missing Proofs and Additional Results for Section~\ref{sec:two_agents}}
\label{app:missing_proofs_n=2}

We start by making the following useful distinction. 
Instead of talking about an allocation $A_i$ of kept goods and sale proceeds $P_i$, we convert them to: (i) An allocation $B_i$ of goods that belong in full to the agent, where the agent may decide whether to keep or sell them. For any such good $g$, the agent receives value $\bar{v}_i(g)$. (ii) Sale proceeds $P'_i$ out of goods that are \textit{sold} (a forced sale, without leaving a choice to the agents about it). For simplicity, we sometimes refer to the bundle of full items, together with sale proceeds from forced-sale goods, as a unified bundle. 

The following lemma shows that we can achieve the MMS value with an MMS output with at most $n-1$ \textit{sold} goods.

\begin{restatable}[]{lemma}{numSplit}
\label{lem:number_of_split_goods}
There always exists an MMS output with $n$ parts has at most $n-1$ \textit{forced-sale} goods.
\end{restatable}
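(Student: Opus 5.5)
Start from an arbitrary MMS output for agent $i$ and reformulate it in the converted form. Each part $j$ consists of a set $B_j$ of goods held in full, valued by $\bar v_i$, together with a share of the proceeds of the forced-sale goods $F$. We may assume $F$ contains only goods whose money is genuinely split among at least two parts. Indeed, if all the money from some $e\in F$ goes to a single part, we move $e$ into that part as a full good; since $\bar v_i(e)\ge p(e)$, this does not decrease any part's value.

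The key tool is the bipartite \emph{sharing graph} $H$. Its vertices are the $n$ parts and the goods of $F$, and a part is adjacent to $e\in F$ whenever that part receives a positive amount of $e$'s proceeds. Each $e\in F$ has degree at least $2$.

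\textbf{Step 1: reduce to a forest.} Suppose $H$ contains a cycle $part_1, e_1, part_2, e_2, \ldots, part_k, e_k, part_1$. Shift money around it: each $part_t$ gives $\delta$ of its share of $e_{t-1}$ and gains $\delta$ of its share of $e_t$. Every part's total money is unchanged, and every good's total proceeds are unchanged. So this remains a valid output with the same part values, hence still an MMS output. Choose $\delta$ maximal so that some edge's share drops to zero. This deletes an edge and strictly decreases the number of edges; it may leave a good of degree $1$, which we absorb as above, or of degree $0$, which cannot occur since total proceeds are preserved. Repeating, we obtain an MMS output whose sharing graph $H$ is a forest in which every good-vertex has degree at least $2$.

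\textbf{Step 2: count.} Let $H$ be a forest on $n+|F|$ vertices whose edges number at least $2|F|$, since each good has degree at least $2$. A forest has at most $n+|F|-1$ edges. Hence $2|F|\le n+|F|-1$, which gives $|F|\le n-1$, as required.

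\textbf{Main obstacle.} The delicate point is the cycle-cancelling step. We must orient the transfer so that shares stay nonnegative, and check that the result is still a feasible output, meaning the proceeds of each sold good are distributed exactly. We also need termination, which holds because the number of edges strictly decreases at each step. A minor subtlety is that absorbing a degree-$1$ good removes a vertex; this changes no counts adversely and can only weakly increase values.
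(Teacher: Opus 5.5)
Your proof is correct, but it takes a different route from the paper's.

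The paper argues by minimal counterexample. It fixes an MMS output with the fewest split goods $\ell$. If some part's total money $P_j$ is at least the price $p(g)$ of some split good $g$, the proceeds are re-attributed: part $j$ receives all of $p(g)$ and hands an equal amount of its shares in the other sold goods to the parts that lost their share of $g$. Part totals are unchanged, and this is feasible because the totals match. Now $g$ is no longer split, contradicting minimality. Hence every part holds strictly less money than the price of every split good. Averaging, $\sum_j P_j=\sum_g p(g)\ge \ell\cdot\min_g p(g)$, then rules out $\ell\ge n$; the paper leaves this last averaging step implicit.

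You instead cancel cycles in the bipartite sharing graph until its support is a forest, which is the standard vertex-of-the-transportation-polytope argument, and then count edges.

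What each buys:
\begin{itemize}
\item The paper's exchange argument is shorter. It also gives an extra inequality: in a split-minimal output, each part's money is below the price of any split good.
\item Your argument makes the counting fully explicit. It also gives a stronger structural conclusion: since the sharing graph is a forest, there are at most $n+|F|-1\le 2n-2$ positive shares in total. Moreover, whenever $F\neq\emptyset$, some leaf part receives money from only one sold good.
\end{itemize}
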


\begin{proof}
    Let $\ell$ be the minimal amount of split goods $g_{i_1}, \ldots, g_{i_{\ell}}$ in an MMS output for the agent. Assume $\ell \geq n$.
It must be that 

\begin{equation}
\label{eq:split_goods_implication_general_n}
\max \{P^1, \ldots, P^n\} < \min_{1\leq j\leq \ell} \{p(g_{i_j})\},
\end{equation}
    or we could include the good that violates this condition in full in the respective allocation. 
    I.e., if w.l.o.g. we assume that good $i_1$ violates the condition w.r.t. $P^{1}$, we have $P_1 \geq p(g_{i_1}) = \sum_{i=1}^n P^1_i$, and so we can consider an MMS output where the same goods are sold, and has $\tilde{P}_1 = P_1, \ldots, \tilde{P}_n = P_n$, but has:

    $$\tilde{P}^1_1 = \sum_{i=1}^n P^1_i.$$

    This implies that each other sold good sale proceeds $i_2, \ldots, i_{\ell}$ has $\Delta_{i_j} = P_1^j - \tilde{P}_1^j \geq 0$ in sale proceeds to distribute among $P_2, \ldots, P_n$, but by sum conservation this can be done, resulting in an MMS output with the same value but with only $\ell - 1$ fractional good sale proceeds, contradicting $\ell$ minimality. 

\end{proof}

%\ufc{Do we really need this as a separate lemma/ If it is used only for the corollary, then incorporate the argument into the proof of the corollary, and change the corollary to a lemma}

It is sometimes convenient to consider that the MMS output has the same value for the agent across all bundles. We show this can be assumed. 

\begin{definition}
    We say an MMS output $B_1, \ldots, B_n$ is \textit{equi-valued} if $\bar{v}_i(B_j) = \bar{v}_i(B_{j'})$ for any $j, j'$. 
\end{definition}

%By the following lemma, we are justified in assuming it:

\begin{restatable}[Value Readjustment]{lemma}{ValueReadjust}
\label{corr:value_readjust}
For the purpose of guaranteeing MMS approximation, it is w.l.o.g. to assume every agent has an equi-valued MMS partition. %\ufc{ambiguous, and not true under the natural interpretation that all have the same MMS value}
\end{restatable}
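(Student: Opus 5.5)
The plan is to lower agent $i$'s valuation so that her MMS value is unchanged and every part of some MMS partition has value exactly $MMS_i$. Any allocation that is $\rho$-MMS for the lowered valuations is then $\rho$-MMS for the original ones. Fix agent $i$ and an MMS output $(\sellable, A, P)$, in the unified-bundle form $B_1,\ldots,B_n$ with $\bar{v}_i(B_j) \ge MMS_i$ for all $j$. By Lemma~\ref{lem:number_of_split_goods} we may assume it has at most $n-1$ forced-sale goods. Some part has value exactly $MMS_i$, since the minimum is attained.

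The first step handles the money. For each part $B_j$ with excess $\bar{v}_i(B_j) - MMS_i > 0$, I first reduce its money share $P_j$, down to $0$ if needed. I would not discard this money. Instead I treat the reduction as a change to the instance that only makes $i$'s benchmark harder: money left unassigned is infeasible for a full allocation, so I reinterpret it as lowered prices of the sold goods. Equivalently, I lower $p(g)$ for forced-sale goods by exactly the removed amounts.

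The second step handles the goods. If excess remains after $P_j = 0$, I lower $v_i(g)$ for goods $g \in B_j$, one at a time, until $\bar{v}_i(B_j) = MMS_i$. This needs care because $\bar{v}_i(g) = \max\{v_i(g), p(g)\}$. If $p(g)$ is the binding term, lowering $v_i$ alone does nothing. I handle this by noting that for a kept good with $p(g) \ge v_i(g)$, we may regard the good as sold and its proceeds as placed in $B_j$. So we can lower $p(g)$ as well, since the good's whole price sits in one part and no other part is affected.

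The main obstacle is that prices are shared across agents. Lowering $p$ changes other agents' valuations and the feasibility of allocations. So "w.l.o.g." must mean this: construct a modified valuation $v'_i$ (and modified perceived prices) used only in agent $i$'s own accounting. The modification must satisfy three conditions: (a) $v'_i \le v_i$ pointwise, and $i$'s utility in any outcome under the real prices is at least her utility under the modified accounting; (b) $MMS'_i = MMS_i$; and (c) the witnessing partition is equi-valued.

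Lowering values can only lower the MMS, and the modified partition still certifies value $MMS_i$, which gives (b). Condition (a) gives monotonicity of utility, so a $\rho$-MMS guarantee transfers back. To keep prices common, I will instead encode reductions of money only through $v_i$. Where a forced-sale good's money share must shrink, I move that good's money from $B_j$ to the deficient-free parts (those at exactly $MMS_i$). I keep doing this until the remaining excess sits entirely in goods whose $v_i$ exceeds $p$. Checking that this redistribution always terminates with all parts equal is the delicate step.
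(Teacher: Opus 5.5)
Your overall frame matches the paper's. You lower $v_i$ pointwise so that the MMS value is unchanged and the witnessing partition becomes equi-valued, then transfer $\rho$-MMS guarantees back by monotonicity. The gap is that you never establish the core claim: that lowering $v_i$ \emph{alone}, with $p$ untouched, can always remove the excess.

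\begin{itemize}
\item Your first step lowers $p$ on forced-sale goods, and your treatment of goods with $p(g)\ge v_i(g)$ again lowers $p(g)$. As you note yourself, $p$ is shared. Lowering it changes the other agents' MMS values and the money available, so a $\rho$-MMS guarantee for the modified instance does not transfer back.
\item Agent-specific ``perceived prices'' leave the model. The later algorithms use a single $p$; for example, Stage~1 and the choice of the highest-MMS divider rely on it.
\item Your final fallback moves money from excess parts to the parts at exactly $MMS_i$. You leave its termination unverified, and it is in fact ill-posed. A part that receives money rises above $MMS_i$ and becomes an excess part holding money. If the shifts ever lifted all minimum parts at once, the minimum would exceed $MMS_i$, which is impossible.
\end{itemize}

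The missing idea is the optimality argument the paper uses. Take an MMS output attaining the maximum in the definition. Suppose a part $B_j$ with $\bar{v}_i(B_j)>MMS_i$ held forced-sale money. Moving a small amount $\delta$ of it, split evenly among all parts of value exactly $MMS_i$, would strictly raise the minimum, contradicting the definition of $MMS_i$. So excess parts hold no money, and the situation your redistribution tries to handle never arises.

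The same argument gives $p(B_j)\le MMS_i$: otherwise, sell every good of $B_j$ and shift the surplus to the minimum parts. This settles your worry about goods where $p$ is the binding term. Lowering $v_i(g)$ for $g\in B_j$, never below $p(g)$, can reduce $\bar{v}_i(B_j)$ all the way to $p(B_j)\le MMS_i$. Hence the whole excess is absorbed by lowering $v_i$ with $p$ unchanged.

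The paper states only the money part explicitly, but this bound is what makes ``readjust $v_i$'' go through. Then $\hat{v}_i\le v_i$ gives that the MMS under $\hat{v}_i$ is at most the original. The readjusted partition shows it is at least the original. The transfer of $\rho$-MMS back to $v_i$ follows because utility is monotone in $v_i$.
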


\begin{proof}
    Consider agent $i$ and some MMS partition $B_1, \ldots, B_n$. Assume some bundle has $\bar{v}_i(B_j) > MMS_i$. This bundle cannot contain sale proceeds. If it does, we could redistribute them equally between $B_j$ and any bundle that has value of exactly $MMS_i$, and so increase the MMS value. We conclude that we can readjust $v_i$ (without changing $p$) so that all bundles have the same value in the MMS partition. Let the new subjective valuations be $\hat{v}_1, \ldots, \hat{v}_n$. It holds that $\forall 1\leq i\leq n, g_j\in \items, \hat{v}_i(g_j) \leq v_i(g_j),$ and $MMS(\items, v_i,p,n) = MMS(\items, \hat{v}_i,p,n)$. %\ufc{last notation undefined. use standard $MMS(\items, v, n)$ notation. Also, do not make several different claims inside one long mathematical expression} 
    If we can guarantee $\rho MMS(\items, \hat{v}_i,p,n)$ to every agent $i$, then we can also guarantee $\rho MMS(\items, v_i,p,n)$ to every agent $i$: Take the allocation that guarantees $\rho MMS(\items, \hat{v}_i,p,n)$. Then, it also guarantees $\rho MMS(\items, v_i,p,n)$ (as they are equal for each agent). Moreover, the value of the allocation is at least as high under $v_i$ than under $\hat{v}_i$. 
\end{proof}

It is technically convenient to attribute sale proceeds to goods. I.e., while Eq.~\ref{eq:sale_proceeds_feasibility} maintains that the sum of distributed sale proceeds does not exceed the sum of sold goods, we could introduce a more granular notation $P_i^j$, where $1\leq i\leq n$ denotes the agent, and $g_j$ with $1\leq j \leq m$ denotes the sold good. We require 
\begin{equation}  
\label{eq:granular_sale_proceeds_distributed}
\begin{split}
& \forall j \in \sellable,\; \; \; \;  \sum_{i=1}^n P_i^j \leq \xi(g_j) \cdot p(g_j), \\
& \forall i\in \agents, j \in \sellable, \; \; \; \; P_i^j \geq 0.
\end{split}
\end{equation}

Clearly, Eq.~\ref{eq:granular_sale_proceeds_distributed} implies Eq.~\ref{eq:sale_proceeds_feasibility} by summing over all goods $j$, but we can also always find a granular representation, e.g., by greedily attributing sale proceeds to the goods sold in ascending order of the agents.

\begin{lemma}
\label{lem:split_equal_mms}
With $n=2$ agents, if agent $i$ has a forced-sale good split between their two MMS output bundles, then both bundles have the same utility for the agent.  
\end{lemma}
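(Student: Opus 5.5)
The plan is a direct exchange argument: if the two bundles had different utilities, we would shift money from the better bundle to the worse one and strictly raise the minimum, contradicting the maximality in Definition~\ref{def:MMS}.

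Fix agent $i$ and an MMS output $B_1, B_2$ for $i$ with two parts, each consisting of fully held goods (valued by $\bar{v}_i$) plus forced-sale proceeds. Let $g$ be a forced-sale good whose proceeds are split between the two bundles. In the granular notation of Eq.~\ref{eq:granular_sale_proceeds_distributed}, this means $P_1^g > 0$ and $P_2^g > 0$. Write $u_j$ for the utility of $B_j$ to agent $i$. By definition $MMS_i = \min(u_1,u_2)$. Suppose toward a contradiction that $u_1 \neq u_2$, and without loss of generality $u_1 > u_2$.

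Set $\delta = \min\{P_1^g, (u_1-u_2)/2\}$. This is strictly positive, since $P_1^g > 0$ and $u_1 > u_2$. Form a new output by moving $\delta$ of $g$'s proceeds from $B_1$ to $B_2$, replacing $P_1^g, P_2^g$ with $P_1^g-\delta$ and $P_2^g+\delta$. The check that this is still valid is routine: the same set of goods is sold, the total proceeds attributed to $g$ are unchanged, and all granular proceeds remain nonnegative, so the feasibility conditions Eq.~\ref{eq:kept_goods_not_sold}, Eq.~\ref{eq:sale_proceeds_feasibility}, and Eq.~\ref{eq:granular_sale_proceeds_distributed} all still hold.

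Utilities are additive in money by Eq.~\ref{eq:utility}, so the new utilities are $u_1-\delta$ and $u_2+\delta$. The choice $\delta \le (u_1-u_2)/2$ gives $u_1-\delta \ge u_2+\delta$. Hence the new minimum is $u_2+\delta > u_2 = MMS_i$. This contradicts the maximality in Definition~\ref{def:MMS}, so $u_1 = u_2$.

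The only point that needs care is the modeling convention for what counts as a split forced-sale good. We must use that such a good has strictly positive proceeds in both bundles, and that moving money does not change the value of fully held goods. If the convention allowed a forced-sale good whose proceeds lie entirely in one bundle, that good could instead be treated as fully held, as in the proof of Lemma~\ref{lem:number_of_split_goods}. The statement is only about genuinely split goods, so no other obstacle arises.
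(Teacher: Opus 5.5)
Your proof is correct and is essentially the paper's argument: the paper likewise shifts part of the split good's proceeds from the higher-utility bundle to the lower one (it moves $\min\{P^1/2, \delta/2\}$ where you move $\min\{P_1^g, (u_1-u_2)/2\}$), strictly raising the minimum and contradicting the maximality of the MMS. The difference in the exact amount transferred is immaterial.
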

\begin{proof}
Let the MMS partition be the unified bundles $C^1, C^2$, and assume towards contradiction $u_i(C^2) < u_i(C^1)$. Let 
$\delta = u_i(C^1) - u_i(C^2)$. 
There is a split good $g$ between the bundles, where $P^1 + P^2 = p(g)$, and $\min \{P^1, P^2\} > 0$, so that $P^1$ belongs to the unified bundle $C^1$, and $P^2$ to $C^2$. If instead we set the sale proceeds to be $\tilde{P}^1 = \max \{\frac{P^1}{2}, P^1 - \frac{\delta}{2} \}$ in $C^1$, and $p(g) - \tilde{P}^1$ in $C^2$, the MMS value would increase, in contradiction. 
\end{proof}

To highlight the contribution of our $n=2$ MMS existence result, we define a natural extension of the Cut \& Choose algorithm to our setting. We show an example where this extension of Cut \& Choose fails to produce an MMS allocation, regardless of the agent that cuts. 

\begin{definition}
    \textit{Cut \& Choose for $n=2$ agents.} Consider two agents $i$ and $j$. Let $C$ be an MMS partition of agent $i$. Allocate agent $j$ its preferred (unified) bundle among $C^1, C^2$, and allocate the other bundle to agent $i$. Then, we call this the Cut \& Choose allocation where agent $i$ is the ``cutting'' agent and agent $j$ is the ``choosing'' agent. 
\end{definition}

\begin{example}
\label{ex:bad_cut_choose}
    Consider $n=2$ agents and $m=5$ goods, with valuations and market price as given in Table~\ref{tab:cut_choose_valuations}. There is a left-right mirror symmetry between the agents, so we make the argument for the case Agent $1$ cuts and Agent $2$ chooses, but the other case is similar. The following is an MMS partition for agent $1$: The agent sells $g_5$, and allocates one bundle with $g_1, g_2$ and $\frac{3}{4}$ in sale proceeds, and another bundle with $g_3, g_4$ and $\frac{1}{4}$ in sale proceeds. When $x > \frac{3}{2}$, this MMS partition is unique. 
    In the perspective of Agent $2$, both bundles have a value of $\frac{x}{2} + \frac{5}{4}$. When $x \rightarrow \infty$, this yields a $\frac{1}{2}+\epsilon$ approximation to the MMS with arbitrarily small $\epsilon$. 
\end{example}

\begin{table}[ht]
    \centering
    \renewcommand{\arraystretch}{1.3} % Adds vertical padding for fractions
    \begin{tabular}{lcccccc}
        \toprule
        & $g_1$ & $g_2$ & $g_3$ & $g_4$ & $g_5$ & MMS \\
        \midrule
        Agent 1 & $x$ & $\frac{1}{2}$ & $\frac{x+1}{2}$ & $\frac{x+1}{2}$ & $0$ & $x + \frac{5}{4}$ \\
        Agent 2 & $0$ & $\frac{x+1}{2}$ & $\frac{x+1}{2}$ & $\frac{1}{2}$ & $x$ & $x + \frac{5}{4}$ \\
        \midrule
        Market price $p$ & $1$ & $0$ & $0$ & $0$ & $1$ & -- \\
        \bottomrule
    \end{tabular}
    \caption{Subjective valuations and market prices for Example~\ref{ex:bad_cut_choose}}
    \label{tab:cut_choose_valuations}
\end{table}

In fact, the $\frac{1}{2}$ approximation of our example is tight. The example shows a case where no matter which agent cuts, Cut \& Choose does not guarantee more than $\frac{1}{2}$ the MMS, whereas the following lemma holds in any case and regardless of the choice of the cutting agent. 
%As a warm-up, \ufc{Is this warm up used later? If not, then a different argument should be given as to why we present this, or remove it to the appendix} we show that Cut \& Choose does offer a non-trivial MMS approximation, regardless of the choice of cutter and chooser. \ufc{If we keep the next lemma, we should show that it is tight.}

\begin{lemma}
    With $n=2$, Cut \& Choose gives a $\frac{1}{2}$-MMS approximation. 
\end{lemma}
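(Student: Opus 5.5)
The plan is to check the two agents separately: the cutter gets her full MMS, and the chooser gets at least half of hers. For the cutter $i$ there is nothing to prove beyond the definitions. By Lemma~\ref{lem:number_of_split_goods} we take her MMS output $C^1, C^2$ to have at most one forced-sale good, and whichever bundle she receives has utility at least $MMS_i$, since she gets one of the bundles of her own MMS partition. So the whole argument concerns the chooser $j$. For her we show $u_j(C^1)+u_j(C^2) \ge MMS_j$; since she takes the better bundle, she then gets at least $\frac{1}{2}MMS_j$.

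We use the unified-bundle convention from the start of this appendix. Goods held in full are worth $\bar{v}_j$ to their holder, who may keep or sell them, and forced-sale money counts at face value.

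\emph{Case 1: $i$'s MMS output has no forced-sale good.} Then $u_j(C^1)+u_j(C^2) = \bar{v}_j(\items)$. Every valid outcome for $j$ gives total utility over its two parts at most $\bar{v}_j(\items)$, because each good contributes either $v_j(e)$ or at most $p(e)$. Hence $2\,MMS_j \le \bar{v}_j(\items)$, and the better bundle is worth at least $MMS_j$.

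\emph{Case 2: exactly one forced-sale good $g$.} The two bundles together contain $\items\setminus\{g\}$ in full, together with the money $p(g)$. So
\[
u_j(C^1)+u_j(C^2) = \bar{v}_j(\items\setminus\{g\}) + p(g).
\]
This may be smaller than $\bar{v}_j(\items)$ when $v_j(g) > p(g)$, so the averaging argument of Case 1 no longer works. Instead we bound $MMS_j$ directly using $j$'s own MMS partition $(D^1, D^2)$. At most one of $D^1, D^2$ keeps $g$ unsold, so let $D$ be a part that does not keep $g$. The value of $D$ to $j$ consists of three pieces. The goods of $D$ other than $g$ are worth at most their $\bar{v}_j$ values. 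Money from sold goods other than $g$ is at most the sum of their prices, which is at most their $\bar{v}_j$ values. Money from $g$ is at most $p(g)$. Hence
\[
MMS_j \le u_j(D) \le \bar{v}_j(\items\setminus\{g\}) + p(g) = u_j(C^1)+u_j(C^2),
\]
and again the chooser's preferred bundle gives her at least $\frac{1}{2}MMS_j$.

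The only delicate step is Case 2. There the naive bound via $\bar{v}_j(\items)$ fails, and one has to notice that $j$'s own MMS partition cannot credit $g$ at its subjective value in both parts. Lemma~\ref{lem:number_of_split_goods} is what makes this work, because it reduces the cutter's partition to a single forced-sale good. Tightness of the factor $\frac{1}{2}$ is already exhibited by Example~\ref{ex:bad_cut_choose}.
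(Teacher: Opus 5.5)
Your proof is correct and follows the paper's argument. The cutter keeps a bundle of her own MMS partition, the case with no forced sale goes through the proportional bound, and with a single forced-sale good $g$ you bound $MMS_j$ by a part of $j$'s own MMS partition that does not hold $g$ in full. The one difference is that you merge the paper's two subcases into the single inequality $MMS_j \le \bar{v}_j(\items\setminus\{g\}) + p(g)$, which suffices for the factor $\frac{1}{2}$. In the paper's version, if $j$ also sells $g$ she gets her full $MMS_j$, and if $j$ keeps $g$ in one part then $\bar{v}_j(\items\setminus\{g\}) \ge MMS_j$.
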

\begin{proof}
    W.l.o.g. assume agent $1$ is the cutter. If their MMS partition does not force the sale of any good, then the chooser gets at least their proportional value (and so, also their MMS) by choosing their preferred bundle, and agent $1$ gets their MMS value from the remaining bundle. 

    If agent $1$'s MMS partition does force the sale of a good, by Lemma~\ref{lem:number_of_split_goods} we can assume it is a single good $g$. If agent $2$ also sells $g$ in their MMS partition, then the same argument as before holds. If they do not, then $g$ is contained in full in one of the MMS bundles of agent $2$, say $C^1$. We have 

    $$\bar{v}_2(\items \setminus \{g\}) \geq \bar{v}_2(C^2) \geq MMS_2. $$

    By choosing their preferred bundle among the partition by agent $1$, agent $2$ gets at least $\frac{1}{2}\bar{v}_2(\items \setminus \{g\})$, since all goods but $g$ are allocated in full and agent $2$ has the discretion of whether to sell them or keep them. Agent $1$ gets the remaining bundle and their MMS value. 
\end{proof}

\section{Full Proof of \texorpdfstring{$\frac{2}{3}$}{2/3}-MMS Approximation}
\label{sec:mms_23_full_proof}

We now present the full proof of the general $n$ part of Theorem~\ref{thm:nMMS}.
It uses notation (such as $S_i$ and $\bar{v}_i$) and terminology (such as acceptable bundles) that is introduced in the main text overview.

\begin{proof}
In the course of the allocation algorithm, we shall allocate bundles to some agents. We trace the total value of goods (and money received from sold goods) that remains for those agents that are still active (did not yet receive a bundle).  For each allocated bundle $B$, we shall make sure that the loss for any active agent $i$ is of one of four types. %\ygc{Should clarify that while the loss types has some connection with the canonical bundle types, it's not the same thing, and also it's not important that the loss types cover all types of possible bundles and so on, it only serves as kind of a useful macro during the proof.} \ufc{The loss types to cover all possible bundles that might appear in the allocation algorithm. But it would be good to explain somewhere, perhaps in the overview, that there is some sort of amortization going on.}

%\ygc{Also, it's sort of an amortized loss: We associate the loss of the sale of a certain good with the bundle $B$, even though it's not really part of the value of $B$. So, for example, with $\ell_4$, it's not really only that $\bar{v}_i(B) \leq MMS_i$ (it is clearly less than $\rho MMS_i$, because it was not acceptable to the agent), but that even in the amortized sense it is less than $MMS_i$. The amortization takes hold because if we look at the entire loss for the last iteration of the process, and associate parts of it with bundles in this way, then summing it all up we get the entire loss.}

\begin{definition}[Loss Types]
\label{def:loss_types}
\begin{itemize}
    \item {\em Loss of type $\ell_1$.} The bundle $B$ contains only one good $e$ and no money, and $e \not\in \sellable_i$.

    \item {\em Loss of type $\ell_2$.} $\bar{v}_i(B) \le \rho \cdot MMS_i$, and $B$ does not contain money received from selling any good $e$  with ${v}_i(e) > p(e)$, unless $e \in \sellable_i$ (in which case $i$ also intended to sell $e$)

    \item {\em Loss of type $\ell_3$.} The value of $B$ is derived from two goods, $e$ and $f$, each with $\bar{v}_i$ value at most $\rho \cdot MMS_i$. Both $e$ and $f$ are not goods in $\sellable_i$ (i.e., agent $i$ would not have chosen to sell them), but the allocation algorithm did sell $f$, and so %\ygc{I dont think the reader is ready at this point to understand what it means that $f$ was sold anyway, maybe explain this point a bit.} \ufc{propose to change to: but the allocation algorithm did sell $f$.} 
    $B$ contains $e$ and part of the money received from selling $f$. %\ufc{Check if $f$ should be $e$. Is the moreover something obvious, or is it something that we will need to prove?} Moreover, 
    The combined %amortized 
    loss ($B$ being allocated to another agent, together with forcing the selling of $f$) is at most $\bar{v}_i(B) + v_i(f) - p(f)
    \le \rho \cdot MMS_i + v_i(f)  - (1-\rho)MMS_i = v_i(f) + (2\rho - 1)MMS_i$.
    %Moreover, $\bar{v}_i(B) \le (3\rho - 1)MMS_i$. 
    (For $\rho = \frac{2}{3}$, this implies that the amortized loss is at most $v_i(f) + (2\rho - 1)MMS_i \leq (\frac{2}{3} + \frac{1}{3})MMS_i = MMS_i$.)
    
    \item {\em Loss of type $\ell_4$.} The value of $B$ is derived from two goods, $e$ and $f$, each with $\bar{v}_i$ value at most $\rho \cdot MMS_i$. We have that $e \in \sellable_i$, $f \not\in \sellable_i$, but $f$ was sold and $e$ was not. $B$ contains $e$ and part of the money received from selling $f$. The %total 
  combined loss (losing $B$ in combination of selling $f$ at a loss) is at most $v_i(f) + (2\rho - 1)MMS_i$.  %Moreover, $\bar{v}_i(B) \le \bar{v}_i(f) + (2\rho - 1)MMS_i$. 
    %(For $\rho = \frac{2}{3}$, this implies that $\bar{v}_i(B) \le MMS_i$.)

\end{itemize}
\end{definition}

The allocation algorithm starts with a preliminary phase whose goal in to %achieve the first component of the first principle introduced in the overview. That is, we wish to 
make sure that $\bar{v}_i(e) \le \rho \cdot MMS_i$ for every good $e$ and agent $i$.

The preliminary phase has two stages.

{\bf Stage 1.}

As long as there is an active agent $i$ and good $e$ with $p(e) \ge \rho \cdot MMS_i$, do the following. Let $j$ denote the active agent with lowest MMS value, and observe that necessarily, there is such a good $e$ with $p(e) \ge \rho \cdot MMS_j$.  Sell $e$, pay $\rho \cdot MMS_j$ to $j$, and remove $j$. 

%\ygc{Maybe give the intuition that this is like a moving knife procedure, just because $p$ is global we know that $j$ with the lowest MMS is the one to stop it}

We track the loss for the remaining active agents. For any remaining active agent $i$ for which $e \not\in \sellable_i$ the loss is of type $\ell_1$, whereas for any remaining active agent $i$ for which $e \in \sellable_i$ the loss is of type $\ell_2$. 

When Stage~1 is done, we move to Stage~2.

{\bf Stage 2.}

If there is any agent $i$ and good $e$ with $v_i(e) \ge \rho \cdot MMS_i$, give $e$ to $i$, and remove $i$. 

Again, for any remaining active agent $j$ for which $e \not\in \sellable_j$ the loss is of type $\ell_1$, whereas for any remaining active agent $j$ for which $e \in \sellable_j$ the loss is of type $\ell_2$ (because we have already exhausted Stage~1, and so $p(e) \leq \rho \cdot MMS_j$). 

The outcome of the preliminary phase is summarized in the following proposition.

\begin{proposition}
    After the preliminary phase,  $\bar{v}_i(e) < \rho \cdot MMS_i$ for every active agent $i$ and remaining good $e$.
\end{proposition}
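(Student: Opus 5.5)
The plan is to read off the claim from the termination conditions of the two stages. Since $\bar{v}_i(e)=\max\{v_i(e),p(e)\}$, it suffices to show separately that $p(e)<\rho\cdot MMS_i$ and $v_i(e)<\rho\cdot MMS_i$ for every agent $i$ still active after the preliminary phase and every good $e$ still present.

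\textbf{Price bound.} Stage~1 is a loop that runs as long as some active agent $i$ and remaining good $e$ satisfy $p(e)\ge\rho\cdot MMS_i$. Each iteration sells one good and removes one agent, so the loop terminates, and on exit every active $i$ and remaining $e$ satisfy $p(e)<\rho\cdot MMS_i$. Stage~2 only removes goods and agents, so it can only shrink the set of pairs $(i,e)$. Here $MMS_i$ is the agent's original MMS value, fixed in advance and not recomputed. Hence the price bound persists to the end of the phase.

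\textbf{Subjective-value bound.} I read Stage~2, like Stage~1, as repeated ``as long as'' there is an active $i$ and remaining $e$ with $v_i(e)\ge\rho\cdot MMS_i$. Each step removes one agent and one good, so this also terminates. On exit no active pair violates $v_i(e)<\rho\cdot MMS_i$, and the price bound from Stage~1 still holds. Combining the two gives $\bar{v}_i(e)<\rho\cdot MMS_i$.

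\textbf{Main obstacle.} The only delicate point is interpretive. Stage~2 is phrased with ``if'' rather than ``as long as'', and the proposition requires the iterated reading. I also need that Stage~2 never reactivates Stage~1's condition. That holds because prices and MMS values are fixed and the active set only shrinks.

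A side remark on a technicality: a good sold in Stage~1 whose leftover money remains behaves as money, not a good, so it is not a ``remaining good''. If the paper instead treats leftover proceeds as a virtual good, its price is strictly less than $\rho\cdot MMS_i$ for all remaining active agents. Otherwise Stage~1 would have continued, since it always serves the lowest-MMS agent.
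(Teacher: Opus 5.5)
Your proof is correct and follows the paper's reasoning: the proposition is just the conjunction of the exit conditions of the two loops, and it persists because $p$ and the original $MMS_i$ values are fixed while the sets of active agents and remaining goods only shrink. Your interpretive worry is settled by the pseudocode (Algorithm~\ref{alg:APX-MMS-23}), which writes Stage~2 as a \emph{while} loop; it also puts leftover Stage~1 proceeds into the money pool $P'$ rather than making them a good, matching your side remark.
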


We now reach the main part of our algorithm. At this point, there are $n'$ active agents, some goods may still remain, and some money $P'$ may remain (from goods sold in previous steps). 
%We assume without loss of generality that $P'$ is smaller than the price of the least expensive item $e$ that was previously sold. Otherwise, we can buy $e$ back and add $e$ to the pool of remaining items (reducing $P'$ to $P' - p(e)$), without affecting $\hat{v_i}$ of any of the active agents. 

Consider the active agent $i$ of highest MMS value. 
Find a partition (into $n'$ parts) with the following properties. We use the notation $\sellable_i$ for the set of goods that $i$ intends to sell, so that the money received from selling any single good is split over more than a single part. Let $P = \sum_{e \in \sellable_i} p(e)$. Note that $P'$ is not considered part of $P$. %\yge{In our partition, we do not force the sale of the goods in $\sellable_i$ outright, but rather have a specialized process for their sale, which is utilized in Proposition~\ref{pro:fewSingletons}.}
%\ufc{Not sure that we want this last sentence.}

%We use $\sellable_i$ in a \textit{lazy} fashion in our partition, i.e., we do not force its sale outright, but later, when a selection is made, sell the necessary goods to make sure the selected bundles are acceptable. 

\begin{enumerate}

%\item 
%\ygc{Worth emphasizing that this is a tentative/lazy suggestion and that only a part of it, if any, will actualize. Also, $S$ is a preliminary to the partition into $n'$ parts, it's not one of the parts.}

\item $p(e) \ge (1 - \rho)MMS_i$ for every $e \in \sellable_i$. 
    
\item Each part is of one of three types:

\begin{enumerate}
    \item Pure parts: the $\bar{v}_i$ value of a pure part is at least $\rho \cdot MMS_i$, and it contains no money from $P$. (It may contain money from $P'$.)

    \item Singleton parts: a singleton part contains a single good and some money from $P$, and its $\bar{v}_i$ value is at least $\rho \cdot MMS_i$. Importantly, for each singleton part, there is an associated good in $S_i$ such that all the money in the part comes from the sale of the associated good. The same good in $S_i$ can be associated with several singleton parts.  

    %\ygc{I think it doesn't matter. Because in the perspective of the later agent (that experiences this bundle as a loss), it can fall within $\ell_2, \ell_3$, or $\ell_4$ loss types. The bundle was unacceptable to the agent, so the condition for the bundle holds. Because of the preprocessing steps, the condition on $e$ and $f$ (the kept and sold items) also hold. And the goods sold have $p(f)\geq (1-\rho)MMS_i$ (we keep this property). So overall for the loss event, no issue should arise. BUT EVEN if there was an issue, during the process of Proposition 6, in the part where we have a virtual good and all the remaining proceeds do not exceed MMS, where we now allocate all the proceeds to the bundle, we could instead only allocate up to $\rho MMS$, and leave the remaining proceeds for the next agent (put them aside), instead.}

    \item Special part: a part that is neither pure nor singleton. Importantly, there is at most one part that is special in the partition.
    
\end{enumerate}

%\ufc{remove: 
\item For every good $e_1$ in a singleton part and for every good $e_2 \in \sellable_i$, we have that $\bar{v}_i(e_1) + p(e_2) > \rho \cdot MMS_i$.
%}

%\item $P$ suffices to make all singleton parts and the special part acceptable under $\bar{v}_i$. %\yotam{Does that mean we don't actually need $P'$? }
%\ufc{This was needed in the old version, but not now.} 

\end{enumerate}

We refer to partitions as above as {\em canonical}. We remark that we do not require canonical partitions to make use of all goods (it is fine if some goods are neither in any of the parts nor in $\sellable_i$).

Suppose that a canonical partition is found.
Consider the following $n' - 1$ by $n' - 1$ bipartite graph $G$. On one side, we have all active agents except for $i$. On the other side, we have all parts of the canonical partition (where each part includes the goods of that part and the money in that part, both that received from $P'$ and that received from $P$), except for the special part. (For uniformity of the presentation, if there is no special part, then remove some other arbitrary part.) We place an edge between an agent $j$ and a part $k$ if part $k$ is acceptable under $\bar{v}_j$. 

Given $G$, we proceed with a {\em matching procedure}.

\begin{enumerate}
    \item If $G$ has a perfect matching, we allocate parts according to the matching, and give the special part to agent $i$. All agents receive acceptable bundles and we are done with the allocation.
    \item If $G$ has no edge, we allocate one of the non-special parts to agent $i$ and remove agent $i$. If this non-special part was a singleton, then we sell a single good from $\sellable_i$ in order to produce the payment associated with the part, but do not sell the remaining goods from $\sellable_i$.
    \item If none of the above two cases hold, then by Hall's theorem, there is a subset of $t < n' - 1$ parts that can be matched to $t$ active agents, and except for agent $i$, no unmatched active agent finds any of the matched parts acceptable. Allocate the matched parts to the respective agents and remove these agents.

\begin{itemize}
    \item If the subgraph of $G$  that remains still contain an edge, then repeat step~3. After finitely-many repetitions the subgraph necessarily becomes empty. At that point we are at step~2 above, and then agent $i$ gets matched to one of the non-special parts.
\end{itemize}
    
    When step~3 ends, the payments associated with the matched singleton parts are generated by selling the fewest possible goods from $\sellable_i$. As the partition is canonical, the number of goods sold does not exceed the number of singleton parts that are allocated.
\end{enumerate}

At the end of the above matching procedure, at least agent $i$ (and perhaps also other agents) receives an acceptable bundle. Hence the number of active agents strictly decreases, meaning that the allocation algorithm eventually ends (of course, we still need to show that we can find canonical partitions at each iteration).

\begin{algorithm}[H]
    \SetAlgoLined
\DontPrintSemicolon
\KwIn{Valuations $v_1, \ldots, v_n$ over goods $\items$, market-price $p$, number of agents $n$}
\KwOut{An allocation $B$ with at least $\frac{2}{3}\cdot MMS$ value for each agent}

$ACTIVE = [n], AVAIL = \items, P' = 0$

\While{$\exists i\in ACTIVE, g\in AVAIL, p(g) \geq \rho MMS_i$} {
    Let $i$ be the lowest MMS agent in ACTIVE. 
    
    Sell $g$, allocate $\rho MMS_i$ in sale proceeds to agent $i$, add any remaining sale proceeds to $P'$. Remove $g$ from AVAIL, and $i$ from ACTIVE.  
}
\While{$\exists i\in ACTIVE, g\in AVAIL, v_i(g) \geq \rho MMS_i$} {
    Give $g$ to agent $i$. Remove $g$ from AVAIL, and $i$ from ACTIVE.  
}

%Let $n' = |ACTIVE|$. 

\While{$|ACTIVE| > 0$} {
    Let $i$ be the highest MMS agent in ACTIVE. 

    Let $B_1, \ldots, B_{|ACTIVE|}$ be a canonical partition of AVAIL, $P'$, where $B_1$ is the leftovers bundle. 

    Let $G$ be the bipartite graph between $B_2, \ldots, B_{|ACTIVE|}$ and the agents in $ACTIVE \setminus \{i\}$, with an edge whenever an agent finds a bundle acceptable. 

    %\While{There are vertices and edges in $G$} {

    Find a perfect matching, and if not, {a maximal matching from a violating set due to Hall's theorem}. Denote the matching $\mu$. Allocate according to the matching. Remove matched bundles from AVAIL and matched agents from ACTIVE, sell goods not contained in full in a bundle, and adjust $P'$ accordingly. 

    %Update $G$ to be the subgraph of $G$ excluding the agents and bundles in $\mu$. 
%}
{\If{a perfect matching was found} {
    Allocate $B_1$ to agent $i$, remove agent $i$ from ACTIVE. 
}
\Else{
    Allocate a bundle $B_j$ from the violating set that was not matched to an agent, to agent $i$. Remove $B_j$ from AVAIL and agent $i$ from ACTIVE. 
}
}

%\ygc{I made some edits to make the writing clearer}

}

 \caption{APX-MMS-2/3}
 \label{alg:APX-MMS-23}
\end{algorithm}

We now analyze the loss of an agent $k$ that is still active (was not matched). For every pure part that was allocated, the loss is of type $\ell_2$ (as the part was not acceptable under $\bar{v}_k$).
For the singleton parts, there are several cases to consider. Let $t$ denote the number of singleton parts (numbered from~1 to $t$) that were matched, and let $e_1, \ldots, e_t$ denote the $t$ goods contained in the $t$ singleton parts (the goods that are not sold). Let $t'$ denote the number of goods from $\sellable_k$ that were sold in order to generate the payments in the singleton parts, and denote these sold goods by $f_1, \ldots, f_{t'}$. Recall that necessarily, $t' \le t$. %\ygc{Is this rearrangements of the items actually sold really used in the proof? I remember we discussed it as maybe needed for the $\frac{3}{4}$ case but it doesn't seem necessary for the proof here and I didn't see it used.}

{We iterate over the singleton bundles to map them into loss events. Each singleton bundle is comprised of a pair $e_j, f_j$, where $e_j$ is kept, and some sale proceeds from $f_j$ are used. If $f_j \in \sellable_k$, we consider it a loss of type $\ell_2$. This is because the payment taken from $p(f_j)$ so as to make the singleton part acceptable is such that $\bar{v}_k(e_j) + p_j \le \rho \cdot MMS_k$. 

If $f_j \not \in \sellable_k$, we consider whether this is the first time that $f_j$ has appeared during our iteration over the singleton bundles. If it is, then note that 
$p(f_j) \ge (1 - \rho)MMS_i \ge (1 - \rho)MMS_k$ (here we use the fact that $i$, the agent who gets to propose the partition, has the highest MMS value). Hence the fact that $f_j$ was sold caused agent $k$ an amortized loss of at most $v_k(f_j) - (1 - \rho)MMS_k$. The total loss includes also the value of the singleton part $B_j$ containing $e_j$ that was allocated, which is at most  $\rho MMS_k$ (because $B_j$ was not acceptable for agent $k$). Hence the total loss is at most $v_k(f_j) + (2\rho - 1)MMS_k$.  If $e_j \not\in \sellable_k$, then we have a loss of type $\ell_3$. If $e_j \in \sellable_k$, then we have a loss of type $\ell_4$.

If it is not the first time that $f_j$ appears in a singleton bundle during our iteration, we consider it a type $\ell_2$ loss. This is since agent $k$ does not find the bundle acceptable, $e_j$ is not sold, and the selling loss from $f_j$ was already accounted for in the first singleton bundle that included sale proceeds from $f_j$. }

So far, we established the following proposition.

\begin{proposition}
    For every active agent $i$ that remains, each agent that was removed received a bundle that caused agent $i$ to suffer a loss of one of the four types $\ell_1$, $\ell_2$, $\ell_3$ or $\ell_4$.
\end{proposition}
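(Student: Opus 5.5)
The plan is a case analysis over the points in the algorithm where an agent is removed. For each such event, and each agent $i$ still active afterwards, I will check that the removed bundle $B$, together with any good whose sale the event forced, fits one of the loss types of Definition~\ref{def:loss_types}. Since every removal happens in Stage~1, in Stage~2, or in the matching procedure (steps~1--3), it suffices to treat these three sources separately.

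\textbf{Preliminary phase.} In Stage~1 a good $e$ is sold and agent $j$ receives money. Fix a remaining active agent $i$.
\begin{itemize}
\item If $e\notin\sellable_i$, I attribute the loss of $e$ to this event. Agent $i$ loses at most $\bar v_i(e)$ from $i$'s own perspective, and this is exactly type $\ell_1$: one good, no money from the bundle counted.
\item If $e\in\sellable_i$, then $i$ intended to sell $e$ anyway. The bundle given is money of amount $\rho\cdot MMS_j\le\rho\cdot MMS_i$, since $j$ has the lowest MMS value. The sold good lies in $\sellable_i$, so this is type $\ell_2$.
\end{itemize}
In Stage~2 a good $e$ is given whole.
\begin{itemize}
\item If $e\notin\sellable_i$, the loss is type $\ell_1$.
\item If $e\in\sellable_i$, then $v_i(e)\le p(e)$ is possible. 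Since Stage~1 has ended, $p(e)<\rho\cdot MMS_i$. Also $v_i(e)<\rho\cdot MMS_i$, otherwise Stage~2 could have served $i$. The exact ordering of Stage~2 needs care here; I would phrase it as ``$e$ is unacceptable to $i$, else $i$ is a candidate''. Hence $\bar v_i(e)\le\rho\cdot MMS_i$ and no money is involved, so the loss is type $\ell_2$.
\end{itemize}

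\textbf{Matching phase.} Every bundle allocated to an agent other than $i$ (the remaining active agent) in steps~2--3 is a non-special part. The special part goes to the divider only in the perfect-matching case, where nobody remains. Moreover, each allocated part is unacceptable to $i$ by the Hall-violator property, and in step~2 by the no-edge condition. I then proceed as follows.
\begin{itemize}
\item \emph{Pure parts.} These have $\bar v_i(B)\le\rho\cdot MMS_i$. Their money comes only from $P'$, which was already accounted for earlier, so they are type $\ell_2$. Treating money in $P'$ uniformly is the subtle point: its selling loss was charged at the time it was generated.
\item \emph{Singleton parts.} These are charged as in the paragraph preceding the proposition. Each singleton $B_j$ contains a kept good $e_j$ and money from a sold good $f_j$. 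If $f_j\in\sellable_i$, or $f_j$ was already charged by an earlier singleton, the charge is type $\ell_2$. Otherwise it is charged as $\ell_3$ or $\ell_4$, according to whether $e_j\in\sellable_i$. The bound $p(f_j)\ge(1-\rho)MMS_{\mathrm{div}}\ge(1-\rho)MMS_i$ holds because the divider has the highest MMS among active agents. The bounds $\bar v_i(e_j),\bar v_i(f_j)\le\rho\cdot MMS_i$ come from the preliminary-phase proposition.
\end{itemize}
Because the number of sold goods is at most the number of matched singletons, every sold $f$ is charged exactly once.

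\textbf{Main obstacle.} I expect the main obstacle to be bookkeeping rather than inequalities. The charging must be a genuine injection: each forced sale $v_i(f)-p(f)$ is attributed to exactly one allocated bundle, and no sale is left uncharged. Two places need this. In step~2, the divider receives a singleton part, and the single sold good from $\sellable_{\mathrm{div}}$ must be charged to that bundle. In repeated step~3 rounds, goods from $\sellable_{\mathrm{div}}$ that end up unsold must not be charged at all. I would verify this by listing, per iteration, exactly which goods are sold, which is at most one per matched singleton, and assigning each to its first-occurring singleton bundle.
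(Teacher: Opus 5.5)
Your decomposition matches the paper's. Stage~1 and Stage~2 events are split by whether the good lies in $\sellable_i$. Pure parts are charged as $\ell_2$. Singleton parts are charged as $\ell_2$, $\ell_3$ or $\ell_4$ via the first-occurrence rule and $p(f_j)\ge(1-\rho)MMS_{\mathrm{div}}\ge(1-\rho)MMS_i$. Your remark that $P'$-money in pure parts is harmless, because its selling loss was charged when it was generated, is the amortized reading the paper relies on.

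One step fails, though. In Stage~2 with $e\in\sellable_i$ you conclude $v_i(e)<\rho\cdot MMS_i$ because otherwise ``Stage~2 could have served $i$''. Stage~2 gives $e$ to an arbitrary candidate. If another active agent $j$ also has $v_j(e)\ge\rho\cdot MMS_j$, then $j$ may receive $e$ while $i$ stays active with $v_i(e)\ge\rho\cdot MMS_i$. Your rephrasing (``$e$ is unacceptable to $i$, else $i$ is a candidate'') has the same defect: being a candidate does not mean receiving the good.

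This situation can occur. Take $n=4$, eight goods with $v_i=\frac{1}{2}$ and price $0$, and a good $e$ with $v_i(e)=0.9$, $p(e)=0.5$.
\begin{itemize}
\item Keeping $e$ gives a minimum of at most $1$.
\item Selling $e$ and splitting its proceeds gives $MMS_i=1.125$, so $e\in\sellable_i$ and $\rho\cdot MMS_i=0.75>p(e)$.
\item $e$ is the only good with which Stage~2 could serve $i$.
\end{itemize}
If $j$ takes $e$, then $i$ enters the main phase having lost a bundle with $\bar v_i(\{e\})=0.9>\rho\cdot MMS_i$. So your derived bound $\bar v_i(e)\le\rho\cdot MMS_i$ is simply false here.

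The repair is what the paper does: do not bound $v_i(e)$ at all. Since $e\in\sellable_i$, in $i$'s MMS partition $e$ contributes only its proceeds $p(e)$. Losing $e$ therefore costs $i$ only $p(e)$ relative to that partition, and $p(e)<\rho\cdot MMS_i$ because Stage~1 is exhausted. That is exactly the quantity the $\ell_2$ bound is used for in Lemma~\ref{lem:canonical}.

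So this event must be counted as an $\ell_2$ loss in the partition-relative sense, the same sense you already use for $P'$. It cannot be counted via the literal test $\bar v_i(B)\le\rho\cdot MMS_i$, which can fail here. With that change the rest of your argument goes through.
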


%\ygc{Recall that at each iteration of the process, at least one agent becomes inactive after getting an acceptable bundle. Thus, as long as the process can continue, eventually all agents get an acceptable bundle. YOTAM: I initially didn't understand why there is no part of the proof that shows it's ok for the agent to suffer this loss of bundles of the above types, but then I realized that it ends up ok because the process can continue on and on and eventually the agent gets an acceptable bundle.}
To show that the allocation algorithm terminates with all agents receiving acceptable bundles, we need to set the value of $\rho$ to be such that a canonical partition exists in every step of the main part of the algorithm.

\begin{lemma}
\label{lem:canonical}
    For $\rho = \frac{2}{3}$, a canonical partition exists in every step of the main part of the algorithm.
\end{lemma}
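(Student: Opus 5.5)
The plan is to fix the partitioning agent $i$ (the active agent of highest MMS), normalize $MMS_i=1$, and use Lemma~\ref{corr:value_readjust} to take an equi-valued MMS partition $C^1,\dots,C^n$ of $i$: each $C^r$ has $\bar{v}_i(C^r)=1$ and, by Lemma~\ref{lem:number_of_split_goods}, at most $n-1$ goods are force-sold. After the preliminary phase every remaining good has $\bar{v}_i<\frac23$. The proof then has two parts. First, a \emph{counting} part shows that after the $n-n'$ removals the remaining goods and money $P'$ still have $\bar{v}_i$-value large enough, and are structured well enough, to be regrouped into $n'$ ``blocks''. Second, a \emph{construction} part turns these blocks into a canonical partition.

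\textbf{Counting part.} I would charge each removed bundle, by its loss type from Definition~\ref{def:loss_types}, against the original parts $C^r$.
\begin{itemize}
\item A loss of type $\ell_1$ removes a single good lying inside one $C^r$. The rest of $C^r$ keeps value at least $1-\frac23=\frac13$ and is set aside as a ``half-bundle''.
\item A loss of type $\ell_2$ removes value at most $\frac23$, with no hidden loss, since it contains no money from a good $i$ would not sell.
\item Losses of types $\ell_3$ and $\ell_4$ cost at most $v_i(f)+\frac13\le 1$ in amortized value. Their removed material is $e$ together with the good $f$ (now converted to money).
\end{itemize}
The key claim is that for every removed bundle one can delete one ``unit'' (one $C^r$, or two half-bundles merged) while keeping the rest intact. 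For $\ell_3$ and $\ell_4$, if $e$ and $f$ lie in the same $C^r$, delete $C^r$. Otherwise merge the remainders $C^r\setminus\{e\}$ and $C^{r'}\setminus\{f\}$, as in the two-goods argument of the overview. The subtlety is that only part of $p(f)$ was used, so the remaining money from $f$ also stays in the pool. Losses of type $\ell_2$ are absorbed globally: their total is at most $\frac23$ per removed agent, against a budget of $1$ per removed agent. The surplus of $\frac13$ per $\ell_2$-removal compensates the fractional damage they cause when they cut across several $C^r$'s.

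The global inequality I aim for is that the remaining value is at least $n'\cdot 1$ minus the damage attributable to $\ell_2$ removals. The result should be a family of $n'$ disjoint remaining blocks, plus one residual pool, each block of value at least $\frac23$ apart from money that must come from goods in $\sellable_i$.

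\textbf{Construction part.} I would proceed in four steps.
\begin{enumerate}
\item Un-sell every good of $\sellable_i$ with $p(e)<\frac13$ and treat it as an ordinary kept good. This enforces property~1 of canonicity; each such good costs at most $\frac13$ in value, and this fits inside the $\frac13$ slack between $1$ and $\rho=\frac23$.
\item Run bag-filling over the kept goods (and $P'$), closing a bag as soon as its $\bar{v}_i$-value reaches $\frac23$. Since all goods are below $\frac23$, each closed bag has value below $\frac43$, and these bags are pure parts.
\item When kept goods alone cannot close a bag, take a single good $e_1$ and top it up with money from one good of $\sellable_i$. This produces singleton parts, each associated with one sold good, and requires $\bar{v}_i(e_1)+p(e_2)\ge\frac23$. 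Property~3 of canonicity holds because we choose to use singletons only for goods $e_1$ whose value is large enough. Smaller goods are instead packed into pure bags, or thrown into the single special (leftovers) part together with all unused money.
\item Check that the number of completed parts is at least $n'-1$ and that the leftovers part still has value at least $\frac23$. This follows from the total value from the counting part together with the per-bag waste bound: a pure bag overshoots by at most one good, and a singleton uses money from only one good of $\sellable_i$.
\end{enumerate}

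\textbf{Expected obstacle.} The hardest step is the counting part for losses of types $\ell_3$ and $\ell_4$ when $e$ and $f$ come from different MMS parts, interleaved with $\ell_2$ losses that straddle parts. A purely local pairing argument fails there, as the overview itself warns. I would first try a global potential argument, $\Phi=\bar{v}_i(\text{remaining})-\frac23\cdot(\text{number of units still needed})$, and show that each loss type decreases $\Phi$ by at most the decrease in required units. If that is too lossy, I would fall back to an explicit induction on removals, maintaining an invariant partition of the remaining material into $n'$ blocks of value at least $\frac23$, each block containing money from at most one good of $\sellable_i$.
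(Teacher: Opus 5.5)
\textbf{There is a genuine gap: the decisive step is left open, and the construction you sketch does not close.}

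\textbf{Where the counting fails.} Step~4 bounds waste by saying a pure bag overshoots by at most one good. But goods outside $\sellable_i$ can have $\bar{v}_i$ value up to almost $\frac{2}{3}MMS_i$, so bag-filling them closes bags worth almost $\frac{4}{3}MMS_i$. The only global bound you have is that the remaining value is at least $n'\cdot MMS_i$. This is because every loss type costs at most $MMS_i$, and losses of types $\ell_1,\ell_3,\ell_4$ can cost that much. Then $n'-1$ parts of value near $\frac{4}{3}MMS_i$ plus an acceptable leftovers part do not fit. Step~3 is also underspecified: once bag-filling has consumed the kept goods, no good is left to anchor a singleton.

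\textbf{The missing idea.} Single out the set $L'$ of remaining goods not in $\sellable_i$ with $\bar{v}_i\ge\frac{1}{3}MMS_i$.
\begin{enumerate}
\item Put each good of $L'$ in its own part.
\item Bag-fill only with $P'$ and cheap goods (value at most $\frac{1}{3}MMS_i$, including cheap goods of $\sellable_i$). Then every pure bag has value at most $MMS_i$.
\item Complete each large-good part with a good of $\sellable_i$, either whole if the part stays at most $MMS_i$, or by selling it and adding exactly enough proceeds to reach $\frac{2}{3}MMS_i$. The remaining proceeds, at least $\frac{1}{3}MMS_i$, carry over to the next large-good part.
\end{enumerate}
This gives a canonical partition whenever $|L'|\le n'+1$ and the remaining value is at least $n'MMS_i$.

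\textbf{The hard case $|L'|>n'+1$.} Here large goods must be paired into pure parts worth up to $\frac{4}{3}MMS_i$, and your counting part aims at the wrong target.
\begin{itemize}
\item The $\frac{1}{3}MMS_i$ slack of an $\ell_2$ event is not needed for $\ell_2$ bundles straddling several parts; global counting already absorbs that. It is needed to pay for the excess of one such pair.
\item Your local merge does not apply to $\ell_4$. There $e\in\sellable_i$, so $e$ lies in no single part of $i$'s MMS partition (its price is spread as money over several parts). An $\ell_4$ event can also cost a full $MMS_i$ with no slack.
\end{itemize}
The paper resolves this by restructuring the MMS partition:
\begin{itemize}
\item A part containing two $f$-goods of $\ell_4$ events is discarded, and the two events become one $\ell_2$ event.
\item A part containing two large goods yields a pure part.
\item Two parts, each containing an $\ell_4$ $f$-good and a large good, yield a pure part made of the two large goods.
\end{itemize}
Afterwards, large goods lie in distinct parts, and at most one part containing an $\ell_4$ $f$-good contains a large good. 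Hence $|L_0|\le n_0+n'_2+1$, so at most $t=|L_0|-n_0-1\le n'_2$ wasteful pairs are needed, each charged to one $\ell_2$ slack. This leaves value at least $(n_0-t)MMS_i$ for $n_0-t$ parts with $n_0-t+1$ large goods, which is the previous case. Your potential $\Phi$ tracks only total value, so it cannot see this granularity obstruction.

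\textbf{A minor point.} In $\ell_1$ the removed good may come from the preliminary phase and have large $\bar{v}_i$, so your ``half-bundle'' bound is unjustified. Discarding the whole part suffices.
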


\begin{proof}
Consider active agent $i$ when $n'$ agents remain. 
%We may assume without loss of generality that initially, there was no item $e \in \items$ with $v_i(e) \ge \rho \cdot MMS_i$ that $i$ did not intend to sell in her original MMS partition. This is because if there was such an item $e$, then  it would be allocated in stage~1 and~2, not to $i$ (because $i$ is still active), but to some other agent $j$. But then removing $j$ and $e$, the MMS of $i$ does not decrease.  

Let $B_1, \ldots, B_n$ denote the original $MMS_i$ partition of $\items$, and recall that $\sellable_i$ is the set of goods that $i$ intends to sell. With the goods and money that remains, we need to produce a $\rho$-$MMS_i$ partition $B'_1, \ldots, B'_{n'}$ that is canonical. 

Let $n_1, n_2, n_3, n_4$ (with $n_1 + n_2 + n_3 + n_4 = n - n'$) be the number of events of losses of types $\ell_1, \ell_2, \ell_3, \ell_4$, respectively. 

In losses of type $\ell_1$, a single good $e \not\in \sellable_i$ is sold. Remove the (at most) $n_1$ bundles that contain these goods. We remain with an $MMS_i$ partition into at least $n - n_1$ parts, and there are no events of $\ell_1$ losses.

In a loss event $j$ of type $\ell_3$, two goods $e_j$ and $f_j$ are lost, with $e_j, f_j \not \in \sellable_i$. In addition some money $r_j$ is received, so that the total net loss is at most $(3\rho - 1)MMS_i$. For our choice of $\rho = \frac{2}{3}$, this loss is at most $MMS_i$. If $e_j$ and $f_j$ belong to the same bundle in $B_1, \ldots, B_n$, simply remove that bundle. If they belong to different bundles, then after removing $e_j$ and $f_j$, merge what remains of these two bundles, and add to the newly created bundle the payment of $r_j$. %\yge{We lose two bundles, but gain the merged bundle of value at least $2 \frac{1 + \rho}{2} MMS_i - (3\rho - 1)MMS_i = (2 - 2\rho)MMS_i$, which equals $\rho MMS_i$ for $\rho = \frac{2}{3}$. } 
Handling all loss events of type $\ell_3$, we remain with an $MMS_i$ partition into at least $n - n_1 - n_3$ parts, and there are no events of $\ell_1$ or $\ell_3$ losses. 

As we may discard parts, we assume that exactly $n - n_1 - n_3 = n' + n_2 + n_4$ parts remain (call them $B_1, \ldots B_{n' + n_2 + n_4}$), and we still need to account for $n_2$ loss events of type $\ell_2$, and $n_4$ loss events of type $\ell_4$.

%Now we consider loss events of type $\ell_4$. With $\rho = \frac{2}{3}$, they correspond to losing a single item of value at most $\frac{2}{3} \cdot MMS_i$ that $i$ did not intend to sell (but was sold anyway), and additional monetary loss of at most $\frac{1}{3} \cdot MMS_k$. 

Let $L$ (for {\em large}) denote that set of all goods of value at least $(1 - \rho)MMS_i$ that are not in $\sellable_i$. Of them, let $L'$ denote those that still remain available (neither allocated nor sold) when $n'$ agents remain. At this stage of our analysis, they belong to the $n' + n_2 + n_4$ $MMS_i$-bundles that we have. We wish now to redistribute $L'$ only among $n'$ bundles, $B'_1, \ldots, B'_{n'}$, those that will form our canonical $\rho$-$MMS_i$ partition into $n'$ parts. (We are allowed to also discard some goods from $L'$, as canonical partitions need not contain all goods.)

Consider first the case that $|L'| \le n' + 1$. Observe that for $\rho = \frac{2}{3}$, no loss event causes a loss larger than $MMS_i$ (type $\ell_2$ loss is at most $\rho \cdot MMS_i$ and type $\ell_4$ loss is at most $(3\rho - 1)MMS_i$). Hence the total value that we have at our disposal is at least $n' \cdot MMS_i$. 

\begin{proposition}
\label{pro:fewSingletons}
    If $|L'| \le n' + 1$, $\rho = \frac{2}{3}$, and the total value that we have at our disposal is at least $n' \cdot MMS_i$, then a canonical $\rho$-$MMS_i$ partition into $n'$ parts exists.
\end{proposition}

\begin{proof}
The total value that we have at our disposal is composed of four disjoint components:

\begin{enumerate}
    \item Goods in $L'$ ({\em large} goods).
    \item Money $P'$ leftover from previously sold goods. 
    %Recall that we may assume that $P' \le \rho MMS_i$.
    \item Goods in $C'$ (here $C$ stands for {\em cheap}) that were not sold until now, and are neither in $\sellable_i$ nor in $L'$. This last condition implies that $\bar{v}_i(e) \le (1 - \rho)MMS_i$ for every $e \in C'$.
    \item Goods in $\sellable'_i$ that were not sold until now and are in $\sellable_i$. We move some of the goods of $\sellable'_i$ into some of the other components (in particular, $i$ gives up the intention to sell them), as follows.
    \begin{enumerate}
        \item Those goods $e\in \sellable'_i$ with $\bar{v}_i(e) \le (1 - \rho)MMS_i$ are moved into $C'$ (in terms of their value they fit in $C'$).
        \item If $|L'| < n'$, then $n' - |L|'$ goods $e\in \sellable'_i$ with $\bar{v}_i(e) > (1 - \rho)MMS_i$ (or fewer, if the number of available such goods is smaller)  are moved into $L'$ (in terms of their value they fit in $L'$).
    \end{enumerate}
We reinterpret the notation $\sellable'_i$ to include only those goods not moved out of $\sellable'_i$. Observe that now every good  $e \in \sellable'_i$ satisfies $\bar{v}_i(e) > (1 - \rho)MMS_i$, and moreover, if $\sellable'_i$ is non-empty then $|L'| \ge n'$.
\end{enumerate}

If $|L'| \le n'$, then put each good of $L'$ in a different part of the canonical partition. If $|L'| = n' + 1$, then put two goods of $L'$ in the same bundle $B'$, which can serve as one pure bundle of the canonical partition. In this latter case, we need to create only $n'-1$ additional parts. The total value that remains at our disposal for this is at least $n' \cdot MMS_i - \bar{v}_i(B') \ge (n' - 2\rho)MMS_i$. To unify the cases of $|L'| \le n'$ and $|L'| = n' + 1$, we rename $n' - 1$ as $n'$ in the latter case, and so in either case we are in the situation that every good of $L'$ is in a different part, and the total value at our disposal is at least $(n' + 1 - 2\rho)MMS_i$. %(as $\rho \ge \frac{1}{2}$).

For money from $P'$, do a {\em bag filling procedure}, pouring money into the same part until the value of the part reaches $\rho \cdot MMS_i$, and then moving to the next part. Each part that is made acceptable in this manner is pure and has value at most $\rho \cdot MMS_i$.

For goods in $C'$,  continue the bag filling procedure from the part in which it previously ended, putting them one by one into the same part until the value of the part reaches $\rho \cdot MMS_i$, and then moving to the next part. Each part that is made acceptable in this manner is pure and has value at most $MMS_i$.

%For items $e\in S'_i$ with $\bar{v}_i(e) \le (1 - \rho)MMS_i$, move them into $C'$, and continue the bag filling procedure from the part in which it previously ended. Hence, we may assume that $\bar{v}_i(e) > (1 - \rho)MMS_i$ for every $e \in S'$.

At this stage we already have some parts that are acceptable and pure, some parts that contain only a single good from $L'$,  and (at most) one part that is {\em special} (a part to which the bag filling procedure added goods but it has not become acceptable yet). %\ygc{The following is true, but unnecessary to state at this stage, because it doesn't change what we do next, and is repeated in more general form (and clearer) later: REMOVE:
Observe that no part can be empty. Having a part with no good from $L'$ implies that $\sellable'_i$ is empty, but then the total combined value of all goods is at most $n'-1$, whereas the total value at our disposal was higher, at least $n' + 1 - 2\rho$.

%\ufc{Yotam removed}
%As long as $S'_i$ is non-empty, we can transform empty parts into parts that contain a single item from $L'$, by moving an item from $S'_i$ (note that it has $\bar{v}_i$ value at least $(1 - \rho)MMS_i$) into $L'$ \yotam{I didn't understand this. Why can we do it, and why do we need to do it?}, and then into the empty part. It cannot be that we exhaust $S'_i$ while there still is an empty part, as this will contradict the fact that the total value that we have at our disposal is strictly larger than $(n' - 1) MMS_i$ \yotam{This expression seems off, previously we said $(n'+1-2\rho)MMS_i$, and also maybe we should redo the accounting given some parameter of how many bundles were already filled? }. Hence we may assume that there are no empty parts.
%\ygc{I removed a paragraph here, and added some words at the end, I think this and the last paragraph were overlapping, and this paragraph was not so clear.} \ufc{Okay. I augmented the explanation of $S'_i$ above so that the paragraph that you removed is no longer needed.}

{We now continue with the bundles that were not involved in the bag-filling, and are not acceptable in current form (where they contain a single item from $L'$). Let us denote the set of all such bundles $J$ and order them arbitrarily from $j = 1$ to $j = |J|$. 
We iterate over the remaining goods in $\sellable'_i$ in an arbitrary order. For every such good $e$, we add it to bundle $j$ in full if the current value of bundle $j$ together with $p(e)$ does not exceed $MMS_i$. If we indeed add $e$, then this creates an acceptable bundle (its value is at least $2(1 - \rho)MMS_i \ge \rho \cdot MMS_i$, where the inequality holds for $\rho = \frac{2}{3}$). This bundle is pure and can be incorporated in the canonical partition. Otherwise, we sell $e$, and add exactly the amount of sale proceeds so that the value of bundle $j$ is $\rho \cdot MMS_i$. We know that the remaining sale proceeds from $e$ are of value at least $(1-\rho)MMS_i$, or it could have been fully added to bundle $j$. We increase $j$ by $1$ and move to the next bundle, and we know that it can be made acceptable using the remaining proceeds from $e$, since the value of the good from $L'$ is at least $(1-\rho)MMS_i$, the remaining proceeds are at least $(1-\rho)MMS_i$, and together this gives $2(1-\rho)MMS_i = \rho MMS_i$ with $\rho = \frac{2}{3}$. We then repeat the logic we applied to bundle $j$ repeatedly (either add the remaining sale proceeds in full if it does not exceed $MMS_i$, or exactly the amount of sale proceeds to reach $\rho MMS_i$ otherwise), and we are guaranteed that all bundles processed in this way will be acceptable and not exceed $MMS_i$, until all sale proceeds from $e$ are used. We then move on to the next good in $\sellable'_i$, and proceed in the same fashion. If all bundles in $J$ are processed, then we add any remaining sale proceeds and goods from $\sellable'_i$ to the leftover bundle we marked during the bag-filling process.}

%If there is a single good from $\sellable'_i$ that can be added to a part that contains only a single good from $L'$ such that the total value of the part does not exceed $MMS_i$, do so. This creates an acceptable bundle (its value is at least $2(1 - \rho)MMS_i \ge \rho \cdot MMS_i$, where the inequality holds for $\rho = \frac{2}{3}$). This bundle is pure and can be incorporated in the canonical partition. \yge{If a good from $\sellable'_i$ can not be added in this fashion (but there is still some $e \in \sellable'_i$), then add }

%\ygc{It seems like a case is missing, which I think is you put an item from $L'$ as a full item and add sale proceeds from the $S'_i$ item? } \ufc{There is no such case, because we sell in a lazy fashion. I edited the next paragraph to reflect this.}

{We claim that when the process ends, we have a canonical partition. This is because we have three type of bundles: pure bundles (which can be part of the canonical partition), one leftovers bundle (this is allowed in a canonical partition), and bundles that contain only a single full good (this good is from $L'$) and sale proceeds from a single good. %For the single-good bundles, any single additional good from $\sellable'_i$ suffices in order to make them acceptable.
Importantly, the process where we add proceeds from $\sellable'_i$ to bundles in $J$
%Moreover, all of $\sellable'_i$ 
suffices in order to make all such bundles acceptable, and the leftover bundle as well. That is because we maintain that no bundle has value above $MMS_i$. Since the total value at our disposal is at least $(n' + 1 - 2\rho) MMS_i = (n' - 1 + \rho) MMS_i$ (the last inequality holds for $\rho = \frac{2}{3}$). I.e., we have enough value to complete the bundle processing creating the $n'-1$ bundles other than the leftover bundle using at most $(n'-1)MMS_i$ of value, and with at least $\rho \cdot MMS_i$ value left for the leftover bundle, making it acceptable. }

%when we get to form the last bundle after losing at most $(n'-1)MMS_i$ creating the first $n'-1$ bundles, we have a value of at least $\rho MMS_i$ left, and so also the last bundle is acceptable. Hence all the single-good bundles can be made into singleton bundles, completing the canonical partition.

%\ygc{I think one way to make this more understandable is as follows: We do the bag-filling with goods from $C'$. We stop at some point when the bag-filling doesn't fill an acceptable bundle. We declare this the leftovers bundle. Then, for all the bundles after it, we make it singleton using $S'_i$. If we allocate to a bundle with a good $e\in S'_i$, and the remaining sale proceeds from that good are less than $(1-\rho)MMS_i$ when we get to $\rho MMS_i$, then we keep them in this bundle (and it still has value at most $MMS_i$). Otherwise, we have sale proceeds of value at least $(1-\rho)MMS_i$ to use in the next bundle, that has a good from $L'$, so we can make it acceptable, and so on. The point is, we never need to use sale proceeds from two goods in $S'_i$ to complete a bundle with a good from $L'$. So they are all valid singleton bundles.}
\end{proof}

Given Proposition~\ref{pro:fewSingletons}, it remains to handle the case that $|L'| > n' +1$. 

%\ygc{Maybe it can help the reader (it would have helped me) to mention that anyway $|L'| \leq 3n'$ with $\rho = \frac{2}{3}$, just to see that the gap is not that huge. And actually we can assume that $|L'| < 2n'$ (you write it later, but it doesn't require the handling process to hold, right?}

We do this in a sequence of transformations that reduce it to the case of $|L'| \le n' + 1$. To present this sequence of transformations, we introduce some notation.

Recall that there are $n'$ active agents, and there are $n_2$ loss events of type $\ell_2$ and $n_4$ loss events of type $\ell_4$. We have a partition into $n' + n_2 + n_4$ bundles (each bundle may contain goods, and also money from goods previously sold, and money from goods that $i$ intends to sell), where each has $\bar{v}_i$ value exactly $MMS_i$. (If there were parts of higher value, we may reduce the values of some of the goods or give up some of the payments so that each part has value exactly $MMS_i$, see Corollary~\ref{corr:value_readjust}). In our transformations we shall create some acceptable pure bundles. Hence, the effective value of $n'$ will go down, because fewer additional bundles need to be created for the partition. We let $n_0$ denote the number of additional bundles that need to be created. Likewise, in the process we may discard goods that were involved in loss events, and consequently also $n_2$ and $n_4$ might change their values. We refer to their new values as $n'_2$ and $n'_4$. Finally, also the set $L'$ may shrink (as some of its goods may become part of pure bundles that we created). We refer to the set of goods remaining in $L'$ as $L_0$. Our goal is to reach a stage in which $|L_0| \le n_0 + 1$, while maintaining that the total value that we have at our disposal is at least $n_0 \cdot MMS_i$. This will allow us to apply Proposition~\ref{pro:fewSingletons}.

%two stages. In the first stage, we get rid of redundant loss events. This stage must end, as there are only finitely many loss events. In the second stage we show how to create a bundle in the canonical partition, while keeping the invariant that the total value for our disposal (after removing the loss events) is at least $n' \cdot MMS_k$. Repeating this creating process, either we create all $n'$ bundles of the canonical partition, or we reach a situation in which $|L'|$ is smaller than the $n'$ value that is in effect at that time (which is the original $n'$ minus the number of bundles that we already created for the canonical partition).

Recall that each loss event of type $\ell_4$ involves two goods, $e$ that is part of the allocated bundle, and $f$ that is sold and some of the money received is part of the allocated bundle, and the loss is at most $v_i(f) + (2\rho - 1)MMS_i$. Below, whenever we name a good $f_j$ (for some $j$), this signifies that the good participated as the $f$ good in a loss event of type $\ell_4$. 

%\ufc{Following your comment, I rewrote the next paragraph. Is it clearer now? If so, remove this comment and unclor the text that follows.}

{\bf Removing redundant loss events.} Suppose that the current partition has a bundle $B_j$ that contains two goods, $f_1$ and $f_2$, from loss events of type $\ell_4$. The two loss events of type $\ell_4$ contributed a loss of at most $v_i(f_1) + (2\rho - 1)MMS_i + v_i(f_2) + (2\rho - 1)MMS_i \le \bar{v}_i(B_j) + (4\rho - 2)MMS_i \le \bar{v}_i(B_j) + \rho \cdot MMS_i$ (the last inequality holds for our choice of $\rho = \frac{2}{3}$). As $B_j$ is part of the current partition we have that $\bar{v}_i(B_j) \le MMS_i$. Replace the two $\ell_4$ loss events by a loss event for the remaining $\rho \cdot MMS_i$, which qualifies as a loss event of type $\ell_2$, and remove $B_j$.  %\ygc{I didn't correct here yet, but this paragraph is ok: We take $\tilde{B}_j$ instead of $B_j$. The only issue in the argument might be that it has value greater than MMS, because it got the full good that was supposed to be sold. But this good is not related to $f_1, f_2$ (which by the definition of the loss event, were already sold). So, the part we dispose of does not need to contain this good, which we can keep: So the part we throw away is less than MMS.}
%\ygc{I think $\bar{v}_i(B_j)$ is not a loss event we defined. Maybe it's better to say we just assign $B_j$ without going through calling it a loss event?}

%As each payment was at most $(2\rho - 1)MMS_i$ \yotam{I didn't understand it. The $(2\rho - 1)MMS$ expression is the loss from forced selling, why is it a payment? OK Maybe I understand it in the following way: on top of $v_i(f)$, we lose at most $2\rho - 1$. So assuming all of $f$ payment was in the bundle, the loss from $e$ (which is in $S_i$, so it's a payment) is at most $2\rho - 1$. So if we were to exchange payments of $f$ elsewhere with payments of $e$, to make sure all of $f$ is in this bundle, then the payment from $e$ is at most $(2\rho - 1)$, and then this works?}, the sum of payments is no more than $\rho \cdot MMS_i$, for $\rho = \frac{2}{3}$, and so this is a loss event of type $\ell_2$. 
The number of bundles in the partition decreased by one, $n'_4$ decreased by two, whereas $n'_2$ increased by one.

As the transformation above reduces the number of bundles, it can be employed only finitely many times. Hence, we necessarily reach a stage in which no bundle $B_j$ contains two goods of type $f$ from loss events of type $\ell_4$.

%\ygc{I think it's a bit confusing to conclude \textbf{each} operation type in this manner (what's the end result of it). Because we don't know (maybe it's true, I didn't fully check) that the other handling types don't re-introduce things that we got rid off in this part. But what matters is that at least some of the parameters (overall loss events, $|L'|$, number of bundles. So, we can be sure that after repetitively applying these $3$ types of operations we will (i) finish, (ii) not have bundles and loss events of the types they handle.}

{\bf Handling pairs of goods from $L'$ in the same bundle.}
Suppose that the current partition has a bundle $B_j$ that contains two complete goods from $L'$. Then we create from it a pure bundle $B'$ in the canonical partition by including only these two goods, and discarding whatever is left from $B_j$ (either goods or money). We have that $\bar{v}_i(B') \ge 2(1 - \rho)MMS_i = \rho \cdot MMS_i$, where the last equality holds because $\rho = \frac{2}{3}$. Without $B'$, the total number of bundles decreased by one, $n_0$ decreased by one, $|L'_0|$ decreased by two, and $n'_2$ and $n'_4$ remain unchanged. %\ygc{Don't we also need to show $\bar{v}_i(B') \leq MMS_i$? But that holds because $\bar{v}_i(B') \leq \bar{v}_i(B_j) \leq MMS_i$} 
%\ygc{Also, I think the word ``removing'' is confusing. We are not really removing $B'$, it's just a pure bundle now, so we can ignore it. Also for the ``removing'' redundant loss types, it's more like we are ``rearranging'' them to have less loss events}

Necessarily, we reach a stage in which every bundle $B_j$ contains at most one good from $L'$.

{\bf Mixed $L'$ and $f$ bundles.} Suppose that the current partition has two bundles, call them $B_1$ and $B_2$, each containing one good ($f_1$ and $f_2$, respectively), where each of these goods is in a loss event of type $\ell_4$. In addition,  $B_1$ and $B_2$, each contain one good ($d_1$ and $d_2$, respectively) from $L'$. Each $B_j$ may possibly contain additional goods and money. Then create the acceptable pure bundle $B' = \{d_1, d_2\}$ and put it in the canonical partition, and discard the remaining parts of $B_1$ and $B_2$ (thus discarding also $f_1$ and $f_2$). The number of remaining bundles decreased by two, $n_0$ decreased by one, $n'_4$ decreased by~2 and $n'_2$ increased by~1 (the claims regarding $n'_4$ and $n'_2$ are as in {\em removing redundant loss events}). %\yotam{If I understand this reduction correctly, $e_1,e_2$ don't necessarily correspond to $f_1, f_2$ (not in the same loss event). So let's call them $e_3, e_4$ instead for a moment. Do we implicitly 'open up' loss events and replace their $f_3, f_4$ with $f_1,f_2$?}

Necessarily, we reach a stage in which at most one bundle $B_j$ contains both an $f$ good from an $\ell_4$ loss event and a good from $L'$.

At this stage we have $n'_2$ loss events of type $\ell_2$, $n'_4$ loss events of type $\ell_4$, we need to create $n_0$ additional bundles for the canonical partition, and we have $n_0 + n'_2 + n'_4$ bundles, each of value $MMS_i$. Of these, $n'_4$ bundles contain a good $f$ from an $\ell_4$ loss event, at most one of which contains a good from $L_0$. Moreover, all goods from $L_0$ are in separate bundles. Hence, $|L_0| \le n_0 + n'_2 + 1$. If $|L_0| \ge 2n_0$ then we can create $n_0$ pairs of goods from $L_0$. Each such pair makes an acceptable pure bundle, and so we are done. Hence we may assume that $|L_0| < 2n_0$. Likewise, we may assume that $|L_0| > n_0 + 1$, as otherwise we can apply Proposition~\ref{pro:fewSingletons}.

Create $t = |L_0| - n_0 - 1 \le n'_2$ acceptable pure bundles, each containing two goods from $L_0$.
Notice that this is possible as $\frac{1}{2}|L_0| \leq n_0 + 1$, and so $t = |L_0| - n_0 - 1 \leq \frac{1}{2}|L_0|$. We remark that each of the pure bundles that we create in this step might have value as high as $2\rho \cdot MMS_i > MMS_i$. Thus, it might appear as if the value that remains at our disposal drops below $MMS_i$ times the number of bundles that we still need to create. However, the next paragraph shows the latter does not happen.

Afterwards, the number of bundles still needed for the canonical partition is $\hat{n} = n_0 - t$, whereas the new size of $L_0$ equals to $\hat{L} = |L_0| - 2t$. However, we have $\hat{L} = \hat{n} + 1$, as $|L_0| - 2t = t + n_0 + 1 - 2t = n_0 - t + 1 = \hat{n} + 1$. 
 The total value left at our disposal, which we denote by $G$, is at least $G \ge (n_0 + n'_2 + n'_4)MMS_i - n'_2 \rho MMS_i - n'_4 \cdot MMS_i - 2t \rho MMS_i$ (since each pure bundle created has value at most $2\rho$). As $n'_2 \ge t$, we have that $G \ge \left(n_0 + (1 - 3\rho)t\right)MMS_i = (n_0 - t)MMS_i = \hat{n} MMS_i$ (for $\rho = \frac{2}{3}$). Thus, we need to construct $\hat{n}$ additional bundles in the canonical partition, have at most $\hat{n} + 1$ large goods, and at least $\hat{n} \cdot MMS_i$ total value for our disposal. Proposition~\ref{pro:fewSingletons} thus applies, and we obtain the canonical partition.

%\ufc{I tried to explain two paragraphs above why there is the need for the accounting. Is it clearer now?}
%\ygc{By the end of the proof I actually don't see a reason we need to do accounting of how the parameters change after each reduction. It only matters that each operation creates acceptable bundles, with value not exceeding MMS, and that there's only a limited amount of times it repeats.
%Also, I guess it matters that the number of bundles continue to equal $n_0 + n'_2 + n'_4. $
%The operations don't overlap so it's ok, and then we use the fact that all the reductions are completed to have $|L_0| \leq n_0 + n'_2 + 1$, so what's the accounting for? Anyway, whatever it is for, need to make it explicit. 
 
%If $|L'| > n'$ and there is no bundle $B_j$ as in the above paragraph, then necessarily $n_2 \ge 1$. Then taking any two items from $L'$ and creating a pure bundle out of them, the remaining value in their original bundles is $g \ge (2 - 2\rho)MMS_i$, where there is a loss event of type $\ell_2$ with loss $\ell \le \rho \cdot MMS_i$. Also, $g \ge \ell$ for $\rho = \frac{2}{3}$, and so the invariant is kept.
\end{proof}

Lemma~\ref{lem:canonical} established that when $\rho = \frac{2}{3}$, canonical partitions can be found whenever needed by the allocation algorithm. Eventually, every agent receives a bundle that is acceptable to her. This proves the general $n$ part of Theorem~\ref{thm:nMMS}.
\end{proof}

\begin{algorithm}[htbp]

\small

    \SetAlgoLined
\DontPrintSemicolon
\KwIn{Valuations $v$ over goods $\items$, market-price $p$, number of active agents $n'$, $B_1, \ldots, B_n$ an MMS partition according to $v$, Loss events $\ell_1, \ldots, \ell_4$ }
\KwOut{A canonical partition $B_1, \ldots, B_{n'}$  and $\forall 1\leq j\leq n', \bar{v}(B_j) \geq \frac{2}{3}\cdot MMS$}

\If{$|\ell_1| > 0$} {
    Arrange the bundles so that bundles corresponding to a loss event in $\ell_1$ have higher indices.

    return FIND-CANONICAL($v, p, n', B_1, \ldots, B_{n - |\ell_1|}, \emptyset, \ell_2, \ell_3, \ell_4)$. 
} \If{$|\ell_3| > 0$} {
        Let $e, f$ be the goods associated with $l_3$, and $r$ the payment associated with it. 
        
        \If{$\exists j, \{e,f\} \subseteq B_j$} {
            Remove $B_j$
        } \Else{
            Let $j_1, j_2$ be the bundles containing $e,f$ respectively. 
Let $B' = (B_{j_1} \cup B_{j_2} \setminus \{e,f\}) \cup r$. 

            Remove $B_{j_1}, B_{j_2}$, and add $B'$ instead. 
            
        }
        return FIND-CANONICAL($v, p, n', B, \ell_1, \ell_2, \ell_3 \setminus l_3, \ell_4$).
}

Let $L'$ be goods of value $\bar{v}(g) \geq (1-\rho)MMS$ not in $\sellable$.

\If{$|L'| \leq n' + 1$} {

    Let $P'$ be money from goods that do not appear in full in $B_1, \ldots, B_n$ (i.e., already sold in previous loss events). Let $\sellable'$ be the goods newly sold in the MMS partition.  Let $C'$ be goods not in $\sellable \cup L'$. 

    return UNSTRUCTURED-FIND-CANONICAL($v,p,n', L', P', C', \sellable')$.
}

\If{$\exists j, l_1 = (e_1, f_1)\in \ell_4, l_2 = (e_2, f_2) \in \ell_4, \{f_1, f_2\} \subseteq B_j $} {

    Remove $B_j$. Create a loss event %\yotam{
    $l = \{e_1, e_2\}$ of type $\ell_2$. % does that work..?}

    return FIND-CANONICAL($v, p, n', B, \ell_1, \ell_2 \cup \{l\}, \ell_3, \ell_4 \setminus \{l_1, l_2\}$)
}

\If{$\exists j, |L' \cap B_j| \geq 2$} {
    Let $B' = L' \cap B_j$.

    return $B'$, FIND-CANONICAL($v,p, n' - 1, B\setminus B_j, \ell_1, \ell_2, \ell_3, \ell_4$). 
}

\If{$\exists j_1, j_2, B_{j_1} \cap L' = \{d_1\},B_{j_2} \cap L' = \{d_2\}, l_1 = (e_1, f_1) \in \ell_4, f_1 \in B_{j_1}, l_2 = (e_2, f_2) \in \ell_4, f_2 \in B_{j_2}$} {
    Let $B' = \{d_1, d_2\}$. Create a loss event $l$ with %\yotam{
    $l = \{e_1, e_2\}$ of type $\ell_2$. % does that work..?}

    return $B'$, FIND-CANONICAL($v,p,n'-1,B\setminus (B_{j_1} \cup B_{j_2}), \ell_1, \ell_2 \cup \{l\}, \ell_3, \ell_4 \setminus \{l_1, l_2\}$. 
}

\If{$|L'| \geq 2n'$} {
    Create $A_1, \ldots, A_{n'}$ bundles with pairs of goods from $L'$. 

    return $A_1, \ldots, A_{n'}$.
}
\Else {
    Let $t = |L'| - n' - 1$. Create $A_1, \ldots, A_t$ bundles with pairs of goods from $L'$. Remove all goods in $\cup_{j=1}^t A_j$ from $L'$. 
    
    Let $P'$ be money from goods that do not appear in full in $B_1, \ldots, B_n$ (i.e., already sold in previous loss events). Let $\sellable'$ be the goods newly sold in the MMS partition. Let $C'$ be goods not in $\sellable \cup L'$. 

    return $A_1, \ldots, A_t$, UNSTRUCTURED-FIND-CANONICAL($v,p, n' - t, L', P', C', \sellable')$. %\yotam{here we can not pass a MMS partition because it's unstructured... so maybe the Proposition~4.5 logic ($|L'| \leq n' + 1$) needs to be extracted to another sub-routine, UNSTRUCTURED-FIND-CANONICAL. Loss events don't matter to this sub-routine so no need to pass them.}

}

 \caption{FIND-CANONICAL}
 \label{alg:FIND-CANONICAL}
\end{algorithm}

\begin{algorithm}[H]
    \SetAlgoLined
\DontPrintSemicolon
\KwIn{Valuations $v$ over goods $\items$, market-price $p$, number of active agents $n'$, $L'$ large goods, $P'$ sale proceeds, $C'$ small goods, $\sellable'$ unsold large goods}
\KwOut{A canonical partition $B_1, \ldots, B_{n'}$  and $\forall 1\leq j\leq n', \bar{v}(B_j) \geq \frac{2}{3}\cdot MMS$}

    Put a good of $L'$ in each bundle $A_1, \ldots, A_{n'}$. If a good remains, add it to $A_1$. 

    Bag-fill bundles using proceeds from $P'$ up to size $\rho MMS$. 

    Bag-fill bundles using goods in $C'$, with a threshold of $\rho MMS$.

    Bag-fill bundles that still contain a single good of $L'$, each with a single good from $\sellable'$, and without exceeding MMS. {If adding a good exceeds MMS, sell it, add exactly the amount of proceeds to reach $\rho MMS$, and treat the remaining proceeds as a virtual good.} 

    return $A_1, \ldots, A_{n'}$.

 \caption{UNSTRUCTURED-FIND-CANONICAL}
 \label{alg:UNSTRUCTURED-FIND-CANONICAL}
\end{algorithm}

%\ufc{The next section, about envy freeness, is just a draft. Most importantly, it is not clear whether Theorem~\ref{thm:23EFX} has a proof.}

\subsection{\texorpdfstring{Achieving $\frac{3}{4}$-MMS with $n=3$}{Achieving 3/4-MMS with 3 agents}}
\label{sec:3-4}

\CanonicalThree*

%\ygc{Lemma statement uses $v_i$, make sure throughout the proof we use the index $i$}

%\ufc{May require more thought to see if the proof of this lemma can be simplified.}

%\ygc{I was able to simplify the proof by taking advantage of the fact we can assume at most two sold goods in the MMS partition. This eliminates many of the cases to be considered, and also I managed to join the two remaining cases in the original proof into one. }

\begin{proof}
By Lemma~\ref{lem:number_of_split_goods}, we can consider an MMS partition that has at most $2$ goods that are sold. If the MMS partition has no sold goods, then all bundles are pure, and it is a canonical partition as is. We thus assume there is at least one sold good. 

If at most one bundle in the MMS output has more than $\frac{1}{4} \cdot MMS$ in sale proceeds, we can declare it as the leftovers bundle, remove any sale proceeds from the other two bundles, and declare them as pure bundles.  This gives a canonical partition where each bundle has value at least $\frac{3}{4}\cdot MMS$. Thus, there must be at least two such bundles, and overall $\frac{1}{2}\cdot MMS$ in sale proceeds. Since, as we showed, we can assume between one and two sold goods, we must have a good with at last $\frac{1}{4} \cdot MMS$ in sale proceeds. Denote it $F$. %\ygc{It might be that $F$ is the same as $f$ later... be more accurate with the notation} 

Next, we consider the case where we can create a singleton bundle with sale proceeds from $F$ and a full good $e$, where the bundle's value is exactly $\frac{3}{4} \cdot MMS_i$. In this case, we let $B_1$ be the singleton bundle. We do a bag-filling procedure to create $B_2$, with the threshold at $\frac{3}{4} \cdot MMS_i$. Since for each good $g$, $\bar{v}_i(g) \leq \frac{3}{4} \cdot MMS_i$, $B_2$ has value at most $1.5 \cdot MMS_i$. Overall, the remaining goods have value of at least $\left( 3 - (\frac{3}{4} + 1.5) \right)\cdot MMS_i = \frac{3}{4}\cdot MMS_i$. This is a partition with a value of at least $\frac{3}{4} \cdot MMS_i$ for each bundle, and it is canonical, as it is made of a singleton, pure, and a single leftovers bundle. 

If we cannot create such a singleton bundle, this implies that all goods $g \in \items \setminus \{F\}$ have $\bar{v}_i(g) < \frac{1}{2} \cdot MMS_i$, or otherwise, we could create a singleton bundle made of $g$ and sale proceeds from $F$. 

We consider the following construction: We use \textit{greedy} bag-filling using goods in $\items \setminus \{F\}$ to create $B_1, B_2$ with a threshold of $\frac{3}{4}\cdot MMS_i$, and put all remaining goods (including $F$) in $B_3$. By greedy bag-filling we mean adding goods in order of their $\bar{v}_i$ value (larger goods earlier), and moving to the next bundle once the current bundle has a value of at least $\frac{3}{4}\cdot MMS_i$. This is a canonical partition as $B_1, B_2$ are pure, and $B_3$ is a single leftovers bundle. So, it remains to establish that $\bar{v}_i(B_3) \ge \frac{3}{4} MMS_i$.

Since we established that for all goods $g \in \items \setminus \{F\}$ have $\bar{v}_i(g) < \frac{1}{2}$, it must be that $B_1$ contains at least two goods. First, consider if it contains exactly two goods. Then, its total $\bar{v}_i$ value is at most $MMS_i$. The value of $B_2$ is at most $(\frac{3}{4} + \frac{1}{2})MMS_i = \frac{5}{4}MMS_i$, since the last good added to the bag-filling is of value at most $\frac{1}{2}MMS_i$, and the value of the bundle before adding it is below $\frac{3}{4}MMS_i$. Overall, we conclude that $B_3$ has value of at least $(3 - (1 + \frac{5}{4}) \cdot MMS_i = \frac{3}{4} MMS_i$.

Lastly, consider if $B_1$ contains at least $3$ goods. Then, the two larger goods in $B_1$ have value less than $\frac{3}{4} \cdot MMS_i$, and so the second largest good must have value of less than $\frac{3}{8} \cdot MMS_i$. All subsequently added goods (both to $B_1$ and $B_2$) have value of at most $\frac{3}{8} \cdot MMS_i$. We thus have that both $B_1$ and $B_2$ have a value of at most $(\frac{3}{4}+\frac{3}{8}) \cdot MMS_i = \frac{9}{8} \cdot MMS_i$, which guarantees a value of at least $(3 - 2 \cdot \frac{9}{8}) \cdot MMS_i = \frac{3}{4} \cdot MMS_i$ for $B_3$. 
\end{proof}

\RemainderThreeFour*

\begin{proof}
Consider an MMS output for agent $i$ be $B_1, B_2, B_3$, each with value $MMS_i$. A bundle given away to some agent by the algorithm in stages (i)-(iii), results in a loss event for agent $i$ of one of the four types specified in Definition~\ref{def:loss_types}. 
We specify the construction of $B'_1, B'_2$ depending on the loss event type.
Notice that for the purpose of this lemma, our construction does not need to be canonical. 

\begin{itemize}
    \item $\ell_1$. With this loss event, a single good is eliminated, which the agent does not sell (but rather keeps) in the MMS partition. It thus must belong to a single bundle, say $B_1$. 
    Then, we can set $B'_1 = B_2, B'_2 = B_3$ to serve as the two disjoint bundles required by the lemma.
    
    \item $\ell_2$. With this loss event, the value lost is at most $\frac{3}{4}MMS_i$. 

    We do greedy bag-filling into bundles $B'_1, B'_2$ with a threshold of $\frac{3}{4}$-MMS, where if the good that passes the threshold is meant to be sold in the MMS partition, we only include sale proceeds to arrive at a value of $\frac{3}{4}$-MMS. 

    Consider if $B'_1$ contains a single good. Then its value is at most MMS (if it is a good meant to be kept, it has at most a value of MMS, and if it is meant to be sold, then we only take $\frac{3}{4}$-MMS of sale proceeds). We then have a remaining value of $1.5$-MMS (after removing the $\ell_2$ loss event and $B'_1$), and so $B'_2$ must be successfully filled. 

    If $B'_1$ contains at least two goods, then the first added good has a value of at most $\frac{3}{4}$-MMS, and since this is greedy bag-filling, this means that any subsequently added good has a value of at most $\frac{3}{4}$-MMS as well. Overall, this means $B'_1$ has a value of at most $(\frac{3}{4} + \frac{3}{4}) \cdot MMS = 1.5 \cdot MMS$, and we have a remaining value of at least $\frac{3}{4}$-MMS (after removing the $\ell_2$ loss event and $B'_1$), and so $B'_2$ must be successfully filled. 

    \item $\ell_3$ or $\ell_4$. With either of these loss types, we have two goods $e, f$, where $f$
    is not sold as part of the MMS partition, and a maximal combined loss of $v_i(f) + (2\rho - 1)MMS_i \leq (\frac{3}{4} + \frac{1}{2})MMS_i = \frac{5}{4} MMS_i$. We make two observations: (i) $f$ is in a single bundle of $B_1, B_2, B_3$ (since it is not sold in the MMS partition), say $B_1$. We can remove $f$ in full from $B_1$, and add back unused sale proceeds (as $f$ is used in the loss event for its sale proceeds) as we wish to the other bundles. By the combined loss bound, we can ensure that one of the two other bundles has a loss of at most $\frac{1}{4} \cdot MMS$ by removing $e$ (in full if it is kept in the MMS partition, or its various sale proceeds if it is sold), and adding back unused sale proceeds of $f$ as necessary. This is because the combined loss bound guarantees other than $v_i(f)$ we lose at most $\frac{1}{2} \cdot MMS_i$. W.l.o.g., this means that the revised bundle $B_2$ (denote it $B'_2$ after the changes) has a value of at least $\frac{3}{4} \cdot MMS$. (ii) Gather all remaining goods and sale proceeds in a bundle $B'_1$. In the loss event, at most $\frac{5}{4}MMS_i$ were removed, and $B'_2$ has at most value of $MMS_i$ (as it was created out of $B_2$, where we only add sale proceeds from $f$ if necessary to complete its value to $\frac{3}{4}MMS_i$, but anyway not exceeding the original value of $MMS_i$). This leaves at least $(3 - (\frac{5}{4} + 1)) MMS_i = \frac{3}{4}MMS_i$ in $B'_1$, and so this is an acceptable partition. 
%\ufc{Agent $i$ intends to sell $e$, so $e$ is not in any single bundle of the $MMS_i$ partition. A cleaner argument is as follows. The total loss is at most $v_i(f) + \frac{1}{2} \cdot MMS_i$. As one of the three MMS bundles lost $f$, one of the remaining two lost at most $\frac{1}{4} \cdot MMS_i$. It can serve as one bundle in the $\frac{3}{4} \cdot MMS_i$ partition, and the union of what remains of the other two bundles can serve as the other bundle (because the total value left for the union is at least $\frac{3}{4} \cdot MMS_i$).}
%\ygc{I think there's some subtlety about the fact that the $v_i(f)$ in the loss event bound doesn't necessarily come from $f$, but this can be resolved by using sale proceeds of $f$ that are not part of the loss event, anyway I wrote it down above. }
\end{itemize}
\end{proof}

Given that each agent $i$ of the two remaining agents can partition the remaining items to two disjoint bundles with value $\frac{3}{4} \cdot MMS_i$, we can apply Theorem~\ref{thm:2EFX} with the remaining items and two remaining agents, and this guarantees each of them $\frac{3}{4}MMS_i$.  

%\ufc{Finish the proof. When two agents remain, the MMS for each agent is at least $\frac{3}{4}$-MMS. We can use Theorem~\ref{thm:2agents} to conclude that ...}

%\ygc{Write the opening paragraph here as a proof. }

\section{Existence Results for Envy-Free Notions}
\label{sec:ef}

%(Related definition:  EFM in [Bei, X.; Li, Z.; Liu, J.; Liu, S.; and Lu, X. Fair Division of Mixed Divisible and Indivisible Goods. Artifcial Intelligence, 293: 103436.])

We follow our envy notions definitions of Definition~\ref{def:EF}.
We distinguish between settings in which all goods are sellable, and those in which some goods are sellable and some are not. 

If all goods are sellable, then an EF allocation exists: sell all goods and divide the money equally. Even if not all goods are sellable, in essence, the EFX existence and SEFX existence situation are similar:%, following a result of \cite{bei2021fair}. \ufc{Ambiguous wording. Did \cite{bei2021fair} prove this lemma in our setting? Clarify.}

\begin{lemma}
\label{lem:extending_indivisible_EFX}
    If for some number of agents $n$ and additive valuations, an EFX allocation always exists, then the same is true for SEFX, and vice versa. The same holds for EF1 and SEF1. 
\end{lemma}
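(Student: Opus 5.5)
The converse direction is the easy one, so I would do it first. Given an instance without selling, set $p\equiv 0$. Then $\bar v_i=v_i$, and every $P_j$ is $0$. The SEFX condition reads $v_i(A_i)\ge v_i(A_j\setminus\{g\})$ for every $g\in A_j$, which is exactly EFX, and likewise SEF1 becomes EF1. The one subtlety is that an SEFX outcome might ``sell'' some zero-price goods, so it is only a partial allocation of the original goods. I would handle this in one of two ways: read the hypothesis for partial allocations, or show we may assume a good with $p(g)=0$ is never sold. For the latter, un-selling such a good changes no money amount, and the good could be handed out via an envy-cycle style reassignment. I expect this bookkeeping to need some care.

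For the main direction (EFX always exists $\Rightarrow$ SEFX always exists), the plan is to reduce to an auxiliary instance without selling. Split the goods into $\items_0=\{g:p(g)=0\}$ and $\items_+=\{g:p(g)>0\}$. I would give every agent the valuation $\bar v_i$ on all goods and compute an EFX allocation $A_1,\dots,A_n$ of $\items$ under $(\bar v_i)_i$, which exists by hypothesis. For any bundle $A_j$ with $P_j=0$, EFX gives $\bar v_i(A_i)\ge \bar v_i(A_j\setminus\{g\})$, which is exactly SEFX whenever $p(g)=0$. The deficit $p(g)$ appears only for goods in $\items_+$. So the key step is to post-process: sell goods of $\items_+$ and redistribute money until every bundle either has positive money and is unenvied, or has no money and passes the SEFX test.

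To carry out this post-processing I would follow the EFM-style extension of \cite{bei2021fair}. Repeatedly find a bundle $A_j$ with $P_j=0$ and a good $g\in A_j\cap\items_+$ violating the SEFX condition. Then sell $g$ and pour its proceeds, as a divisible resource, into the bundles at the ``bottom'' of the envy graph. This is done along a moving-knife/water-filling rule, so that a bundle receiving money is never envied afterwards. Money is common-valued, so envy toward bundles with money can be controlled by adding money only to sources of the envy graph, with envy cycles rotated first as in the standard envy-cycle argument.

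A potential function should show termination: the number of unsold goods of $\items_+$ decreases with each sale, and each sale happens at most once. The same scheme with EF1 in place of EFX gives the SEF1 claim, since only one good per comparison is involved. The step I expect to be the main obstacle is keeping the SEFX property for the money-free bundles during water-filling. Adding money to source bundles raises others' values, so new violations cannot appear for them, but the EFX guarantees among money-free bundles must be preserved through cycle rotations. If this fails, I would instead seed the EFX instance with modified valuations $v_i'(g)=\bar v_i(g)-p(g)$ on $\items_+$, so that selling is built into the comparison.
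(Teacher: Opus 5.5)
Your converse follows the paper's idea (with no goods sellable, SEFX is just EFX), but the main direction has a genuine gap. You start from an EFX allocation of \emph{all} goods under $\bar v_i$ and sell positive-price goods only on demand. When $g$ is sold out of $A_j$, the owner $j$ loses $\bar v_j(g)$. This loss can far exceed $p(g)$, and even exceed all the money that will ever exist. Agent $j$ may then violate SEFX toward a money-free bundle through price-$0$ goods. Your loop never looks for such violations, pouring money cannot repair them, and no envy cycle need exist.

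Concretely, take three agents, a good $g$ with $p(g)=10$, and goods $a,b,c,d$ of price $0$. Let $v_1(d)=v_1(g)=10$, $v_1(a)=5$; $v_2(g)=100$, $v_2(a)=1$, $v_2(b)=v_2(c)=50$; $v_3(b)=v_3(c)=50$, $v_3(g)=10$; all other values are $0$, so $\bar v_i=v_i$. The allocation $A_1=\{d\}$, $A_2=\{g,a\}$, $A_3=\{b,c\}$ is EFX. Its only SEFX violation is agent~1 toward $A_2$ via $g$ (since $5+10>10$), so your procedure sells $g$. From then on:
\begin{itemize}
\item Agent~2 can only ever hold $\{a\}$ or $\{d\}$ plus at most $10$ money, i.e.\ value at most $11$.
\item SEFX toward $A_3$ demands at least $50$ for agent~2 (or $100$ if $A_3$ receives money).
\item Agent~3 values every other bundle at most $10<100$, so she is never on an envy cycle and keeps $A_3$.
\item No positive-price good is left to sell.
\end{itemize}
The process is stuck, although SEFX allocations exist here (e.g.\ $\{a,d\}$ with all the money, $\{b\}$, $\{c\}$).

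The missing idea is to sell \emph{every} positive-price (sellable) good at the outset, which SEFX never penalizes. Then every allocated good has $p(g)=0$, so the SEFX test toward a money-free bundle is literally the EFX test. An EFX allocation of the remaining goods exists by hypothesis. The extension of \cite{bei2021fair} then adds the proceeds, as a common-valued divisible good, only to existing bundles, while keeping money-bearing bundles unenvied. This is the paper's proof.

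On the converse, the subtlety you raise is real if even price-$0$ goods may be sold, but your remedies cannot resolve it. In that model, selling everything is always EF, so SEFX existence is trivial; no bookkeeping can turn that into EFX existence, since doing so would settle EFX existence. Separately, adding a leftover good to an unenvied bundle does not preserve EFX in general. The lemma is meant for the model introduced just before it, where some goods are non-sellable. Taking all goods non-sellable, nothing can be sold and SEFX coincides with EFX (and SEF1 with EF1).
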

\begin{proof}
    We prove the lemma for EFX and SEFX, and the proof for EF1 and SEF1 is the same. 
    
    The SEFX $\implies$ EFX implication is straight-forward as remarked above. In the other direction, \cite{bei2021fair} show that for a mix of divisible and indivisible goods, if an EFX allocation exists for the indivisible goods, it can be extended to an EFX allocation of divisible and indivisible goods, by only adding divisible cake to the existing allocations. Thus, if an EFX allocation is guaranteed to exist for some $n$, we can find an EFX allocation for all the goods that are not sellable, sell all the sellable goods, and extend the EFX allocation to an EFX allocation of divisible and indivisible goods using the sale proceeds from the sellable goods.
Any indivisible good $g$ allocated in this setting has $p(g) = 0$ (since it is not sellable), and so the SEFX condition is the same as that of an EFX allocation of divisible and indivisible goods. 
\end{proof}

However, an allocation that only satisfies envy-freeness properties might have very little value for the agents. Hence, we seek allocations that satisfy a relaxed version of EF, but also guarantee good value to each agent.

As a useful technical step, we now define $\epsilon$ notions of envy, but our interest is in the case when $\epsilon = 0$. We consider the more general definition since later in the discussion, it helps to consider the $\epsilon$ notions (with the end-goal of proving existence for the $\epsilon=0$ case). 

\begin{definition}
\label{def:epsilon_EF}
    An allocation $(A_1, P_1), \ldots, (A_n, P_n)$ is {\em envy free} (EF) if for every agent $i$ and other agent $j$, {$\bar{v}_i(A_i) + P_i \ge \bar{v}_i(A_j) + P_j$.} %\ufc{what is $u_i$?} 
    {An allocation satisfies the following relaxations of EF if the above envy free condition holds whenever $P_j > 0$, and either the EF condition or the following conditions hold if $P_j = 0$.}
    \begin{enumerate}
        \item $\epsilon$-SEF1 (strong/sellable envy free up to one good). $\bar{v}_i(A_i) + P_i +  \epsilon \ge \bar{v}_i(A_j \setminus \{g\}) + p(g)$ for some good $g \in A_j$.
        % EFL RELATED
        %\item $\epsilon$-SEFL (strong/sellable envy free up to one less preferred good). Either $|C_j| = 1$ and $\bar{v}_i(C_j) + \epsilon \ge p(C_j)$, or $u_i(C_i) + \epsilon \ge \bar{v}_i(C_j \setminus g) + p(g)$ for some item $g \in C_j$ satisfying $u_i(C_i) + \epsilon \ge \bar{v}_i(g)$. Since $\bar{v}_i(C_j) \ge p(C_j)$ by definition, the definition is equivalent to strengthening SEF1 with the additional requirement that either  $|C_j| = 1$, or $u_i(C_i) + \epsilon \ge \max_{g \in C_j} \bar{v}_i(g)$. 
         \item $\epsilon$-SEFX (strong/sellable envy free up to any good). $\bar{v}(A_i) + P_i + \epsilon \ge \bar{v}_i(A_j \setminus \{g\}) + p(g)$  for every good $g \in A_j$.
    \end{enumerate}
\end{definition}

The following lemma is technically useful in simplifying some of the proofs. Due to the existence of divisible sales-proceeds, envy might be by an infinitesimally small amount.  By considering $\epsilon$-SEFX rather than SEFX, iterative algorithms that we design make finite progress in each step rather than infinitesimally small progress, allowing us to conclude that they terminate in finite time. %\ufc{remove: (Later, in our later iterative algorithms, we only guarantee $\epsilon$-SEFX at each step, as it allows us to ensure $\epsilon$ progress at each step).}  

\begin{lemma}
    \label{lem:limit}
    For a given allocation instance with sellable goods and a given value of $\rho$ and individual share values $s_i$, suppose that for every $\epsilon > 0$ there is an allocation (a partial allocation, respectively) that is $(\rho - \epsilon) \cdot s_i$ and $\epsilon$-SEFX. Then there also is an allocation (partial allocation, respectively) that is $\rho \cdot s_i$ and SEFX. %The above also holds if SEFX is replaced by SEFL.
\end{lemma}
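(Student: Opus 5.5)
I will prove Lemma~\ref{lem:limit} by a compactness argument over a finite combinatorial skeleton. Fix a sequence $\epsilon_k \to 0$, and for each $k$ let $X^{(k)}$ be an allocation (or partial allocation) that is $(\rho-\epsilon_k)\cdot s_i$ and $\epsilon_k$-SEFX. Each allocation consists of a discrete part and a continuous part. The discrete part records which goods are sold (the set $\sellable$), which goods each agent keeps ($A_1,\ldots,A_n$), and, for partial allocations, which goods are discarded. The continuous part is the payment vector $(P_1,\ldots,P_n)\in\mathbb{R}_{\ge 0}^n$. There are only finitely many discrete configurations, so by passing to a subsequence we may assume they are all the same, say $(\sellable,A_1,\ldots,A_n)$. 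The payments then satisfy $\sum_i P^{(k)}_i \le p(\sellable)$, with equality for full allocations, so they lie in a compact set. Passing to a further subsequence, $P^{(k)} \to P^*$, and the limit still satisfies feasibility conditions~(\ref{eq:sale_proceeds_feasibility}) and~(\ref{eq:all_non_sold_goods_allocated}), since these are closed conditions.

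Next I check that the share guarantee passes to the limit. Agent $i$'s utility $\bar{v}_i(A_i)+P_i$ (or $v_i(A_i)+P_i$, depending on the convention) is continuous in $P$. Each term satisfies $\ge (\rho-\epsilon_k)s_i$, so the limit satisfies $\ge \rho s_i$. The envy inequalities work the same way. For each pair $i,j$, every $X^{(k)}$ satisfies either the EF inequality with slack $\epsilon_k$, or, when $P^{(k)}_j=0$, the SEFX inequality for every $g\in A_j$ with slack $\epsilon_k$. These are non-strict inequalities in $P$ with an additive $\epsilon_k$, so they pass to the limit with $\epsilon=0$.

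The main obstacle is the case distinction on $P_j>0$ versus $P_j=0$, which is not a closed condition. A sequence with $P^{(k)}_j = 0$ and only the SEFX inequality holding has limit $P^*_j=0$, so that case is fine. The problem is the reverse direction: $P^{(k)}_j>0$ for all $k$ but $P^*_j=0$. In that case the EF inequality holds along the sequence, so in the limit we get EF, which is stronger than what SEFX demands when $P^*_j=0$. The remaining situation is $P^*_j>0$, where EF is required. Then $P^{(k)}_j>0$ for all large $k$, so EF with slack $\epsilon_k$ held for those $k$, and EF holds in the limit. To make this rigorous, I would further refine the subsequence so that, for each ordered pair $(i,j)$, the sign pattern $[P^{(k)}_j>0]$ is constant; this is possible because there are finitely many patterns. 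Then I handle the two cases as above.

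A minor point is that the $\epsilon$-SEFX definition may be read as requiring a partial allocation. In that case the limit is a partial allocation with the same discarded set, as claimed, and the full-allocation version follows identically.
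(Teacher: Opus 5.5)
Your proof is correct and follows essentially the same route as the paper's. Both fix one of the finitely many $(n+2)^m$ discrete configurations (each good sold, kept by some agent, or discarded) and use that the payment vector then ranges over a closed, bounded set; the paper phrases this as the minimal $\epsilon$ within each class being attained by an optimization over a compact set, while you use a convergent subsequence. Your refinement by the sign pattern of $P^{(k)}_j$ is more careful than the paper. The requirement ``EF whenever $P_j>0$'' makes the feasible set a finite union of polyhedra rather than the single linear program the paper asserts, and your case split is exactly what is needed to pass to the limit correctly.
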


\begin{proof}
    We prove the lemma for partial allocations and SEFX. %The proof for the SEFL case is similar.

%\ufc{This is a sketch. Should possibly be rewritten for the full version.}
In any partial allocation, each good is either allocated to a single agent, or sold, or not allocated at all. Hence, there are $(n+2)^m$ classes of partial allocations, conditioned on the decision for each good. Within a class, one need only distribute the money so as to minimize $\epsilon$. Refer to the minimum attainable value of $\epsilon$ within a class $c$ as $\epsilon_c$. Minimizing $\epsilon_c$ (subject to $\epsilon_c \ge 0$) is a solution to a linear program (as both the share guarantee and SEFX constraints are linear in the sale proceeds and $\epsilon_c$) over a closed and bounded set. As such, the optimal value of $\epsilon_c$ is either~0 or bounded away from~0. Consequently, the optimal value of $\epsilon$ (which is the minimum over $\epsilon_c$ for finitely many indices $c$) is either~0 or bounded away from~0. Under the terms of the lemma, it must be~0.
\end{proof}

%The following proposition is based on known results from the setting with no sellable items~\cite{}. These results hold also for general monotone valuations. We state our proposition for additive valuations with sellable items, as this is the focus of this paper.

%\begin{proposition}
 %   \label{pro:fullEFL}
 %   Consider a setting with indivisible sellable items and additive valuation. Then every partial allocation $A = A_1, \ldots, A_n$ can be extended in polynomial time to a full allocation $A' = A'_1, \ldots, A'_n$ in which $v_i(A'_i) \ge v_i(A_i)$ for every agent $i$, with the following additional properties.

  %  \begin{enumerate}
  %  \item If $A$ is SEFL, then $A'$ is SEFL.
  %      \item If $A$ is SEFX and all items not allocated by $A$ are sellable, then $A'$ is SEFX.
  %  \end{enumerate}
%\end{proposition}

%\begin{proof}
 %   \ufc{Fill in the proof. For second part: sell all remaining items and pour the money into agents that no one envies, doing cycle eliminations when needed (including if $i$ envies $j$ but $j$ is indifferent to $i$). For the first part, handle first the items that are not sellable. Note that to get SEFL and not just SEF1, one may need tricks similar the case with no sellable items~\cite{feigeRMMS}.}
%\end{proof}

\section{Algorithms and Missing Proofs for TPS Approximation}
\label{app:tps}

Let $f(t) = \frac{1}{n}\sum_{j\in \items} \max \{p(j), \min \{v(j), t\}\}$.

The following claims is useful:

\begin{claim}
\label{clm:tps_median_value}
    If for some $t$, $f(t) \geq t$, then $TPS(\items,v,p, n) \geq t$. 
\end{claim}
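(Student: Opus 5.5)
The plan is to study the function $g(t) = f(t) - t$ and apply the intermediate value theorem. By Definition~\ref{def:tps_selling}, $TPS(\items,v,p,n)$ is the largest $t$ with $f(t) = t$, that is, the largest zero of $g$. So it suffices to show that whenever $g(t) \ge 0$ for some $t$, there is a zero of $g$ at some $t^* \ge t$. Then the largest zero is at least $t^* \ge t$.

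First I will record the needed properties of $f$. Each summand $\max\{p(j), \min\{v(j), t\}\}$ is continuous and nondecreasing in $t$. It is also bounded above by $\max\{p(j), v(j)\} = \bar{v}(j)$. Hence $f$ is continuous, nondecreasing, and bounded above by $PS = \frac{1}{n}\sum_j \bar{v}(j)$. Consequently $g$ is continuous on $\mathbb{R}_{\ge 0}$. Also, for any $T > PS$ we have $g(T) \le PS - T < 0$.

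Next, fix $t$ with $f(t) \ge t$, so $g(t) \ge 0$. If $g(t) = 0$, take $t^* = t$. Otherwise $g(t) > 0$, which forces $t < f(t) \le PS$. Choose $T > PS$, so that $g(T) < 0$. By the intermediate value theorem there is $t^* \in (t, T)$ with $g(t^*) = 0$, i.e.\ $f(t^*) = t^*$.

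The one subtle step is that the definition speaks of the \emph{largest} fixed point, so I must check that it exists. The zero set of $g$ is closed, since $g$ is continuous. It is bounded above by $PS$, since $g < 0$ beyond $PS$. It is nonempty, since $t^*$ belongs to it. Hence the zero set has a maximum, and that maximum is at least $t^* \ge t$, which proves the claim. As a by-product this also gives the $TPS \le PS$ half of Lemma~\ref{lem:mms_tps_ps}.
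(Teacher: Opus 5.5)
Your proof is correct and follows the paper's route: apply the intermediate value theorem to $g(x)=f(x)-x$ between $t$, where $g(t)>0$, and a point beyond $PS$, where $g<0$. Your version is a bit more self-contained than the paper's. You derive $f\le PS$ directly instead of citing Lemma~\ref{lem:mms_tps_ps}, whose $MMS\le TPS$ half itself relies on this claim, and you check that the largest zero of $g$ actually exists.
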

\begin{proof}
    If $f(t) = t$, then $TPS(\items,v,p, n) \geq t$ holds by the definition of TPS as the maximal $t$ value satisfying this equality. If $f(t) \neq t$, notice that $g(x) = f(x) - x$ is a continuous function with $g(t) > 0$. 

    By Lemma~\ref{lem:mms_tps_ps}, $g(t') < 0$ for any $t' > PS(\items,v,p, n)$.  Thus, there must be some $t'$ so that $t' > t$ and $g(t') < 0$. By the median value theorem, there is some $t < t'' < t'$ with $g(t'') = 0$, and so $TPS(\items, v,p, n) \geq t'' > t$. 
\end{proof}

\MMSTPSPS*

\begin{proof}

($TPS \leq PS$)

    Let $t' > PS(\items,v,p, n)$. Then, 
    
    $$f(t') = \frac{1}{n}\sum_{j\in \items} \max \{p(j), \min \{v(j), t'\}\} \leq \frac{1}{n}\sum_{j\in \items} \max \{p(j), v(j)\} < t'.$$

    Since the TPS is the maximal value so that the above holds as equality, the TPS must be at most the proportional share. 

($MMS \leq TPS$)

    Consider an MMS partition $C_1, \ldots, C_n$. Let $\sellable_C$ be the goods sold in the partition, and $V_C$ the goods kept. Every good kept $j$ contributes at most $$\min \{v(j), MMS\}$$ to their bundle (and they are included in a single bundle), and every sold good contributes at most $p(j)$ to all bundles. If we sum over the $n$ MMS bundles, we get $n \cdot MMS \leq \sum_{j\in \sellable_C} p(j) + \sum_{j\in V} \min \{v(j), MMS\}$, and thus (dividing by $n$) we get:
    
    \[
    \begin{split}
    MMS & \leq \frac{1}{n} \left( \sum_{j\in \sellable_C} p(j) + \sum_{j\in V_C} \min \{v(j), MMS\} \right) \\
    & \leq \frac{1}{n}  \sum_{j\in \sellable_C} \max \{p(j),\min \{v(j), MMS\} \}  \end{split}
\]

It then follows that $TPS(\items,v,p, n) \geq MMS$ by Claim~\ref{clm:tps_median_value}. 
\end{proof}

%\ufc{I do not understand the purpose of next claim. Isn't this simply part of the definition of the TPS? Maybe you meant to show that that there is a threshold value for $t$, with all values above it bad and all values below it good?}
%\ygc{It's not a part of the definition of the TPS because the definition requires equality. It could be that $f(t) > t$ for some $t$ but then there is not higher $t'$ so that $f(t') = t'$, and the claim is meant to show there is always such $t'$.}

%\ygc{Add some illustrating drawing of the above claim}

%We can now show the following:

%\ufc{Seems like a statement that can be made, with proofs moved to the appendix}

%\ufc{I think that the earlier comment of mine above (now commented out) was misunderstood. I meant the the above lemma and claim should not be in the main part of the paper. For the lemma below, refer the reader to the appendix, or provide a very short proof.}
%\ygc{It seems like Lemma 5.8 and Claim 5.9 lead to Lemma 5.10, but they don't (they're more here for Claim 5.11), and also there's no proof, so why did we need them..? Make the order and writing less confusing. Also, why repeat the PS claim if it's in Lemma 5.8}

The TPS is efficiently computable, by the following simple algorithm.

%Let $f(t) = \frac{1}{n}\sum_{j\in M} \max\{p(j), \min\{v(j), t\}\}$. 
Consider the set of (at most $2m+1$) points $S = \{0\} \cup \bigcup_{j\in \items} \{v(j), p(j) \}$, and sort the points in decreasing order $s_1, \ldots, s_{|S|}$. 
Find the first point $k$ so that $f(s_k) \geq s_k$. 

\begin{claim}
\label{clm:s_k}
    If $k=1$, then $TPS(\items,v,p, n) = PS(\items,v,p, n)$. If $k>1$, then
    $s_k \leq TPS(\items,v,p, n) < s_{k-1}$. 
\end{claim}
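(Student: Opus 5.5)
The plan is to study $g(t) = f(t) - t$, where $f(t) = \frac{1}{n}\sum_{j\in\items} \max\{p(j), \min\{v(j), t\}\}$. First I establish the shape of $f$. Each summand $t \mapsto \max\{p(j),\min\{v(j),t\}\}$ is continuous, nondecreasing and piecewise linear, with slope $0$ or $1$. Its breakpoints all lie in $\{v(j), p(j)\}$. Hence $f$ is continuous, nondecreasing and piecewise linear on $[0,\infty)$, with breakpoints contained in $S$. On each interval between consecutive points of $S$, $f$ is affine with slope $c/n$, where $c$ is the number of goods with $p(j) < t < v(j)$ on that interval.

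Consequently $g$ is affine on each such interval, with slope $c/n - 1$. Since $c \le m$, this slope can be positive when $m > n$, so $g$ need not be monotone. I therefore plan to argue only through the definition of TPS as the \emph{largest} root of $g$, combined with Claim~\ref{clm:tps_median_value}.

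\textbf{Case $k=1$.} Here $s_1 = \max S$, and for all $t \ge s_1$ we have $f(t) = \frac{1}{n}\sum_j \max\{p(j),v(j)\} = PS$, a constant. The condition $f(s_1) \ge s_1$ gives $PS \ge s_1$, so $t = PS$ lies in the constant region and satisfies $f(PS) = PS$. For $t > PS$ we have $g(t) = PS - t < 0$, so no larger root exists. Thus $TPS = PS$.

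\textbf{Case $k>1$.} The lower bound $TPS \ge s_k$ is immediate from Claim~\ref{clm:tps_median_value}, since $f(s_k) \ge s_k$.

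For the upper bound, I need $g(t) \neq 0$ for all $t \ge s_{k-1}$. By minimality of $k$, $g(s_j) < 0$ for every $j < k$. On $[s_1,\infty)$, $g$ is strictly decreasing with $g(s_1) < 0$, so $g<0$ there. On each interval $[s_{j+1}, s_j]$ with $j+1 \le k-1$, $g$ is affine and negative at both endpoints, hence negative throughout. Therefore $g < 0$ on all of $[s_{k-1},\infty)$, giving $TPS < s_{k-1}$.

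This also shows that the TPS lies in $[s_k, s_{k-1})$. Since $g$ is affine there with $g(s_k) \ge 0 > g(s_{k-1})$, it has a unique root in this interval, which can be computed by solving a linear equation. This gives the efficient algorithm.

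The main obstacle is the non-monotonicity of $g$: one cannot simply say that $g$ is negative above some threshold. The fix is the piecewise-affine observation, which lets me propagate negativity from the breakpoints to whole intervals. A further technicality is that $0 \in S$ and $f(0) \ge 0$, which guarantees that $k$ always exists.
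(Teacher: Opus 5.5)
Your proof is correct and follows essentially the same route as the paper. The lower bound $s_k \le TPS$ comes from Claim~\ref{clm:tps_median_value}, and the upper bound uses that $f$ is affine between consecutive points of $S$, so the negativity of $g(t)=f(t)-t$ at $s_1,\dots,s_{k-1}$ extends to all of $[s_{k-1},\infty)$. You are slightly more explicit than the paper: you treat the ray $[s_1,\infty)$ directly instead of relying implicitly on $TPS\le PS=f(s_1)<s_1$, and you argue from the signs at the endpoints of each affine piece rather than by contradiction.
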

\begin{proof}
    By Claim~\ref{clm:tps_median_value}, the TPS is at least $s_k$. 

    If $k=1$, then for $t = s_1$, it holds that $t \geq v(j)$ for all $j\in \items$. Thus, $f(s_1) = \frac{1}{n} \sum_{j\in \items} \bar{v}(j)$, and $f$ is weakly monotonically increasing in $t$. Thus, $f(PS(\items,v,p, n)) \geq f(s_1) = PS(\items,v,p, n)$, and so 
    $$PS(\items,v,p, n) \stackrel{\text{Claim~\ref{clm:tps_median_value}}}{\leq} TPS(\items,v,p, n) \stackrel{\text{Lemma~\ref{lem:mms_tps_ps}}}{\leq} PS(\items,v,p, n). $$

    If $k>1$, then for any $k' < k$, $f(s_{k'}) < s_{k'}$. Thus, none of the points can be the TPS (as it does not satisfy the equality condition), but also the interior intervals cannot be the TPS, as within each interval $f$ is linear since for each $j$, $\max \{p(j), \min \{v(j), t\}\}$ remains either equal to $p(j), v(j),$ or $t$, and $f$ is the sum of these linear expressions. If there was an interior point so that $f(t) = t$, and given $f(s_{k'}) < s_{k'}$, that implies that $f(t) - t$ is increasing within the interval, and so it must be that $f(s_{k'+1})\geq s_{k'+1}$, in contradiction. 
\end{proof}

Thus, if $k = 1$ we can directly compute the TPS by Claim~\ref{clm:s_k}, and if $k>1$, we can pinpoint the interval where the TPS resides. Since $f$ takes a linear form within the interval $[s_k, s_{k-1})$, we can solve the one-variale linear equation $f(t) = t$ with the appropriate coefficients, which we can determine by choosing a point $t \in (s_k, s_{k-1})$, and seeing whether $\max\{p(j), \min \{v(j), t\}\}$ equals $p(j), v(j),$ or $t$, for every $j\in \items$. 

\subsection{TPS Approximation}

\begin{algorithm}[htb]
    \SetAlgoLined
\DontPrintSemicolon
\KwIn{Valuation functions $v_1, \ldots, v_n$ and market price function $p$ over goods $\items$, number of agents $n$}
\KwOut{An allocation $A,P$ with at least $\frac{n}{2n-1} \cdot TPS$ of value for each agent}

$AVAIL = \items$

$ACTIVE = [n]$

\While{$\exists j\in AVAIL, i\in ACTIVE,  p(j) \geq \frac{n}{2n-1}TPS_i$} {

    Let $i^* \in \argmin_{i\in ACTIVE} TPS_i$.

    Assign $\frac{n}{2n-1}TPS_{i^*}$ in sale-proceeds from selling $j$ to agent $i^*$.

    $D'_j = p(j) - \frac{n}{2n-1}TPS_{i^*} $ // The remaining sale proceeds from $j$, which we treat as a virtual good.

%\ufc{Is a step missing here saying that other agents can take parts of $D'_j$. Also, my understanding is that there is no need to join money from different $j$. Right?} \ygc{$D'_j$ is added as an ''item'' to AVAIL, so if another agent sees enough value in it, they can take it in the next iteration. We don't mix money from different $j$-s: Either pull money from a single $j$, or treat them as indivisible goods}

    $AVAIL = AVAIL \setminus \{j\} \cup \{D'_j\}, ACTIVE = ACTIVE \setminus \{i^*\}$. 

} 

\While{$\exists j\in AVAIL, i\in ACTIVE,  v_i(j) \geq \frac{n}{2n-1}TPS_i$} {

    Assign $j$ to agent $i$.

    $AVAIL = AVAIL \setminus \{j\}, ACTIVE = ACTIVE \setminus \{i\}.$

}

\While{ $|ACTIVE| \neq 0$} {

    Bag-fill complete goods into a bundle $B$, until some agent $i\in ACTIVE$ values it at least $\frac{n}{2n-1} TPS_i$. Assign $B$ to agent $i$ (if there are several agents, choose one arbitrarily). 

    $AVAIL = AVAIL \setminus B, ACTIVE = ACTIVE \setminus \{i\}$. 
}
%Allocate any remaining goods in AVAIL to an arbitrary agent. 

 \caption{APX-TPS}
 \label{alg:APX-TPS}
\end{algorithm}

Consider a good $j$ that was sold and some of its sale proceeds were allocated. Let $D'_j$ be the remaining sale proceeds. Then, we can treat $D'_j$ as a new \textit{virtual indivisible good}. The good has the value $D'_j$ for all agents. We can ``sell'' it, which means taking some of its sale proceeds, or we can ``allocate'' it as a complete good (during bag-filling).

\halfTPS*

\begin{proof}
Any agent $i$ that is allocated a bundle gets at least $\frac{n}{2n-1}TPS_i$. This is since it receives a bundle for the first time either in the large goods selling loop (line 3), the large good keeping loop (line 9), or in the bag-filling loop (line 13), whereas in each of these cases it has a value of at least $\frac{n}{2n-1}MMS_i$ for the bundle.

We show that every agent gets allocated a bundle during the run of the algorithm. 
\begin{lemma}
    For every active agent $i$, after $k$ agents are allocated a bundle, it values the remaining goods at least $(n-\frac{2nk}{2n-1})TPS_i$. 
\end{lemma}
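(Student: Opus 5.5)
The proof is by induction on $k$, tracking for a fixed active agent $i$ the quantity $R = \sum_{g \in AVAIL} \bar{v}_i(g)$, where sold-and-partially-allocated goods count through their virtual remainders $D'_j$. Write $\tau = TPS_i$ and $\beta = \frac{n}{2n-1}$, and let $\hat{v}_i(g) = \max\{p(g), \min\{v_i(g), \tau\}\}$. The definition of TPS (Definition~\ref{def:tps_selling}) gives $\sum_{g} \hat{v}_i(g) = n\tau$, and $\hat{v}_i \le \bar{v}_i$. It is cleaner to prove the stronger statement that the remaining $\hat{v}_i$-value (with virtual goods valued at their money) is at least $n\tau - k\cdot 2\beta\tau = (n - \frac{2nk}{2n-1})\tau$. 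This gives the lemma, since $\bar{v}_i \ge \hat{v}_i$. The base case $k=0$ holds with equality. For the inductive step, I will show that each allocation to another agent $j$ removes at most $2\beta\tau$ of $\hat{v}_i$-value from the pool.

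\emph{Selling loop.} When agent $i^*$ gets $\beta TPS_{i^*}$ out of $p(g)$, the pool changes from $\hat{v}_i(g)$ to $D'_g = p(g) - \beta TPS_{i^*}$. Since $\hat{v}_i(g) \ge p(g)$, the loss is $\hat{v}_i(g) - p(g) + \beta TPS_{i^*}$. Because $i^*$ minimizes TPS among active agents, $\beta TPS_{i^*} \le \beta\tau$. The extra term is $\hat{v}_i(g) - p(g) \le \max\{0, \min\{v_i(g),\tau\} - p(g)\}$. This is at most $\tau - \beta\tau \le \beta\tau$ whenever $p(g) \ge \beta\tau$, using $\beta \ge \frac{1}{2}$. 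If instead $p(g) < \beta\tau$, I need to argue directly. The loop only fires when some active agent $i'$ has $p(g) \ge \beta TPS_{i'}$, and then $i^*$ (of lowest TPS) satisfies $\beta TPS_{i^*} \le p(g)$. The loss therefore becomes $\hat{v}_i(g) - p(g) + \beta TPS_{i^*} \le \hat{v}_i(g) \le \tau \le 2\beta\tau$. In every case the loss is at most $2\beta\tau$. The same argument covers later takes from a virtual good, where the loss is at most $\beta TPS_{i^*} \le \beta\tau$.

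\emph{Large-item loop.} A good given to another agent costs $i$ at most $\hat{v}_i(g) \le \tau \le 2\beta\tau$.

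\emph{Bag-filling.} After the first two loops, every remaining item $g$ (real or virtual) has $\bar{v}_i(g) < \beta\tau$ for every active $i$: virtual goods have value below $\beta\tau$ by the loop exit condition, and real goods satisfy $p(g), v_i(g) < \beta\tau$. Thus $\hat{v}_i = \bar{v}_i$ on remaining goods. When a bag is given to some $j \ne i$, agent $i$ valued the bag below $\beta\tau$ just before the last item was added, since otherwise $i$ could have been served at that point (ties go to an arbitrary agent, but the bag reached threshold only with the last item). The last item has value below $\beta\tau$, so the bag is worth less than $2\beta\tau$. Summing over the $k$ allocations completes the induction. Bag-filling can always proceed, because with $k \le n-1$ the remaining value is at least $(n - \frac{2n(n-1)}{2n-1})\tau = \frac{n}{2n-1}\tau = \beta\tau$, so the last active agent can fill a bag.

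\textbf{Main obstacle.} The delicate step is the selling loop, where $i$ may value $g$ far more than $p(g)$ and $i^*$ extracts money. I must make sure the lost surplus $\hat{v}_i(g) - p(g)$ plus the extracted money stays within $2\beta\tau$. Capping values at $\tau$ via $\hat{v}_i$ is exactly what makes this work. I also have to confirm that ``lowest TPS'' is taken among active agents, so that $TPS_{i^*} \le \tau$.
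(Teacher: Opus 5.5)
Your proof is correct and follows essentially the same route as the paper. Each allocation to another agent is charged a loss of at most $\frac{2n}{2n-1}TPS_i$ against the truncated valuation $\max\{p(g),\min\{v_i(g),TPS_i\}\}$ (whose total is $n\cdot TPS_i$), with the selling loop split according to whether $p(g)\ge\frac{n}{2n-1}TPS_i$ exactly as in the paper's classes $K_1$ and $K_2$, and bag-filled bundles bounded by $\frac{2n}{2n-1}TPS_i$. The only difference is presentational: the paper aggregates these bounds in one chain of inequalities, while your step-by-step induction handles repeated draws from virtual goods and the large-item loop more explicitly (your bound $\hat v_i(g)\le\tau$ there uses $p(g)<\frac{n}{2n-1}\tau$, which comes from the exit condition of the selling loop).
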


\begin{proof}
Let $K_1$ be the agents allocated a bundle in the first loop out of goods $g$ so that agent $i$ has $p(g) \geq \frac{n}{2n-1}TPS_i$, $K_2$ agents allocated a bundle in the first loop out of goods agent $i$ has $p(g) < \frac{n}{2n-1}TPS_i$, and $K_3$ agents allocated a bundle in the second loop. We have $k = \sum_{t=1}^3 |K_t|$. 

Notice the following about goods in $K_1$: It holds that $TPS_i \leq \frac{2n}{2n-1}TPS_i = \frac{n}{2n-1}TPS_i + p(g)$. Then,
\begin{equation}
\label{eq:k1}
\max \{p(g), \min \{v_i(g), TPS_i\} \} \leq \max \{p(g), TPS_i\} \leq p(g) + \frac{n}{2n-1}TPS_i + p(g).
\end{equation}

Notice the following about goods in $K_2$: If $p(g) > v_i(g)$, then 

\[
\begin{split}
& \max \{p(g), \min \{v_i(g), TPS_i\} = \max \{p(g), TPS_i\} \\
& \leq \max \{\frac{n}{2n-1}\cdot TPS_i, TPS_i\} \leq TPS_i.
\end{split}
\]
If $p(g) \leq v_i(g)$, then $\max \{p(g), \min \{v_i(g), TPS_i\} = \min \{v_i(g), TPS_i\} \leq TPS_i$. To conclude, in both cases,
\begin{equation}
\label{eq:k2}
\max \{p(g), \min \{v_i(g), TPS_i\} \leq TPS_i.\end{equation}

We can write, by the definition of $TPS_i$:

\begin{align}
\label{eq:TPS}
& n \cdot TPS_i \nonumber \\
& = \sum_{g\in \items} \max \{p(g), \min \{v_i(g), TPS_i\} \} \nonumber\\
& \stackrel{\text{Eq.~\ref{eq:k2}}}{\leq} \sum_{g\in \items \setminus \cup_{j\in K_2} B_j} \max \{p(g), \min \{v_i(g), TPS_i\} \} + \sum_{g\in \cup_{j\in K_2} B_j} TPS_i  \nonumber \\
& \leq   \sum_{g\in \items \setminus \cup_{j\in K_2} B_j} \max \{p(g), \min \{v_i(g), TPS_i\} + |K_2| TPS_i \nonumber\\
& \stackrel{\text{Eq.~\ref{eq:k1}}}{\leq} \sum_{g\in \items \setminus \cup_{j\in (K_1 \cup K_2) } B_j} \max \{p(g), \min \{v_i(g), TPS_i\} + |K_1| \cdot \frac{n}{2n-1}TPS_i \\
& + \sum_{g\in \cup_{j\in K_1} B_j} p(g) + |K_2| TPS_i \nonumber\\
& = \sum_{g\in \items \setminus \cup_{j\in (K_1 \cup K_2) } B_j} \max \{p(g), \min \{v_i(g), TPS_i\} \nonumber\\
& + \sum_{g\in \cup_{j\in K_1} B_j} \left( p(g) - \frac{n}{2n-1}TPS_i \right) + |K_1| \frac{2n}{2n-1}TPS_i + |K_2| TPS_i \nonumber\\
& = \sum_{g\in \items \setminus \cup_{j\in (K_1 \cup K_2 \cup K_3) } B_j} \max \{p(g), \min \{v_i(g), TPS_i\} \\
& + \sum_{j\in K_3} \sum_{g\in B_j} \max \{p(g), \min \{v_i(g), TPS_i\} \nonumber\\
 & \qquad + \sum_{g\in \cup_{j\in K_1} B_j} \left( p(g) - \frac{n}{2n-1}TPS_i \right) + |K_1| \frac{2n}{2n-1}TPS_i + |K_2| TPS_i \nonumber\\
& = \sum_{g\in \items \setminus \cup_{j\in (K_1 \cup K_2 \cup K_3) } B_j} \max \{p(g), \min \{v_i(g), TPS_i\} + \sum_{j\in K_3} \sum_{g\in B_j} \bar{v}_i(g) \nonumber\\
& \qquad + \sum_{g\in \cup_{j\in K_1} B_j} \left( p(g) - \frac{n}{2n-1}TPS_i \right) + |K_1| \frac{2n}{2n-1}TPS_i + |K_2| TPS_i \nonumber\\
& \leq \sum_{g\in \items \setminus \cup_{j\in (K_1 \cup K_2 \cup K_3) } B_j} \max \{p(g), \min \{v_i(g), TPS_i\} + |K_3|\frac{2n}{2n-1}TPS_i \\
& + \sum_{g\in \cup_{j\in K_1} B_j} \left( p(g) - \frac{n}{2n-1}TPS_i \right) + |K_1| \frac{2n}{2n-1}TPS_i + |K_2| TPS_i.
\nonumber
%\end{align*}
\end{align}

Now, we can reduce terms from both sides, and arrive at:

\[
\begin{split}
& \left( n - \frac{2n \cdot k}{2n-1}\right)\cdot TPS_i= \left( n - \frac{2n}{2n-1}(|K_1| + |K_2| + |K_3|)\right)\cdot TPS_i\\
& \leq \left( n - |K_2| - \frac{2n}{2n-1}|K_1| - \frac{2n}{2n-1}|K_3|\right)\cdot TPS_i \\
& \stackrel{\text{Eq.~\ref{eq:TPS}}}{\leq} \sum_{g\in \items \setminus \cup_{j\in (K_1 \cup K_2 \cup K_3) } B_j} \max \{p(g), \min \{v_i(g), TPS_i\} \\
& + \sum_{g\in \cup_{j\in K_1} B_j} \left( p(g) - \frac{n}{2n-1}TPS_i \right)  \\
& \leq \sum_{g\in \items \setminus \cup_{j\in (K_1 \cup K_2 \cup K_3) } B_j} \bar{v}_i(g) + \sum_{g\in \cup_{j\in K_1} B_j} \left( p(g) - \frac{n}{2n-1}TPS_i \right) \\
%& \sum_{g\in \items \setminus \cup_{j\in (K_1 \cup K_2 \cup K_3) } B_j} \bar{v}_i(g) + \sum_{g\in \cup_{j\in K_1} B_j} \left( p(g) - \frac{n}{2n-1}TPS_j \right),
\end{split}
\]

whereas the RHS of this inequality is a lower bound on the remaining value after $k$ agents were allocated a bundle. 
%\ufc{Though I believe that the theorem is true, it seems much easier to try to read it myself than to check the proof.} \ygc{I went over the chain of inequalities, it's accurate, maybe a bit ``over-detailed'' but this way it can be mechanistically verified.}
\end{proof}

Now, to see that the algorithm terminates, consider $k \leq n-1$. Then, the remaining value is at least:
$$\left( n - \frac{2n \cdot k}{2n-1}\right)\cdot TPS_i \geq\frac{n\cdot (2n-1) - 2n \cdot (n-1)}{2n-1} \cdot TPS_i = \frac{n}{2n-1} TPS_i,$$
and so we are able to fill another bundle. We conclude that we fill $n$ bundles, at which point no active agent remains and the algorithm terminates. 

\end{proof}

\subsection{TPS Approximation With SEFX Allocations}

A shrinking operation finds a minimal subset $B' \subseteq B$ of the original bundle $B$ for some agent $i$, that is of greater value than the bundle they currently hold, and so that no other agent $\epsilon$-SEFX-envies $B'$. We assign $B'$ to agent $i$. W.l.o.g., $B'$ requires selling at most one good. If such a good is sold, refer to it as $e_{B'}$. Sale proceeds of $e_{B'}$ that are not assigned to $i$ are kept unassigned. If in some future iteration $B'$ is released (because agent $i$ is envious of another bundle, which is then assigned to agent $i$), these remaining sale proceeds and the sale proceeds that were included in $B'$ are deleted, and the good $e_{B'}$ is returned to the pool of unallocated goods (undoing the sale of $e_{B'}$). This allows us to ensure that each agent is responsible for at most a single forced sale, from the bundle it is currently allocated (as in Lemma~\ref{lem:number_of_split_goods}). %\ufc{Remove the following text, that was not helpful, because it too did not explain that items are bought back: It prevents the case where (i) Agent $i$ forces a sale of a good, (ii) the remaining proceeds of the good are assigned to agent $j$, and then (iii) Agent $i$ steals a bundle, and forces the sale of another good.} 

%\ufc{The explanation of why the leftovers are not assigned is not clear, but this may be okay if the appendix offers a clearer explanation}
%\ygc{I rewrote, not sure if it is much clearer. The proof itself is not in the appendix but rather Theorem 5.14.}
%\ufc{I think that I now understand what was not clear. The paragraph only talked about sale proceeds, and kept on selling items without ever buying them back. However, I think that you intended to buy back items, but just never wrote that this is what happens. Please see if my edits are correct.}

This, overall, allows us to maintain two important properties of Algorithm~\ref{alg:APX-TPS} for the TPS approximation: (i) Every assigned agent has a bundle of value at least $\frac{n}{2n-1}$-TPS. 
This is because the shrinked bundle has at least the value of the existing allocation for a stealing agent, which is of value at least $\frac{n}{2n-1}$-TPS itself. 
(ii) Every unassigned agent either has a value of at most $\frac{2n}{2n-1}$-TPS for an assigned bundle, or that bundle contains at most one good. %, and all other items are available \textit{without a forced sale}. 
With these properties maintained, our TPS approximation argument follows through, and we can focus on proving that the allocation is $\epsilon$-SEFX (by Lemma~\ref{lem:limit}, this allows us to find a SEFX allocation as well), and that the algorithm always terminates. One important element of our shrinking operation is that it forces progress by adding something small to the minimal bundle (either $\epsilon$ in sale proceeds, or rounding up sale proceeds to a full good). We can motivate the forced progress by the following example: 
%\ufc{this is not "necessary" but only for convenience. Instead of doing this epsilon progress, we can presumably directly jump to a good future point. In general, it is not a good policy to claim that parts of a proof are necessary or needed. Often, they are just a choice made for convenience, and other approaches also work. A better way of explaining a proof is by pointing out difficulties that we encounter and how we overcome them, and not say that there is no alternative proof. Also, the example seems to illustrate a point that is obvious and aready explained (in a revised text) just before Lemma~\ref{lem:limit}.}
\begin{example}
\label{ex:epsilon_SEFX_reason}
    Consider $n=2$ agents and two goods $g_1, g_2$, with valuations and market price as given in Table~\ref{tab:epsilon_SEFX_reason}. The good $g_2$ has high market value (more than $\frac{n}{2n-1} = \frac{2}{3}$-TPS for both agents), and so following a moving-knife procedure like in Algorithm~\ref{alg:APX-TPS}, we allocate $\frac{2}{3}$ in sale proceeds to agent $2$ (that has the lower TPS requirement). Still, the remaining $\frac{4}{3}$ in sale proceeds are more than $\frac{2}{3}$ of the TPS of agent $1$, and so our next step is to consider $1$ in sale proceeds to be allocated to agent $2$. If we do the shrinking operation, \textit{without forcing $\epsilon$ progress}, agent $1$ SEFX-envies any amount of sale proceeds greater than $\frac{2}{3}$, and so they would be allocated $\frac{2}{3}$ of the remaining sale proceeds, and release their current bundle, which is also $\frac{2}{3}$ in sale proceeds. This may then continue indefinitely. If, on the other hand, we force $\epsilon$ progress, this process ends up with a partial allocation of $1$ in sale proceeds to both agents, which is both $\frac{2}{3}$-TPS, and SEFX. 
\end{example}

\begin{table}[ht]
    \centering

\renewcommand{\arraystretch}{1.3} % Adds vertical padding for fractions
    \begin{tabular}{lccc}
        \toprule
        & $g_1$ & $g_2$ & TPS \\
        \midrule
        Agent 1 & $1$ & $0$  & $1.5$ \\
        Agent 2 & $0$ & $0$  & $1$ \\
        \midrule
        Market price $p$ & $0$ & $2$ & -- \\
        \bottomrule
    \end{tabular}
    \caption{Subjective valuations and market prices for Example~\ref{ex:epsilon_SEFX_reason}}
    \label{tab:epsilon_SEFX_reason}
\end{table}

A few notes on terminology:

For two bundles $X = (A, P), Y = (A', P')$, we say $X \sqsubseteq Y$ is a \textit{sub-bundle} of Y if $A \subseteq A'$, and for some $g \in A' \setminus A$, $P \leq P' + p(g)$. Notice that because our notion of a generalized bundle contains both full goods and sale proceeds, our notion essentially considers removing proceeds, then selling a good, removing its proceeds, and so on.

In Algorithm~\ref{alg:APX-TPS-EFX} we use the term $Unshrink(B')$ to mean the following: Every full good in $B'$ gets added to AVAIL. For a partial good, if there is a ``virtual'' good in AVAIL that consists of leftovers from the same good, it is joined with it to form a new ``virtual good'' (or, if it now contains the full good, then the original full good), and if there is no such virtual good, it is added as a virtual good to AVAIL. Notice that due to the operation of Algorithm~\ref{alg:APX-TPS-EFX}, that leaves all leftovers of a good that was shrinked during bag-filling unassigned, the effect is the following: Any good that was shrinked during bag-filling reappears as the full good after Unshrink. For a good that was shrinked during the large good selling/assignment loop, it depends on whether after Unshrink there are still agents who are allocated parts of that good: If there are, then all leftovers are joined into one virtual good, and if there are no such agents, then the full good is in AVAIL. This maintains the property that a good is divided only if there is an agent that holds part of its proceeds. Notice also that in Shrink (Algorithm~\ref{alg:Shrink}), we shrink so that only at most a single good is divided. All in all, this guarantees that there are no more goods sold at any step than agents that are currently allocated a bundle, which is a property we need for the TPS approximation.

\begin{algorithm}[htb]
    \SetAlgoLined
\DontPrintSemicolon
\KwIn{Bundle $X$, minimal progress $\epsilon$, valuations $v_1, \ldots, v_n$, market price $p$, Current bundle assignments $B_1, \ldots, B_n$}
\KwOut{A shrinked bundle $X_{shrink}$, corresponding agent $i'$}

Let $ACTIVE$ be the agents $i$ so that $B_i = \emptyset$, and $INACTIVE$ the remaining agents. 

Let $X_{shrink} \sqsubseteq X$ be a sub-bundle of $X$ %\ufc{if $X$ contains money, what does "subset" mean?} \ygc{I mean by it a smaller amount of proceeds, or iteratively selling a good and cutting into its proceeds...} 
so that there is $i\in ACTIVE$ so that $\bar{v}_i(X_{shrink}) \geq \frac{n}{2n-1}TPS_i$. Let $i'$ be the corresponding agent.

\If{$\exists i \in INACTIVE$ so that $i$ $\epsilon$-SEFX-envies $X_{shrink}$} {

Let $X_{shrink} \sqsubseteq X$ be a maximal bundle so that no agent SEFX-envies any strict sub-bundle of it.  // We can always find $X_{shrink}$ by repetitively following the SEFX-envy violation and either selling a good or removing proceeds accordingly. 

Let $\delta = \max_{i\in INACTIVE} \bar{v}_i(X_{shrink}) - \bar{v}_i(B_i)$, and let agent $i$ be a maximizing agent. 

\If{$\delta < \epsilon$} {

If a good $g$ has leftover sale proceeds of at least $\epsilon - \delta$, add the sale proceeds to $X_{shrink}$. If there is no such good, round up (add all the remaining sale proceeds) from some sold good $g$. }

} 

Return $X_{shrink}, i$.

 \caption{Shrink}
 \label{alg:Shrink}
\end{algorithm}

\begin{algorithm}[H]
    \SetAlgoLined
\DontPrintSemicolon
\KwIn{Valuations $v_1, \ldots, v_n$, market price function $p$ over goods $\items$, number of agents $n$}
\KwOut{A partial allocation $\mathbf{B}$ with at least $\frac{n}{2n-1} \cdot TPS_i$ utility for each agent, no $\epsilon$-SEFX envy}

$AVAIL = \items, ACTIVE = [n], \forall i, A_i = \emptyset$. 

\While{ $|ACTIVE| \neq 0$} {

\While{$\exists j\in AVAIL, i\in ACTIVE, p(j) > \frac{n}{2n-1}TPS_i$} {

    Let $i^* \in \argmin_{i\in ACTIVE} TPS_i$.

    Let $\alpha$ be $\frac{n}{2n-1}\cdot TPS_{i^*}$ in sale proceeds from $j$. 

    $X_{shrink}, i' =$ Shrink($\{\alpha\}, \epsilon, \mathbf{v}, p, \mathbf{B}$) 

    $AVAIL = AVAIL \setminus \{X_{shrink}\}$

    \If{$i' \not \in ACTIVE$} {
        $AVAIL = AVAIL \cup \{B_{i'}\}$ // add $i'$ current bundle back to the pool
    } \Else{
        $ACTIVE = ACTIVE \setminus \{i'\}$
    }

   $B_{i'} = X_{shrink}$; // $i'$ gets the shrinked bundle
} 

\While{$\exists j\in AVAIL, i\in ACTIVE, v_i(j) > \frac{n}{2n-1}TPS_i$} {

    $X_{shrink}, i' =$ Shrink($\{j\}, \epsilon, \mathbf{v}, p, \mathbf{B}$) 

    $AVAIL = AVAIL \setminus \{X_{shrink}\}$

    \If{$i' \not \in ACTIVE$} {
        $AVAIL = AVAIL \cup \{B_{i'}\}$ // add $i'$ current bundle back to the pool
    } \Else{
        $ACTIVE = ACTIVE \setminus \{i'\}$
    }

   $B_{i'} = X_{shrink}$; // $i'$ gets the shrinked bundle
} 

    Bag-fill complete goods into a bundle $X$, until some agent $i\in ACTIVE$ values it more than $\frac{n}{2n-1} TPS_i$. 

    %\ufc{Just to confirm: is leftover money also used for bag filling?} \ygc{Yes, treated as a virtual indivisible good}

    $X_{shrink}, i' =$ Shrink($X, \epsilon, \mathbf{v}, p, \mathbf{B}$) 

    $AVAIL = AVAIL \setminus X$. 

    \If{$i' \not \in ACTIVE$} {
        $AVAIL = AVAIL \cup \{Unshrink(B_{i'})\}$
    } \Else{
        $ACTIVE = ACTIVE \setminus \{i'\}$
    }

    $B_{i'} = X_{shrink}$;
}

%Let $D' = \cup_{1\leq i' \leq n} D'_{i'}$. 

%$Envy-Graph-Waterfill(D', B_1, \ldots, B_n)$

 \caption{APX-TPS-$\epsilon$-SEFX}
 \label{alg:APX-TPS-EFX}
\end{algorithm}

\TPSAlloc*
\begin{proof}
Any agent $i$ that is assigned a bundle during the run of the algorithm gets at least $\frac{n}{2n-1}TPS_i$. This is since it can be assigned a bundle for the first time either in the large goods assignment loop, or in the bag-filling loop, whereas in each of these cases it has a value of at least $\frac{n}{2n-1}MMS_i$ for the bundle. Then in any subsequent reallocation for the agent, its value increases. 

We show that every agent gets allocated a bundle. First, we note we cannot ``get stuck'' in the large good selling/assignment loop. The reason is that at each iteration, either another agent gets allocated a bundle, or that \textit{Shrink} allocates to an agent $i'$ that is already allocated a bundle, so that it receives an additional $\epsilon \min_i TPS_i$ in value, or a full good. We thus always make a Pareto improvement w.r.t. the agents that are already allocated a bundle, with either at least $\epsilon$ of additional value, or along a finite set (of the possible item-agent matchings). Similarly, we cannot be stuck in the general allocation loop. 
In the general allocation loop, the bag-filling step is always completed successfuly. This is because the Shrink operation only makes bags less valuable in the perspective of agents that are not assigned them, compared with if the bag was not shrinked. Thus, at any iteration of the bag-filling loop, the agents that are not assigned a bundle satisfy the conditions of Theorem~\ref{lem:halfTPS}, and so we are guaranteed that we can continue filling bags. Given that, since we make a Pareto improvement w.r.t. the agents that are already allocated a bundle, with either at least $\epsilon$ of additional value, or along a finite set (of the possible item-agent matchings), the loop terminates with all agents assigned a bundle. 

The $\epsilon$-SEFX property is maintained inductively among the agents that are allocated a bundle at any step of the algorithm. Initially, no agent is allocated a bundle, so it holds in a vacuous sense. At each step when an agent is allocated a bundle (or reallocated a bundle), the shrinking operation ensures the property is maintained. 

Then, Lemma~\ref{lem:limit} shows that since we can guarantee a partial allocation with $\frac{n}{2n-1}$-TPS together with $\epsilon$-SEFX for any $\epsilon$, we can also guarantee a partial allocation with $\frac{n}{2n-1}$-TPS together with SEFX. Then, we can use the process of \cite{bei2021fair} (that we also use in Lemma~\ref{lem:extending_indivisible_EFX}) to extend it to a full allocation with the same properties. 
\end{proof}

\section{Pseudo-Polynomial Algorithm and PTAS for MMS Computation}
\label{sec:computation}

It is well-known that computing the MMS in the setting of additive indivisible goods, with a fixed number of agents, even with $n=2$, is weakly NP-hard, by a reduction from PARTITION \cite{bouveretLemaitre}. 
Weakly NP-hardness holds in our setting as well. Complementing this hardness result, we show how to compute the MMS in pseudo-polynomial time (a result previously known for the setting without sellable goods).

\subsection{Pseudo-Polynomial Algorithm for \texorpdfstring{$n=2$}{2 agents}}

We start by considering the $n=2$ (two agents) case. 
Formally, we assume that $v, p$ take integer values in $\{0, \ldots, K\}$ for some $K$, and we require that the algorithm is polynomial in the number of goods $m$ and $K$. Our main building block is the well-known dynamic programming algorithm for MMS computation with indivisible goods (and no selling), which we include here for completeness:

\begin{algorithm}
\SetAlgoLined
\DontPrintSemicolon
\KwIn{Valuation $v$ in range $[0,K]$ over goods $\items$}
\KwOut{The MMS with Indivisible Goods}

Let $\sigma = \sum_{g\in \items} v(g)$

Let $S = \emptyset$;

Let $T:\{0,\ldots, \lfloor \frac{\sigma}{2} \rfloor\} \times \{0, \ldots, m+1\} \rightarrow \{0,1\}$, and initialize $\forall j, T(0,j) = 1$. 

% If $T$ is passed parameters that are out of bounds, it returns $0$

\For{$j = 1$ to $m$} {
\For{$k=1$ to $\lfloor \frac{\sigma}{2} \rfloor$} {
    $T(k,j) = T(k, j-1) \lor T(k - v(g_j), j-1)$;
}
}
return $\argmax_c  T(c,m) = 1$.

 \caption{INDIVISIBLE-MMS-COMP}
 \label{alg:max_sum_achievable}
\end{algorithm}

The algorithm works as follows. It looks for the highest number smaller than $\frac{\sigma}{2}$, where $\sigma$ is the total sum of values, that can be attained by a subset of $\items$. This number $T$ will be the MMS, as its complementary will be $\sigma - T \geq \frac{\sigma}{2} \geq T$. The search is done by adding a new good for consideration at each iteration, and expanding the set of valid sums by considering each good in the set (which corresponds to a sum achieved by a subset) and either adding the good (and $v(g_j)$ to the sum) or not adding the good (and maintaining the existing sum of the subset as an option). The following lemma, which we state without proof, summarizes the guarantee of INDIVISIBLE-MMS-COMP.

\begin{lemma}
\label{lem:max-sum-achievable}
    INDIVISIBLE-MMS-COMP runs in time $O(m^2\cdot K)$ and returns the MMS for indivisible goods without selling. 
\end{lemma}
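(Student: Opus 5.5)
We prove Lemma~\ref{lem:max-sum-achievable} in two parts: correctness of the dynamic program via an invariant, then a direct count of the running time.

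\textbf{Invariant.} We show by induction on $j$ that for every $0 \le k \le \lfloor \sigma/2 \rfloor$,
\[
T(k,j)=1 \iff \exists S \subseteq \{g_1,\ldots,g_j\} \text{ with } v(S)=k.
\]
Out-of-range lookups, in particular $k - v(g_j) < 0$, are read as $0$.

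\emph{Base case.} For $j=0$, only the empty set is available, so the only achievable sum is $0$. We take $T(k,0)=0$ for $k>0$, which the initialization implicitly requires.

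\emph{Step.} Any subset of $\{g_1,\ldots,g_j\}$ with sum $k$ either omits $g_j$, giving a subset of the first $j-1$ goods with sum $k$, or contains $g_j$, giving a subset of the first $j-1$ goods with sum $k-v(g_j)$. The recurrence $T(k,j)=T(k,j-1)\lor T(k-v(g_j),j-1)$ is exactly the disjunction of these two cases. Hence, after the loop, $T(c,m)=1$ iff some subset of $\items$ has value exactly $c$.

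\textbf{Correctness.} Let $c^*$ be the largest $c \le \lfloor\sigma/2\rfloor$ with $T(c,m)=1$; this is the intended reading of the returned $\argmax$.

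\emph{$c^*$ is achievable.} Take $S$ with $v(S)=c^*$. Then the partition $(S,\items\setminus S)$ has minimum part value $\min\{c^*,\sigma-c^*\}=c^*$, because $c^*\le \sigma/2$. So the MMS is at least $c^*$.

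\emph{Nothing better is achievable.} Take any 2-partition $(A,B)$ and assume without loss of generality $v(A)\le v(B)$. Then $v(A)\le \sigma/2$, and since values are integers, $v(A)\le\lfloor\sigma/2\rfloor$. Also $v(A)$ is a subset sum, so $T(v(A),m)=1$. Maximality of $c^*$ gives $v(A)\le c^*$, so the MMS is at most $c^*$.

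Together these give MMS $=c^*$.

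\textbf{Running time.} The table has $(\lfloor\sigma/2\rfloor+1)(m+2)$ entries, and each is filled in $O(1)$ time. Since $\sigma\le mK$, there are $O(mK)$ values of $k$ for each of the $m$ values of $j$, for $O(m^2K)$ work in total. The final scan over $c$ costs $O(mK)$, and computing $\sigma$ costs $O(m)$.

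\textbf{Main obstacle.} The delicate point is not the combinatorics but the conventions. The statement only makes sense once $T(k,0)=0$ for $k>0$ is fixed, negative indices are read as $0$, and the $\argmax$ is understood as the largest $c$ in range with $T(c,m)=1$. Integrality of $v$ is also needed, so that the bound $v(A)\le\sigma/2$ becomes $v(A)\le\lfloor\sigma/2\rfloor$.
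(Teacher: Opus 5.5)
Your proof is correct and takes the same approach the paper indicates. The paper states this lemma without a formal proof and only sketches that the table records achievable subset sums, and that the largest achievable sum $c \le \sigma/2$ is the MMS because its complement has value $\sigma - c \ge c$; your write-up makes that sketch rigorous, including the implicit conventions ($T(k,0)=0$ for $k>0$, out-of-range entries read as $0$) and the $O(m \cdot mK)$ table-size count.
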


\begin{algorithm}
    \SetAlgoLined
\DontPrintSemicolon
\KwIn{Valuation $v$ and market price $p$ in range $[0,K]$ over goods $\items$}
\KwOut{MMS value for $v,p$ with $n=2$}

Let $\forall g, \bar{v}(g) = \max \{v(g), p(g)\}$.

% First, we try without any split items

$currentMMS = $ INDIVISIBLE-MMS-COMP($\bar{v}$);

\For{$j=1$ to $m$} {

    $v^{\neg j}(g) = \begin{cases} \bar{v}(g) & g\neq j \\ 0 & g = j\end{cases}$;

    $base = $ INDIVISIBLE-MMS-COMP($v^{\neg j}$);

    \If{$base \geq \frac{1}{2} \left(\sum_{g \in \items \setminus \{j\}} \bar{v}(g) - p(j)\right) $} {
        $currentMMS = \max \{ currentMMS, \frac{1}{2} \left(\sum_{g \in \items \setminus \{j\}} \bar{v}(g) + p(j)\right) \}$
    }
}

return $currentMMS$;

 \caption{COMP-MMS-n=2}
 \label{alg:MMS2}
\end{algorithm}

\begin{lemma}
    COMP-MMS-n=2 runs in time $O(m^3\cdot K)$ and returns the MMS value given $v, p$. 
\end{lemma}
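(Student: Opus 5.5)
The plan is to prove the running time and the correctness separately. Correctness will follow from a case analysis on the number of forced-sale goods. By Lemma~\ref{lem:number_of_split_goods}, for $n=2$ that number is at most one.

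\emph{Running time.} The algorithm makes $m+1$ calls to INDIVISIBLE-MMS-COMP. Each call is on an additive valuation ($\bar{v}$ or $v^{\neg j}$) with integer values in $\{0,\ldots,K\}$, since $\bar v(g)=\max\{v(g),p(g)\}\le K$. By Lemma~\ref{lem:max-sum-achievable}, each call costs $O(m^2K)$. The remaining work per iteration is computing a sum and a comparison, which is $O(m)$. The total is therefore $O(m^3K)$.

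\emph{Correctness.} Work with unified bundles. Each good is either held in full, contributing $\bar v(g)$ since the holder may keep or sell it, or it is a forced-sale good whose proceeds are split between the two bundles. I would first check that every value the algorithm records is attainable. The initial value $\mathrm{INDIVISIBLE\text{-}MMS\text{-}COMP}(\bar v)$ is attained by a partition with no forced sale. In iteration $j$, write $\sigma_j=\sum_{g\neq j}\bar v(g)$. The test ensures there is a set $S\subseteq\items\setminus\{j\}$ with $\bar v(S)=base\le \sigma_j/2$ and $\sigma_j-2\,base\le p(j)$. Sell $j$ and give bundle $S$ the amount $x=(\sigma_j+p(j))/2-base\in[0,p(j)]$ of proceeds, with the rest going to the complementary bundle. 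Both bundles then have value exactly $(\sigma_j+p(j))/2$. Hence the output is at most $MMS$.

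For the reverse inequality, take an MMS output with at most one forced-sale good.
\begin{itemize}
\item If it has no forced-sale good, its value is at most the indivisible MMS of $\bar v$, which is the initial $currentMMS$.
\item If it has exactly one forced-sale good $j$, it splits $\items\setminus\{j\}$ into $S$ and $T$ with $\bar v(S)\le\bar v(T)$, and gives $S$ proceeds $x$ with $0\le x\le p(j)$. Its value is $\min\{\bar v(S)+x,\ \bar v(T)+p(j)-x\}$, which is at most $(\sigma_j+p(j))/2$.
\begin{itemize}
\item If $\bar v(T)-\bar v(S)\le p(j)$, then $base\ge\bar v(S)$, because INDIVISIBLE-MMS-COMP returns the largest subset sum not exceeding $\sigma_j/2$. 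So the test in iteration $j$ passes and $currentMMS\ge(\sigma_j+p(j))/2$.
\item If $\bar v(T)-\bar v(S)> p(j)$, the minimum is at most $\bar v(S)+p(j)\le \bar v(S\cup\{j\})$. This is dominated by the no-split partition $(S\cup\{j\},T)$, whose value is $\min\{\bar v(S\cup\{j\}),\bar v(T)\}=\bar v(S\cup\{j\})$, so it is covered by the initial value.
\end{itemize}
\end{itemize}

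The main obstacle is this last exchange argument. I need to confirm that whenever equalization is impossible, the optimal choice of $x$ gives all proceeds to the poorer side. That makes the split no better than assigning $j$ wholly to that side, using $\bar v(j)\ge p(j)$. A second point to check is that setting $v^{\neg j}(j)=0$ is harmless: the subset sums of $v^{\neg j}$ are exactly the subset sums of $\bar v$ over $\items\setminus\{j\}$, and $\sum_g v^{\neg j}(g)=\sigma_j$. So $base$ is precisely the largest such subset sum not exceeding $\sigma_j/2$.
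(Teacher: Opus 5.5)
Your proof is correct and follows essentially the paper's route. You reduce to at most one forced-sale good via Lemma~\ref{lem:number_of_split_goods}, and you read the test in iteration $j$ as saying that $\items\setminus\{j\}$ splits into two parts whose $\bar v$-values differ by at most $p(j)$, which is when $(\sigma_j+p(j))/2$ is attainable. Where the paper invokes Lemma~\ref{lem:split_equal_mms} to say a genuinely split optimum is balanced, you instead bound the split value by the average and handle the unbalanced case by an exchange argument. You also check explicitly that every recorded value is attainable, which the paper only asserts.

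One small slip in the exchange step: $(S\cup\{j\},T)$ need not have value $\bar v(S\cup\{j\})$, since $\bar v(j)$ may far exceed $p(j)$. The conclusion survives, because both $\bar v(S\cup\{j\})$ and $\bar v(T)$ are at least $\bar v(S)+p(j)$. The first follows from $\bar v(j)\ge p(j)$ and the second from your case hypothesis, and that is all the domination argument needs.
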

\begin{proof}
    The running time holds since we perform a loop with $m$ steps where the main component is INDIVISIBLE-MMS-COMP, which runs in $O(m^2\cdot K)$ by Lemma~\ref{lem:max-sum-achievable}.

    Regarding correctness, recall that by Lemma~\ref{lem:number_of_split_goods}, with $n=2$ goods, it is w.l.o.g. to assume that there are at most one split good. We can therefore compute the MMS value conditional on each certain good being a split good, or no split goods at all, and take the maximum over these options. Moreover, by Lemma~\ref{lem:split_equal_mms}, the MMS value with split good $g_j$ equals exactly $MMS^j = \frac{1}{2}\left(\sum_{g\in \items \setminus \{j\}} \bar{v}(g) + p(j) \right)$. However, we must verify that this is indeed attainable. For this value to be attainable, we must be able to partition the goods other than $j$ into two bundles that are at most $p(j)$ apart. Then, we can use $p(j)$ to ensure that each bundle has the same value once we take the split good into consideration. Thus, the MMS value for the split good $g_j$ can be attained if and only if INDIVISIBLE-MMS-COMP over all goods other than $g_j$ is at least $MMS^j - p(j)$. This is the exact logic followed by COMP-MMS-n=2. 
\end{proof}

\subsection{Pseudo-Polynomial Algorithm for General Fixed \texorpdfstring{$n$}{n}}

We can further develop the argument above to compute the MMS in pseudo-polynomial time for any fixed $n$. Notice that since $3$-PARTITION is strongly NP-hard, we cannot  expect a pseudo-polynomial algorithm when $n$ is part of the input, unless $P=NP$. 

For this purpose, we must first significantly generalize INDIVISIBLE-MMS-COMP in the following way: While with $n=2$, the sum of goods smaller than $\frac{K}{2}$ uniquely determines its complement, with general $n$, our dynamic programming table is $n+1$-dimensional (one dimension for the number of goods used, corresponding to the columns in the $n=2$ case, and $n$ dimensions for the sum-structure of a partition to $n$ bundles). Moreover, in this case, we incorporate the divisible sale proceeds $p$ into Algorithm~\ref{alg:max_sum_achievable_general_n} as an additional input. After computing the dynamic programming table, for every achievable sum-structure, we translate it into a single number by a waterfall procedure using $p$. In the waterfall procedure, we are given bundle sum values, and ``water'' that can be distributed among them, and we gradually fill the bundles from the lowest to highest value to achieve the highest possible MMS value given the input. 

\begin{algorithm}
\SetAlgoLined
\DontPrintSemicolon
\KwIn{$S_1, \ldots, S_n$ integer sums, integer $p$ in range $[0, m\cdot K]$}
\KwOut{The maximin value of dividing sale proceeds $p$ to complement sums $S_1, \ldots, S_n$}

Sort $S_1, \ldots, S_n$ in ascending order;

$r = p$ // 

\For{$j=1$ to $n-1$} {
    \If{$r\geq j \cdot (S_{j+1} - S_j)$} {
        \For{$j' = 1$ to $j$} {
            $S_{j'} = S_{j+1}$
        }
        $r = r - j \cdot (S_{j+1} - S_j)$
    }
    \Else {
        return $S_1 + \frac{r}{j}$
    }
}

return $S_1 + \frac{r}{n}$

 \caption{WATERFALL-FILL}
 \label{alg:waterfall_fill}
\end{algorithm}

Recall that the $\tilde{O}$ notation means we disregard terms with logarithmic dependence on the problem parameters. 
The following lemma shows that waterfall-fill optimally distributes the ``water'' $p$. 

\begin{lemma}
    WATERFALL-FILL runs in $\tilde{O}(n)$ (as a problem with parameters $n,m,k$)
 and has the highest MMS value for an instance with bundle sums $S_1, \ldots, S_n$ and ``water'' $p$. 
\end{lemma}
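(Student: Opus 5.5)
The plan is to show two things separately: the running time, and that the returned value equals $\max_{P_1+\cdots+P_n = p,\,P_j\ge 0}\min_j (S_j+P_j)$, i.e., the best achievable minimum when the water $p$ is split arbitrarily among the bundles.

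\emph{Running time.} Sorting costs $O(n\log n)$. After that there is a single pass over $j=1,\dots,n-1$. The inner loop that raises $S_1,\dots,S_j$ to $S_{j+1}$ is unnecessary for the output: by the invariant below, all of $S_1,\dots,S_j$ already equal $S_j$, so the algorithm only needs to track the common level. Hence the whole procedure is $\tilde{O}(n)$. As written, with the explicit inner loop, it is $O(n^2)$. I would therefore state the invariant and replace the inner loop by keeping a single level variable.

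\emph{Optimality.} The key step is a characterization. For sorted $S_1\le\cdots\le S_n$, a value $t$ is achievable if and only if
\[
\sum_{j=1}^n \max\{0,\,t-S_j\}\le p .
\]
The function $h(t)=\sum_j \max\{0,t-S_j\}$ is continuous, piecewise linear and strictly increasing once $t\ge S_1$. So the optimum is the unique $t^*\ge S_1$ with $h(t^*)=p$. (The degenerate case $p=0$ gives $S_1$.)

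I would then prove by induction the following invariant: at the start of iteration $j$, the first $j$ bundles all sit at level $S_j$ (the original value), and $r = p - h(S_j)$. The update $r\leftarrow r - j\,(S_{j+1}-S_j)$ is exactly $h(S_{j+1})-h(S_j)$, since on $[S_j,S_{j+1}]$ exactly $j$ terms of $h$ are active. The test in iteration $j$ therefore checks whether $h(S_{j+1})\le p$.

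When the test fails, $t^*$ lies in $[S_j,S_{j+1})$, where $h(t)=h(S_j)+j\,(t-S_j)$. Solving $h(t)=p$ gives $t^*=S_j+r/j$. Here $S_j$ is the current common level of the first $j$ entries. Since the code overwrites $S_1$ with that level, the returned $S_1+r/j$ is correct.

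If the loop runs to completion, every bundle sits at level $S_n$. The remaining $r$ is then split evenly, $n$ terms are active, and the return value $S_1+r/n$ is correct.

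The main obstacle is purely bookkeeping: checking that the in-place overwriting of $S_{j'}$ keeps the invariant that $S_1$ equals the current water level, and that the comparison uses the original gap $S_{j+1}-S_j$, which is valid because $S_j$ was already raised to its own value.
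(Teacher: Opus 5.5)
Your proof is correct, but it certifies optimality differently from the paper. The paper argues by induction on the iteration at which the loop stops. At that iteration it attains the value by adding $r/j$ to each of the first $j$ (already equalized) bundles. It bounds the value from above by averaging: for any split of the water, the minimum is at most the common level of the first $j$ bundles plus the smallest of their shares, hence at most that level plus $r/j$.

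You instead characterize the achievable levels once and for all as $\{t : h(t)\le p\}$. You then read the algorithm as locating the root of $h(t)=p$ by scanning the breakpoints $S_1\le\cdots\le S_n$, with the invariant $r=p-h(S_j)$. The two certificates are the same inequality in different form: on the $j$-th linear piece, $h(t)\le p$ is exactly $t\le \frac{1}{j}\bigl(p+\sum_{j'\le j}S_{j'}\bigr)$, which is the averaging bound.

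Your formulation works with the original sums and the full amount $p$ throughout. The paper's upper bound is stated relative to the intermediate raised sums and the leftover water $r$. It therefore tacitly needs the fact that raising the lowest bundles first never loses anything, and your $h$-characterization makes that step unnecessary. In return, the paper's version is more directly constructive.

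On running time you are also right. With the explicit inner loop the procedure costs $\Theta(n^2)$ when the loop runs to completion. The paper's proof counts only the $n-1$ outer iterations as constant-time arithmetic and overlooks the inner loop; the sort is harmless, being $O(n\log n)$. Tracking a single level variable, as you propose, is what actually gives $\tilde{O}(n)$ without changing the output.

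One bookkeeping detail to add to your list concerns the pseudo-code's update $r\leftarrow r-j(S_{j+1}-S_j)$. It comes after the inner loop has already set $S_j$ to $S_{j+1}$, so read literally it subtracts $0$. Your invariant, like the paper's proof, needs the gap computed before the overwrite. This is automatic if you compute it before updating the level variable.
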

\begin{proof}
    The running time is due to the main loop which runs at most $n-1$ times and performs simple arithmetic operations (which account for a logarithmic dependence on $m, K$, subsumed by the $\tilde{O}$ notation). 

    Correctness holds by induction over the step where the loop terminates. The inductive claim is that if we denote $S_1^{(j)}, \ldots, S_n^{(j)}$ for the sums after step $j$ (if no result is returned), then $S_1^{(j)} = \ldots S_j^{(j)}$, as well as $\forall j' > j, S_{j'}^{(j)} = S_{j'}$,
    and that the best attainable MMS given $S_1^{(j-1)}, \ldots, S_n^{(j-1)}$ and sale proceeds $p-r$ is $S_1^{(j)}$ (if the loop continues) or the return value, otherwise. 
    
    For the base case, if the loop terminates after one step, then $p < S_2 - S_1$, and thus
    $$S_1 + p < S_2 \leq S_3 \leq \ldots \leq S_n,$$
    and so if we add $p$ sale proceeds to bundle $1$ we attain an MMS of $S_1 + p$. On the other hand, 
     for any $p_1, \ldots, p_n \geq 0$ so that $\sum_{i=1}^n p_i = p$, it holds that 
$$\min \{S_1 + p_1, \ldots, S_n + p_n\} \leq S_1 + p_1 \leq S_1 + p,$$
and so this is the best attainable MMS given $S_1, \ldots, S_n, p$. 
If the loop continues after the first step, then $p - r = S_2 - S_1, S_1^{(1)} = S_2$, and we similarly conclude that this is the best attainable MMS.

For the inductive step, assume that $S_1^{(j)} = \ldots = S_j^{(j)}$. If the loop terminates at step $j+1$, then $r < j \cdot (S_{j+1} - S_j)$, and thus
$$S_1^{(j)} + \frac{r}{j} = S_2^{(j)} + \frac{r}{j} = \ldots S_j^{(j)} + \frac{r}{j} < S_{j+1}^{(j)} \leq \ldots \leq S_n^{(j)}, $$
and so if we add $\frac{r}{j}$ sale proceeds to each of the first $j$ bundles we attain an MMS of $S_1^{(j)} + \frac{r}{j}$. On the other hand, 
     for any $p_1, \ldots, p_n \geq 0$ so that $\sum_{i=1}^n p_i = r$, it holds that 
$$\min \{S_1 + p_1, \ldots, S_n + p_n\} \leq \min_{1\leq j' \leq j} S_{j'}^{(j)} + p_{j'} = S_1^{(j)} + \min_{1\leq j' \leq j} p_{j'} \leq S_1^{(j)} + \frac{r}{j},$$
and so this is the best attainable MMS given $S_1^{(j)}, \ldots, S_n^{(j)}, r$. If the loop continues after step $j$, then $r^{(j)} - r^{(j+1)} = j \cdot (S_{j+1}^{(j)} - S_j^{(j)}), S_1^{(j+1)} = \ldots = S_{j+1}^{(j+1})$, and we similarly conclude that this is the best attainable MMS given $S_1^{(j)}, \ldots, S_n^{(j)}, r^{(j)} - r^{(j+1)}$. 
    
\end{proof}

WATERFALL-FILL receives both the divisible resource $p$ and specific bundle sums. We next abstract one step further, and not include the bundle sums as input, but rather good values, together with a divisible resource $p$. Then, as we noted above, we combine a dynamic programming approach (as done in INDIVISIBLE-MMS-COMP for $n=2$) for general $n$, with an application of WATERFALL-FILL to handle the divisible resource. This results in Algorithm~\ref{alg:max_sum_achievable_general_n}. 

\begin{algorithm}
\SetAlgoLined
\DontPrintSemicolon
\KwIn{Number of agents $n$, valuation $v$ in range $[0,K]$ over goods $\items$, integer sale proceeds $p$}
\KwOut{The MMS value given $n, v, p$}

Let $S = \emptyset$;

Let $T:\{0,\ldots, m \cdot K\}^n \times \{0, \ldots, m+1\} \rightarrow \{0,1\}$, and initialize $\forall j, T(0,\ldots, 0) = 1$. 

% If $T$ is passed parameters that are out of bounds, it returns $0$

\For{$j = 1$ to $m$} {
\For{$s_1, \ldots, s_n \in \{0, \ldots, m \cdot K\}^n$} {
    $T(s_1, \ldots, s_n, j) = T(s_1 - v(g_j), s_2, \ldots, s_n, j-1) \lor T(s_1, s_2 - v(g_j), \ldots, s_n, j-1) \lor \ldots \lor T(s_1, s_2, \ldots, s_n - v(g_j), j-1)$
}
}

$currentMMS = 0$

\For{$s_1, \ldots, s_n \in \{0, \ldots, m \cdot K\}^n$} {
    \If{$T(s_1, \ldots, s_n, m) = 1$} {
        $currentMMS = \max \{currentMMS, WATERFALL-FILL(s_1, \ldots, s_n, p)\}$
    }

}
    return $currentMMS$

 \caption{MAX-SUM-ACHIEVABLE-GENERAL-$n$}
 \label{alg:max_sum_achievable_general_n}
\end{algorithm}

\begin{lemma}
\label{lem:max-sum-achievable-general}
    For a given valuation function $v$ over goods $\items$, and divisible sale proceeds $p$, MAX-SUM-ACHIEVABLE-GENERAL-$n$ returns the conditional 
    
    \noindent
    MMS value in time $\tilde{O}((m \cdot K)^n \cdot n \cdot m)$. 
\end{lemma}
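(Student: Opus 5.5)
Two things need proving: correctness, meaning the returned value is the MMS given $n$, $v$ and the divisible proceeds $p$; and the running time.

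For correctness, I would set it up as a two-level optimization. First fix the assignment of the indivisible goods to the $n$ bundles, which determines integer bundle sums $(s_1,\ldots,s_n)$. The best maximin value obtainable by then distributing the divisible proceeds $p$ is exactly WATERFALL-FILL$(s_1,\ldots,s_n,p)$, by the preceding lemma on WATERFALL-FILL. So the conditional MMS equals the maximum of WATERFALL-FILL$(s,p)$ over all achievable sum vectors $s$, where $s$ is achievable if some assignment of all $m$ goods to bundles has these sums. It then suffices to show that the table $T$ correctly records achievability.

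I would prove this by induction on $j$, with the claim that $T(s_1,\ldots,s_n,j)=1$ if and only if the first $j$ goods can be assigned to $n$ bundles with sums exactly $s$.
\begin{itemize}
\item Base case $j=0$: only the all-zero vector is achievable. This is how the initialization should be read (note the indexing typo in the pseudocode).
\item Inductive step: good $g_j$ goes to some bundle $k$. Removing it leaves an assignment of $g_1,\ldots,g_{j-1}$ with sums $s-v(g_j)e_k$. The disjunction over $k$ in the recurrence therefore captures both directions of the equivalence. Out-of-range indices are treated as $0$.
\item Range: all sums lie in $\{0,\ldots,mK\}$, since each $v(g)\le K$. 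So no achievable vector falls outside the table.
\end{itemize}
The final loop then maximizes WATERFALL-FILL over exactly the achievable vectors.

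For the running time, the table has $(mK+1)^n$ entries per layer and $m$ layers. Each entry is computed from $n$ lookups, giving $O((mK)^n\cdot n\cdot m)$ for the dynamic program. The final loop makes $(mK+1)^n$ calls to WATERFALL-FILL, each taking $\tilde O(n)$, which is dominated by the first term. Arithmetic on numbers of size $O(\log(mK))$ bits contributes only the logarithmic factors hidden in $\tilde O$.

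I expect the main obstacle to be the reduction step: justifying that optimizing over indivisible assignments first and then optimally splitting $p$ gives exactly the conditional MMS. The max–min structure separates because, for a fixed assignment, the inner problem is precisely what WATERFALL-FILL solves. The argument also depends on interpreting ``conditional'' correctly: the set of sold goods is fixed, $p$ is their total price, and the remaining goods are valued with $v$. This conditioning is what allows the outer algorithm to enumerate the choices of sold goods separately.
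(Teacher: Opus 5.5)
Your proposal is correct and follows the same route as the paper's (sketched) proof. The conditional MMS is the maximum of WATERFALL-FILL over the sum vectors the dynamic program marks as feasible, and the running time is the table size times $\tilde{O}(n)$ work per entry. You also supply what the paper leaves implicit: the induction showing the table records exactly the achievable sum vectors, and the correct reading of its initialization.
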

\begin{proof}[Proof sketch.]
    The running time is due to the algorithm running over an array of size $(m \cdot K)^n \cdot m$ and performing a $\tilde{O}(n)$ operation at each step. 

    The correctness holds since the algorithm takes the maximum over the 
    
    \noindent
    WATERFALL-FILL value, which guarantees the best attainable MMS given sums $S_1, \ldots, S_n$ and divisible sale proceeds $p$, over all feasible sums, given by the dynamic program. 
\end{proof}

MAX-SUM-ACHIEVABLE-GENERAL-$n$ finds the MMS given a particular selling decision, which translates an instance to goods with value $v$, and a divisible resource $p$. Finally, we make the final abstraction to consider the original instance with a valuation $v$ and market price $p$ over $m$ goods in COMP-MMS-FIXED-$n$. We make use of Lemma~\ref{lem:number_of_split_goods} in that we only need to consider subsets of size up to $n-1$ of goods that are sold in the MMS partition, and use $\bar{v}$ for the remaining goods. 

\begin{algorithm}
    \SetAlgoLined
\DontPrintSemicolon
\KwIn{Number of agents $n$, valuation $v$ and market price $p$ in range $[0,K]$ over goods $\items$}
\KwOut{MMS value for $n,v,p$}

Let $\forall g, \bar{v}(g) = \max \{v(g), p(g)\}$.

$currentMMS = 0$

\For{$s = 0$ to $n-1$} {
\For{$S = \{j_1, \ldots, j_s\} \subseteq \items,  s.t. |S| = s$} {

    $currentMMS = \max \{ currentMMS, $ MAX-SUM-ACHIEVABLE-GENERAL-n$(n, \bar{v}, \items \setminus S, \sum_{g\in S} p(g))\}$
}
}

return currentMMS;

 \caption{COMP-MMS-FIXED-$n$}
 \label{alg:MMS_fixed_n}
\end{algorithm}

\begin{lemma}
    COMP-MMS-FIXED-$n$ computes the MMS for an instance of goods with selling in pseudo-polynomial time $\tilde{O}((\frac{e \cdot m}{n} \cdot m \cdot K)^{n-1} \cdot n \cdot m)$. 
\end{lemma}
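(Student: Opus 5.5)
The lemma has two parts, correctness and running time, and I would prove them separately.

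\emph{Correctness.} Lemma~\ref{lem:number_of_split_goods} says there is an MMS output in which at most $n-1$ goods are forced-sale. Fix such an output, and let $S$ be its set of forced-sale goods. Every other good either belongs in full to a single bundle or is sold with all its proceeds going to one bundle. In either case the bundle may use $\bar{v}$ for it: an agent holding a good in full chooses whether to keep or sell it, exactly as in the unified-bundle convention at the start of Appendix~\ref{app:missing_proofs_n=2}.

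So the MMS equals the maximum, over all $S$ with $|S|\le n-1$, of the best maximin value attainable as follows. Partition $\items\setminus S$ into $n$ bundles valued by $\bar{v}$, then distribute the divisible amount $\sum_{g\in S}p(g)$ among the bundles. Granularity does not matter here, since money is fungible.

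This is exactly what Lemma~\ref{lem:max-sum-achievable-general} computes for a fixed $S$. The dynamic program enumerates all achievable bundle-sum vectors of $\bar{v}$ over $\items\setminus S$, and WATERFALL-FILL optimally distributes the money for each vector.

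Two inequalities then finish the argument:
\begin{itemize}
\item \emph{Lower bound.} Every value the algorithm reports corresponds to a valid outcome: sell $S$, allocate the partition, and split the money. Hence the reported maximum is at most the MMS.
\item \emph{Upper bound.} The optimal outcome is captured when the loop reaches its set $S$, so the maximum is at least the MMS.
\end{itemize}
One small point needs care: the DP over $\items\setminus S$ is implemented by setting $\bar{v}$ to $0$ on $S$, as in COMP-MMS-n=2. This does not change the achievable sums.

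\emph{Running time.} The outer loops range over at most $\sum_{s=0}^{n-1}\binom{m}{s}$ subsets. The standard bound $\binom{m}{s}\le (em/s)^s$ gives a total of at most roughly $(em/n)^{n-1}$ subsets, up to factors polynomial in $n$ that are absorbed into the $\tilde{O}$ and the $n$ factor. A naive sum of $\binom{m}{s}$ for $s\le n-1\le m/2$ is dominated by the top term up to a factor $n$.

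Each call costs $\tilde{O}((mK)^n\cdot n\cdot m)$ by Lemma~\ref{lem:max-sum-achievable-general}. Multiplying naively gives $(em/n)^{n-1}(mK)^n nm$, which is one factor of $mK$ more than the stated bound. I expect this exponent accounting to be the main obstacle.

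The fix I would try first is to reduce the DP dimension by one. Track only $n-1$ coordinates, since the last bundle's sum is determined by the total, which gives $(mK)^{n-1}$ table entries. This matches the stated $\tilde{O}((\frac{em}{n}\cdot mK)^{n-1}\cdot n\cdot m)$.

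If that does not align exactly, I would fall back on a bound with the natural exponents, $(em/n)^{n-1}(mK)^n nm$. This is still pseudo-polynomial for fixed $n$, and pseudo-polynomiality is the substantive claim of the lemma.
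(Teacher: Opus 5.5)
Your argument is essentially the paper's. Correctness comes from Lemma~\ref{lem:number_of_split_goods} (at most $n-1$ forced-sale goods, with every other good valued by $\bar{v}$) combined with Lemma~\ref{lem:max-sum-achievable-general} for each fixed $S$. The running time is the number of candidate sets $S$ times the per-call cost. Your explicit two-sided inequality is a more careful version of the paper's one-line ``taking the maximum over all these options results in the overall MMS value.''

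The exponent discrepancy you flagged is real, and it is not resolved in the paper. The paper's proof multiplies $\sum_{s=0}^{n-1}\binom{m}{s}\le(\frac{em}{n-1})^{n-1}$ by the $\tilde{O}((mK)^n\cdot n\cdot m)$ bound of Lemma~\ref{lem:max-sum-achievable-general} and asserts the stated bound. That product carries the extra factor $mK$ you noticed.

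Your dimension reduction is a valid repair. After processing the first $j$ goods of $\items\setminus S$, every reachable sum vector has coordinate sum equal to the $j$-th prefix sum of $\bar{v}$. So the last coordinate is determined, and the table needs only $(mK+1)^{n-1}$ entries per level, each with $O(n)$ work. The final WATERFALL-FILL pass likewise needs only $(mK+1)^{n-1}$ calls. Combined with the subset count, this gives exactly the stated $\tilde{O}((\frac{em}{n}\cdot mK)^{n-1}\cdot n\cdot m)$, because $(\frac{em}{n-1})^{n-1}\le e\,(\frac{em}{n})^{n-1}$. Your fallback bound is therefore unnecessary.

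One small tightening: do not bound the subset count by ``top term times $n$''. That costs an extra factor of $n$, which $\tilde{O}$ does not absorb. Instead, use the direct bound $\sum_{s=0}^{n-1}\binom{m}{s}\le(\frac{em}{n-1})^{n-1}$ on the whole sum, as the paper does.
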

\begin{proof}
    The running time follows by iterating over all choices of up to $n-1$ goods out of $m$, where we can upper bound the binomial sum \cite{alon2016probabilistic}:

$$\sum_{s=0}^{n-1} \binom{m}{s} \leq (\frac{e \cdot m}{n-1})^{n-1}.$$

This is multiplied by the running time of 

\noindent MAX-SUM-ACHIEVABLE-GENERAL-$n$ in each iteration, which is given by Lemma~\ref{lem:max-sum-achievable-general}. 

Regarding correctness, we use the fact that by Lemma~\ref{lem:number_of_split_goods}, an MMS output for $n$ agents has, w.l.o.g., at most $n-1$ split goods. Then, for any fixed choice of split goods $\sellable_C$, we sell them, and MAX-SUM-ACHIEVABLE-GENERAL-$n$ guarantees (through Lemma~\ref{lem:max-sum-achievable-general}) the best conditional MMS guarantee given the indivisible goods $\items \setminus \sellable_C$ and the divisible sale proceeds $\sum_{g\in \sellable_C} p(g)$. Taking the maximum over all these options results in the overall MMS value. 
\end{proof}

\subsection{A Polynomial-Time Approximation Scheme (PTAS) for MMS Computation for Goods With Selling}

We show how to reduce an instance of goods with selling to an instance of indivisible goods (without selling) for the purpose of a PTAS. We can then rely on the known PTAS for goods without selling \cite{woeginger}. 

\begin{theorem}
For the purpose of computing the MMS value and an associated MMS partition,
    given a PTAS for indivisible goods without selling, there is a PTAS for goods with selling.
\end{theorem}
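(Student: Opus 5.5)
We plan to reduce the problem with selling to polynomially many instances without selling, run the assumed PTAS on each, and take the best answer. By Lemma~\ref{lem:number_of_split_goods}, some MMS output has at most $n-1$ forced-sale goods, and every other good can be treated as an indivisible good of value $\bar{v}(g)$. So the MMS value is the maximum, over sets $S$ of at most $n-1$ goods, of the best maximin value for the indivisible goods $\items\setminus S$ (valued by $\bar v$) plus a divisible amount $P_S=\sum_{g\in S}p(g)$ of money. Since $n$ is part of the input, we cannot enumerate all such $S$. The first step is therefore to shrink the search space using the accuracy parameter $\epsilon$.

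Fix $\epsilon$ and a target value $T$; we will guess $T$ up to a factor $(1+\epsilon)$ by binary search over the polynomially many thresholds $s_k$ used in the TPS computation. We treat money as a divisible good and simulate it by indivisible ``coins''. Sell the chosen goods and cut the total money $P_S$ into pieces of size $\epsilon T$, giving at most $P_S/(\epsilon T)+1$ coins. Any maximin partition of the indivisible goods plus divisible money $P_S$ can be rounded to one using coins: each bundle loses at most one coin, a loss of at most $\epsilon T$. The count is polynomial because $P_S\le nT\cdot O(1)$ whenever $T$ is near the MMS; otherwise some bundle would exceed what is useful, and money beyond $nT$ can be discarded. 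Hence, for a fixed $S$, the PTAS without selling, applied to $\bar v$ on $\items\setminus S$ together with the coins, gives a $(1-O(\epsilon))$-approximation.

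The main obstacle is choosing $S$ without enumerating $\binom{m}{n-1}$ sets. We intend to classify goods by their value relative to $T$. A good with $\bar v(g)\ge T$ can be treated in one of two ways: it forms its own bundle when kept, or its price is capped at $T$ in any useful sale. For the remaining goods, selling $g$ changes its contribution from $\bar v(g)$ to the divisible amount $p(g)\le \bar v(g)$, and this is useful only when $v(g)>p(g)$. Call a good \emph{small} if $\bar v(g)<\epsilon T$. For small goods, selling is never useful up to $\epsilon T$ per bundle, since bag-filling already achieves divisibility up to $\epsilon T$. So we may keep all small goods unsold.

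Large goods have value at least $\epsilon T$, so we round each pair $(\bar v(g),p(g))$ down to a power of $(1+\epsilon)$ times $\epsilon T$. This leaves $O(\log(1/\epsilon)/\epsilon)^2$ types, a constant for fixed $\epsilon$. A sold set $S$ is then described by how many goods of each type it sells, up to $n-1$ per type, which gives $n^{O_\epsilon(1)}$ options; rounding costs only a $(1+\epsilon)$ factor. For each option we run the coin reduction and the PTAS, take the best, and let the corresponding partition be the MMS partition, which is feasible for the original instance. Finally we check that the composed errors $(1-\epsilon)^{O(1)}$ can be rescaled, which yields the PTAS for both the value and the partition.
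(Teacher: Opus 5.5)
Your architecture is essentially the paper's: scale by the MMS level, never sell goods of value below $\epsilon T$, group the remaining goods into $O_\epsilon(1)$ classes, enumerate the $n^{O_\epsilon(1)}$ per-class counts of sold goods, replace money by coins of size $\epsilon T$, and run the no-selling PTAS on each configuration. The step that fails is your treatment of goods with large price.

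\textbf{The gap: high-price goods.} The claim that such a good's price ``is capped at $T$ in any useful sale'' is false, because money from a single good can fund several bundles. Take $n=2$ and a single good with $v(g)=100$, $p(g)=10$. Then $\mathrm{MMS}=5$ (sell and split). Keeping $g$ gives $0$, and selling at a price capped at $T=5$ gives only $2.5$.

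Yet this cap is exactly what justifies your count of $O((\log(1/\epsilon)/\epsilon)^2)$ types. Without it, useful prices range up to about $nT$. The number of price classes then grows with $n$, and the enumeration of $n^{\#\text{types}}$ configurations is no longer polynomial.

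The missing idea is the paper's preprocessing: sell outright every good with $p(g)\ge \mathrm{TPS}$, which the paper notes is computable. This never lowers the MMS. In an MMS partition, the bundle that kept such a good still has value at least $p(g)\ge \mathrm{TPS}\ge \mathrm{MMS}$ after the good is replaced by its proceeds. Afterwards all remaining prices are below $\mathrm{TPS}\le 2\,\mathrm{MMS}$.

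\textbf{Two smaller holes.}

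\emph{Low prices in your two-dimensional types.} You do not say what happens to a large good with $p(g)<\epsilon T$, which cannot be rounded down to a power of $(1+\epsilon)$ times $\epsilon T$. One way to patch this uses the proof of Lemma~\ref{lem:number_of_split_goods}. In an MMS output with the fewest sold goods, every bundle holds less money than the cheapest sold good. So if any sold good has price below $\epsilon T$, un-selling everything and discarding the money costs each bundle less than $\epsilon T$.

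The paper sidesteps price classes altogether. It buckets goods by value only, and within each bucket it sells the highest-priced goods. An exchange argument shows this loses at most a $(1+\epsilon)$ factor.

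\emph{Guessing $T$.} Binary search over the breakpoints $s_k$ of the TPS computation is not justified, since the MMS need not lie near any $s_k$. Simply use the computable TPS as the unit, as the paper does. Since $\mathrm{TPS}/2\le \mathrm{MMS}\le \mathrm{TPS}$, either no guessing or $O(1/\epsilon)$ geometric guesses suffice.
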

\begin{proof}[Proof sketch]
Let ALG denote a PTAS for goods without selling.

In the instance with selling, 
    we sell all goods with $p(g) \ge v(g)$, and iteratively sell all goods with $p(g)$ higher than the TPS (recall that the TPS is computable in polynomial time). None of these sales can decrease the MMS. By multiplicative scaling, we may assume that the TPS of the resulting instance is~1. This also implies that the MMS is at least $\frac{1}{2}$. (This relation between MMS and TPS is known when there are no sellable goods. Its proof, by a simple bucket-filling algorithm, applies also when there are sellable goods.)
    
    Choose $k \simeq \frac{1}{\epsilon} \log \frac{1}{\epsilon}$, so that $\epsilon (1 + \epsilon)^k = 1$ (if needed, tweak the value of $\epsilon$ so that $k$ with this property exist). For every unsold good, we place it in one of $k + 2$ buckets. Bucket $B_0$ contains goods with $v(g) \leq \epsilon$. For $1 \le i < k$, bucket $B_i$ contains goods with $\epsilon \cdot (1+\epsilon)^{i-1} < v(g) \leq \epsilon \cdot (1+\epsilon)^i$. Bucket $B_k$ contains those goods with $v(g) > 1$.
    %Let us assume w.l.o.g. the maximum value of an item is normalized to $1$ \ygc{I'm actually not sure why it's ok to do that}, then it suffices to choose $k$ so $\epsilon \cdot (1 + \epsilon)^k = 1$, which yields $k = \frac{\log(\frac{1}{\epsilon})}{\log(1+\epsilon)} \in O(\frac{1}{\epsilon} \log(\frac{1}{\epsilon}))$. 

    For the purpose of obtaining an MMS partition, the benefit of selling goods for $B_0$ is negligible, as they never cause an imbalance larger than an additive term of $\epsilon$ among the bundles of the MMS partition.

    By Lemma~\ref{lem:number_of_split_goods}, we need to sell at most $n-1$ unsold goods. A {\em configuration} specifies how many goods from each of the $k+1$ buckets $B_1, \ldots, B_{k+1}$ need to be sold. The total number of possible configurations is at most $n^{k+1}$. Given a configuration $s = s_1, \ldots, s_{k+1}$, the $s_j$ goods that we sell from each bucket $B_j$ are those that have the highest price, because in terms of $v$ value, there is negligible difference among goods in the same bucket. This applies also to goods in $B_{k+1}$, because any unsold good in $B_{k+1}$ has value larger than the MMS, and for the purpose of achieving an MMS partition, they are interchangeable. 

    For a given configuration, we group the money obtained from selling goods into small auxiliary indivisible goods of value $\epsilon$ each, and add them to the instance. There are at most $O(\frac{m}{\epsilon})$ auxiliary goods. 
    
    Then, for each configuration we run ALG. Our output approximation for the MMS (and the MMS-partition) is the one given by the configuration for which ALG gave the highest MMS value. 

    %We account for sources of error in the MMS estimation. One is due to the $\epsilon$ granularity of the auxiliary items (rather than being continuous), which may cause an error up to $\epsilon$. Another error cause may be the $\epsilon$ of the PTAS of goods without selling. Lastly, our choice of items to sell from each bucket may have an error of up to $\epsilon$ each. Let $\eta$ be the amount of items sold. Each item sold has at most $\epsilon$ difference in value from any other item in their bucket. Thus, overall the difference is bounded by $\eta \cdot \epsilon$. By Lemma~\ref{lem:number_of_split_goods} we can assume at most $\eta + 1$ bundles in the resulting MMS partition contain sale proceeds, and by Lemma~\ref{lem:split_equal_mms}, all the bundles containing sale proceeds have equal values (up to $\epsilon$). Thus adding the $\eta \cdot \epsilon$ term must be split equally between all these bundles, and results in $\epsilon$ change in MMS overall. 

It is not hard to see that the algorithm sketched above gives a $(1 - O(\epsilon))$ approximation to the MMS. In terms of its running time, it calls ALG with each of at most $n^{k+1} \le n^{f(\epsilon})$ configurations, each time with a number of goods that is $O(\frac{m}{\epsilon})$. Here, $f(\epsilon) = {\frac{1}{\epsilon} \log(\frac{1}{\epsilon})}$. As ALG is a PTAS, so is our algorithm. 

    %We conclude that the error is bounded by $3\epsilon$, which is a constant expression of $\epsilon$, and so our approximation scheme achieves good approximation. In terms of run-time, it has polynomial dependence in $m, n$: If the goods without selling has run-time $f(\frac{1}{\epsilon}, m, n)$ (and, per our assumption, this is polynomial in $n,m$), then the run-time of our PTAS is: $O(n^{\frac{1}{\epsilon} \log(\frac{1}{\epsilon})} \cdot f(\frac{1}{\epsilon}, m, n))$. 

    %\ygc{Uri: I tried to translate your emails about the PTAS into this proof sketch, in particular I felt there was a point to be made about how to ensure the error resulting from bucketing is not dependent on $n$ or $m$, what do you think..? Maybe it's simpler than that, and we can just round all items down because by this we lose a multiplicative (rather than additive) factor of at most $(1-\epsilon)$.}
\end{proof}

\section{Extension: Chores with Outsourcing}

It is natural to translate our model to the chores setting, which we call ``Chores with Outsourcing''. In the chores setting, agents have subjective costs of performing different tasks, but there is also an option to pay a known cost for the task to be outsourced. The cost can be divisibly shared between the agents. We do not restate the entire model, but rather use $c$ (cost) functions instead of $v$ (valuation) functions notation, and $t$ (task) instead of $g$ (good) for the goods. We let $$\bar{c}_i(t) = \min \{c_i(t), p(t)\}.$$ Notice that $c$ and $p$ receive positive values, representing the \textit{dis-utility} of the allocation (which we aim to minimize). 
%\ufc{$\bar{c}_i(t) = \min \{c_i(t), p(t)\}$?}

How do our results translate to this setting? The $n=2$ case goes through well, and allows us to guarantee the MMS, using an analogous procedure to that of the goods with selling setting. 

\subsection{\texorpdfstring{$2$}{2}-MMS Approximation for Chores}

We need to consider throughout the algorithm that outsourced chores may be very costly relative to the MMS. This is potentially a problem, if we consider, as a starting point, the following natural algorithm for indivisible chores (without outsourcing). The algorithm is an adaptation of the bag-filling algorithm in \cite{apxMMS2017}, which was given for the goods setting, and gives a $2$-approximation. 

%Though it gives a $2$-approximation for indivisible chores (without outsourcing), it is different from the $2$-approximation round-robin algorithm of \cite{apxMMSChores}. \ufc{The logic of the last sentence is not clear. It is as if you assume that the reader assumes that all factor 2 approximations that are known are from \cite{apxMMSChores}. Just say that it gives a factor 2 approximation.. No reason to mention \cite{apxMMSChores}.}

\begin{algorithm}
    \SetAlgoLined
\DontPrintSemicolon
\KwIn{Cost functions $c_1, \ldots, c_n$ over chores $\items$, number of agents $n$}
\KwOut{An allocation $B$ with at most $2\cdot MMS$ cost for each agent}

\For{$j=1$ to $n$} {
    Let $B_j = \emptyset$.

    Do bag-filling for $B_j$ until all agents agree $c_i(B_j) \geq MMS_i$.

    Allocate $B_j$ to an agent that has $c_i(B_j) \leq 2\cdot MMS_i$. 
}

 \caption{APX-MMS-Indivisible-Chores-2}
 \label{alg:APX-MMS-Indivisible-Chores-2}
\end{algorithm}

The algorithm achieves the $2$-approximation because of an important property for indivisible chores MMS: No chore costs more than the MMS cost. If a chore were to cost more, then in any partition, the bundle that contains it would cost more than the MMS, including in the MMS partition, which is a contradiction. Thanks to this property, we are guaranteed that if for some agent a bundle's dis-utility is at most the MMS, then the dis-utility after adding any chore is at most $2 \cdot MMS$. This allows us to always be able to fill a bag that is worth at least the MMS cost for all agents, but still have an agent that values it at most at $2 \cdot MMS$%(the value before the addition of the last item, which was less than the MMS or otherwise we would stop earlier, in addition to at most the MMS cost from the last added item)
. However, as noted before, this property is no longer true in the chores with outsourcing setting, if we consider allocating complete outsourced tasks (i.e., adding the full costs of an outsourced tasks within a bundle). 

Still, with a careful adjustment and analysis, it is possible to achieve the $2$-approximation. The idea is as follows. At each round of bag-filling, we put aside chores that cause issues in the bag-filling process, i.e., before adding them some agents consider the bundle has below $MMS_i$ cost, and after adding them all agents think the bundle has above $2 MMS_i$ cost. If we end up putting all chores aside (except for those already in the bag) and are unable to complete the bag-filling as we wish, then we allocate the bundle as it is, and continue to the next round. This violates the bag-filling principle, in that some agents may consider the cost of the allocated bundle is too low (below $MMS_i$). However, the gist of the proof is that this violation only happens at most once in the perspective of any agent $i$, and this still allows for the argument to follow through. 

\begin{algorithm}
    \SetAlgoLined
\DontPrintSemicolon
\KwIn{Costs $c_1, \ldots, c_n$ and outsourcing price $p$ over chores $\items$, number of agents $n$}
\KwOut{An allocation $\mathbf{B}$ with at most $2\cdot MMS$ cost for each agent}

Let $AVAIL = \items$ be all the available goods, $ACTIVE = [n]$ all the active agents.

\For{$j=1$ to $n-1$} {
    Let $X = \emptyset, \Psi = \emptyset$. // $X$ is the bundle we bag-fill. $\Psi$ are all the chores we take out of circulation for the current round of bag-filing. 

    \While{$\Psi \neq AVAIL \setminus X$} {
    Add chores to $X$ from $AVAIL \setminus (X \cup \Psi)$, until all active agents $i$ agree that $\bar{c}_i(X) \geq MMS_i$. Let $t$ be the last chore added to $X$. 
    
    \If{ There is an agent $i$ so that $\bar{c}_i(X) \leq 2\cdot MMS_i$} {

        %Add all chores of zero cost to $i$ in $AVAIL \setminus X$ to $X$. 

        $B_i = X, ACTIVE = ACTIVE \setminus \{i\}, AVAIL = AVAIL \setminus B_j$. 
        
    } \Else {
        $\Psi = \Psi \cup \{t\}, X = X \setminus \{t\}$.
    }

    }

    \If{$X \neq \emptyset$} {
    Allocate $X$ to an agent $i$ that has $c_i(X) \leq 2\cdot MMS_i$, and let $ACTIVE = ACTIVE \setminus \{i\}, AVAIL = AVAIL \setminus X$. 
    } \Else{

        Outsource all chores in AVAIL, share equally among all ACTIVE agents, and return. 
    }

}

Allocate all chores in AVAIL to the last remaining agent in ACTIVE, and return.

 \caption{APX-MMS-Chores-With-Outsourcing-2}
 \label{alg:APX-MMS-Chores-With-Outsourcing-2}
\end{algorithm}

\begin{lemma}
    APX-MMS-Chores-With-Outsourcing-2 guarantees a 
    
    \noindent
    $2$-approximation for all agents. 
\end{lemma}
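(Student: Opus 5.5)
We prove the statement by tracking, for a fixed agent $i$, the total $\bar{c}_i$ cost of the chores assigned to other agents. The goal is to show that when $i$ is still active, enough cost has been absorbed that the remaining bundle is cheap for $i$. Two facts need to be established.

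\textbf{First fact: every allocated bundle costs at most $2\cdot MMS_i$ for its recipient.} In the main branch this holds by the explicit check. In the branch where the while loop ends with $\Psi = AVAIL\setminus X$ and $X \neq \emptyset$, we need an agent $i$ with $c_i(X) \le 2 MMS_i$; since such a bundle may be outsourced, the relevant cost is $\bar{c}_i(X)$. Here we argue that $X$ is the bag before the last failed addition. Every removal was triggered by a chore $t$ pushing the bag from "some agent below $MMS$" to "everyone above $2MMS$". So the current $X$ still has some active agent $i$ with $\bar{c}_i(X) < MMS_i \le 2MMS_i$. We must also check that after a removal the bag is not reconstructed above the threshold. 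This holds because bag-filling only adds chores until everyone is above $MMS$, and the first such crossing would have triggered either allocation or another removal.

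\textbf{Second fact: the recipient of the final bundle is satisfied.} Call a round \emph{regular} if the allocated $X$ has $\bar{c}_k(X)\ge MMS_k$ for every active $k$, and \emph{irregular} otherwise (the stuck case). The key claim is that from the viewpoint of a fixed agent $k$, at most one irregular round precedes the end. The reason is as follows. In an irregular round, every chore $t\in\Psi$ satisfies $\bar{c}_k(X\cup\{t\}) > 2MMS_k$ for all active $k$, so $\bar{c}_k(t) > MMS_k$. By the definition of MMS with outsourcing, such a chore must be outsourced in $k$'s MMS partition, with its price $p(t)$ split across bundles, so $p(t)\le n\cdot MMS_k$. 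Moreover, all remaining chores outside $X$ lie in $\Psi$. Hence after an irregular round, $AVAIL$ consists only of chores that every remaining agent would rather outsource. In the next round, either a regular allocation happens, or no bag can be formed, $X=\emptyset$, and the algorithm outsources everything and shares the cost equally.

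\textbf{Counting for the equal-share branch.} Let $r$ regular rounds have occurred before $k$ reaches this branch. Then $\bar{c}_k(AVAIL) \le n\,MMS_k - r\,MMS_k$, using $\bar{c}_k(\items)\le n\,MMS_k$ (a proportionality bound analogous to Lemma~\ref{lem:mms_tps_ps}, transposed to chores). Since $n-r$ agents remain active, each pays at most $\frac{n-r}{n-r}MMS_k$, up to the single irregular round. That round removed nonnegative cost, which only helps.

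\textbf{Counting for the last agent.} If the algorithm reaches the final line, the last agent $k$ saw $n-1$ bundles. At most one was irregular, and those allocated while chores costing more than $MMS_k$ remained were regular. So the regular bundles carry cost at least $(n-2)MMS_k$, and the leftover costs at most $2MMS_k$.

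\textbf{Main obstacle.} The hardest step is making the "at most one irregular round per agent" claim and the bound $\bar{c}_k(\items)\le n\,MMS_k$ precise. In particular, the leftover cost must be measured with $\bar{c}_k$ even though the last agent is handed chores rather than outsourcing them. We would resolve this by letting agents outsource their own bundle at their own cost, so that $\bar{c}$ is the true cost throughout.
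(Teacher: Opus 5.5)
Your skeleton matches the paper's: at most one cheap bundle per agent, counting the remaining bundles, and the two terminal branches. Your argument that the post-loop bag $X \neq \emptyset$ still has an agent with $\bar{c}_i(X) < MMS_i$ is a useful addition, since the paper only asserts that step. There is, however, a genuine gap in the equal-share branch. There each active agent is charged $p(\mathrm{AVAIL})/|\mathrm{ACTIVE}|$, but what you bound is $\bar{c}_k(\mathrm{AVAIL})$, using the proportionality inequality $\bar{c}_k(\items) \le n\cdot MMS_k$. Since $\bar{c}_k(t) = \min\{c_k(t), p(t)\}$ can be strictly below $p(t)$, that bound does not control the payment. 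For example, a chore with $c_k(t) = \frac{5}{2}MMS_k < p(t)$ counts $\frac{5}{2}MMS_k$ in your sum but $p(t)$ in the bill. Your fix of letting agents outsource their own bundles repairs only the last-agent branch.

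The missing step is as follows. In this branch the loop ended with $X = \emptyset$. Since $X$ only grows between removals, every remaining chore $t$ was tested on top of a sub-bag of the final $X=\emptyset$, i.e., alone, and pushed every active agent above $2\,MMS_k$. So $c_k(t) \ge \bar{c}_k(t) > MMS_k$, which means $t$ cannot sit in any bundle of $k$'s MMS partition and must be outsourced there, contributing its full price. This yields $p(\mathrm{AVAIL}) + \bar{c}_k(\items \setminus \mathrm{AVAIL}) \le n\cdot MMS_k$, which is the inequality to combine with the count of allocated bundles. You do note that such chores are outsourced in $k$'s partition, but you extract only $p(t) \le n\cdot MMS_k$, which is not what is needed.

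Your bookkeeping for the irregular round also needs two fixes.
\begin{itemize}
\item Irregularity must be defined relative to the fixed agent $k$, i.e., a bundle $B$ with $\bar{c}_k(B) < MMS_k$. Globally there can be several irregular rounds. After one of them, $\bar{c}(t) > MMS$ for all remaining chores is guaranteed only for the agent for whom the round was irregular, not for ``every remaining agent.''
\item An irregular round removes an agent without removing $MMS_k$ of cost, so it hurts rather than helps. If $K$ bundles were allocated before the equal-share step, at least $K-1$ are regular for $k$. This gives $p(\mathrm{AVAIL}) \le (n-K+1)MMS_k$, split among $n-K \ge 2$ agents, i.e., at most $\frac{n-K+1}{n-K}MMS_k \le \frac{3}{2}MMS_k$, not $\frac{n-r}{n-r}MMS_k$.
\end{itemize}
With these corrections your argument coincides with the paper's, and the last-agent branch is fine as you wrote it.
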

\begin{proof}
    
    All chores are allocated by the end of the algorithm, because either a bundle is allocated at each round of the for loop, or there is a round where all remaining chores are outsourced and the sale proceeds are split among the active agents. If the latter happens, then clearly all chores are allocated. If the former happens, then the remaining agent in ACTIVE takes all remaining chores. 

    Let us consider the approximation guarantee. It is clear by the IF conditions that any agent allocated a bundle within the bag-filling loop (rather than shares the outsourcing cost) has a cost of at most $2 \cdot MMS$. We thus focus on the last agent that is allocated a bundle outside the loop, or on the agents that share the outsourcing cost (the two possible return conditions, of line 17 and 20). 

    Let us first consider the last agent $i$ that is allocated a bundle in line 20. If all bundles of other agents have a cost of at least $\bar{c}_i(X) \geq MMS_i$, then overall, $\bar{c}_i(\cup_{j\neq i} B_j) = \sum_{j\neq i} \bar{c}_i(B_j) \geq (n-1) \cdot MMS_i$. Since it also holds that $n \cdot MMS_i \geq \bar{c}_i(\items)$, we conclude that $\bar{c}_i(\items \setminus (\cup_{j\neq i} B_j)) \leq MMS_i$. 

    If for some other agent $j$ it holds that $\bar{c}_i(B_j) < MMS_i$, then it must be allocated a bundle within the second IF condition in line 14 (as it is prohibited by the pre-condition of the first IF condition). Take the first such bundle that is allocated within the second IF condition.
    This means that when agent $j$ is allocated bundle $B_j$, it holds that $B_j \cup \Psi = AVAIL$, and for any $t \in \Psi$, $\bar{c}_i(B_j \cup \{t\}) > 2 MMS_i$. We thus conclude that $\bar{c}_i(t) > MMS_i$ for any $g \in \Psi$. Since $\Psi$ is all the available chores from this round on-wards, $\bar{c}_i(t) > MMS_i$ subsequently holds in all future rounds for any chore in AVAIL. This means that any other bundle $X$ that is allocated, since it contains some chore $t\in AVAIL$, has $\bar{c}_i(X) > MMS_i$. Overall, we conclude that for any other agent $\ell \neq j$, $\bar{c}_i(B_{\ell}) \geq MMS_i$, and we can repeat our previous argument to show that $\bar{c}_i(\items \setminus (\cup_{\ell \neq i} B_{\ell})) \leq \bar{c}_i(\items \setminus (\cup_{\ell \not \in  \{j,i\}} B_{\ell}))\leq n \cdot MMS_i - (n-2) MMS_i = 2MMS_i$. 

    Now consider if the outsourcing step of line 17 takes place. This step only happens if $X = \emptyset$. Since the condition happens after the While loop is terminated, it must be that $\Psi = AVAIL \setminus X = AVAIL$. Thus, all remaining agents consider all remaining chores $t\in AVAIL$ to have $\bar{c}_i(t) > 2MMS_i$ (otherwise, either the IF condition of line 6 would be invoked, or $X$ would not be empty). Since a chore that has $\bar{c}_i(t) > MMS_i$ must be outsourced in the MMS partition, we conclude that all agents outsource the remaining chores in their MMS partition. Following our previous arguments, for any agent $i \in ACTIVE$, it holds that if $k$ rounds of bag-filling have been previously completed, $p(AVAIL) + \bar{c}_i(\cup_{j=1}^k B_j) = p(AVAIL) + \sum_{j=1}^k \bar{c}_i(B_j) \leq n \cdot MMS_i$, and $\sum_{j=1}^k \bar{c}_i(B_j) \geq (k-1) MMS_i$, and so $p(AVAIL) \leq (n-k+1) MMS_i$. 
$k$ rounds previously completed also implies there are also $n-k$ agents in ACTIVE. Overall, each agent is assigned outsourcing costs of $\frac{n-k+1}{n-k} MMS_i \leq \frac{3}{2} MMS_i$, where the last inequality holds because $1\leq k \leq n-2$. 
\end{proof}

\subsection{Envy Notion Definitions For Chores}
%Defining the envy notions for chores with outsourcing has some subtlety, and for this reason we develop them explicitly. Let us focus on the SEFX notion in order to develop the correct formula (SEF1 can be similarly deduced). In the setting with chores (without outsourcing), for EFX, we require envy to disappear if we remove any of the chores allocated to the agent. But in the case with outsourcing, how does the outsourcing cost factor in? It would seem, in an analogue from goods with selling, that we should add the outsourcing cost, after removing the chore. However, this is either the same, or worse, than $\bar{c}_i$ of the chore, and thus this would yield a notion equivalent to EF. 

%\ufc{Very long introduction, but the definition is fine. I would start with the definition and then explain that the rational is that if $i$ can outsource a chore and share the cost with $j$ in a way that she prefers each of the new bundles over her original bundle, then she should not have received that chore in full.}

%Finally, we conclude with the formal definition for SEFX and SEF1 that follow this approach:

We provide a definition of envy notions for chores.

\begin{definition}
\label{def:chores_EF}
    A chores allocation $(A_1, P_1), \ldots, (A_n, P_n)$ is {\em envy free} (EF) if for every agent $i$ and other agent $j$, {$\bar{c}_i(A_i) + P_i \le \bar{c}_i(A_j) + P_j$.} %\ufc{what is $u_i$?} 
    {An allocation satisfies the following relaxations of EF if the above envy free condition holds whenever $P_i > 0$, and either the EF condition or the following conditions hold if $P_i = 0$.}
    \begin{enumerate}
        \item SEF1 (strong envy free up to one chore). $p(t) - c_i(t) \ge \bar{c}_i(A_i) - \bar{c}_i(A_j) - P_j$ for some chore $t \in A_i$.
         \item SEFX (strong/sellable envy free up to any chore). $p(t) - c_i(t) \ge \bar{c}_i(A_i) - \bar{c}_i(A_j) - P_j$ for \textit{every} chore $t \in A_i$.
    \end{enumerate}
\end{definition}

At a surface level, this definition looks very unlike the EF1 definition for chores (where you remove one chore from the agent's bundle, and compare with the other agent's bundle). We explain why it makes intuitive sense. The immediate analogue in the settings of chores with outsourcing, would be to remove a chore, and replace it with its outsourcing costs. However, this would be the same, or worse, than $\bar{c}_i$ of that chore, and so this would yield a notion equivalent to EF. 

A better way to go about it is to consider the local pairwise maximin principle that guided us in the goods with selling case. There, an agent was willing to accept having a worse-off bundle, if they know that a localized change (selling the good) makes the other bundle worse-off. Otherwise, it could be argued that a localized change can improve the maximin of the two bundles: We sell a good of the other bundle, recover the same value as $B_i$, and are left with some more proceeds that can be shared equally among $B_i$ and $B_j$. 

Similarly, in the chores with outsourcing setting, for SEFX to hold, we should forbid the following scenario: We outsource a chore $t$ from $B_i$, use water-filling to levy the costs on $B_j$ and $B_i \setminus \{t\}$, and end-up with a lower cost than $B_i$, since this would constitute a localized maximin improvement. As an inequality, this takes the form:

$$\frac{\bar{c}_i(B_i \setminus \{t\}) + \bar{c}_i(B_j) + p(t)}{2} \geq \bar{c}_i(B_i).$$

By changing wings, we can translate it to the condition:

$$p(t) - \bar{c}_i(t) \geq \bar{c}_i(B_i) - \bar{c}_i(B_j).$$

Now, if $\bar{c}_i(t) = p(t)$, this is just the envy-freeness condition. Otherwise, it must be that $\bar{c}_i(t) = c_i(t)$, and so:

$$p(t) - c_i(t) \geq \bar{c}_i(B_i) - \bar{c}_i(B_j).$$

This yields the nice ``delta'' form of Definition~\ref{def:chores_EF}: The difference between outsourcing and in-house costs should be greater than the envy towards the other agent's bundle.

\section{Extension: Equal Proceeds Case}
\label{sec:equal_proceeds}

Our goal in this section is to consider the impact of imposing an additional restriction over the allocation, where we require that sale proceeds are equally distributed among all agents. This may be due to a regulatory or legal constraint, that attempts to impose fairness on the allocation. As the sale proceeds have global objective value, it stands to reason that they should be equally distributed, but as we see, this may significantly worsen the overall fairness of the allocation. 

Given the additional restriction, we adjust the MMS value an agent may expect, as it can only come from allocations with equal sale proceeds. This may significantly reduce the MMS the agent expects. 

\begin{example} \label{ex:mms_equal_proceeds_gap}

    Consider $m=n$ goods, in the perspective of some fixed agent with the valuations and market price given in Table~\ref{tab:mms_equal_proceeds_gap}. Then, the equal-proceeds MMS value is $\frac{1}{n}$, while the unconstrained MMS value is $1$. 
\end{example}

\begin{table}[ht]
    \centering
    \renewcommand{\arraystretch}{1.3} % Adds vertical padding for fractions
    \begin{tabular}{lcccc}
        \toprule
        & $g_1$ & $\ldots$ & $g_{n-1}$ & $g_n$ \\
        \midrule
        Subjective Valuation $v$ & $1$ & $\ldots$ & $1$ & $0$  \\
        \midrule
        Market price $p$ & $0$ & $\ldots$ & $0$ & $1$ \\ 
        \bottomrule
    \end{tabular}
    \caption{Subjective valuation and market price for Example~\ref{ex:mms_equal_proceeds_gap}}
    \label{tab:mms_equal_proceeds_gap}
\end{table}

The following example shows that no more than a $\frac{1}{n}$ approximation of the equal-proceeds MMS can be guaranteed. 

\begin{example}
\label{ex:equal_proceeds}
Consider $m=n$ goods, with the valuations and market price given in Table~\ref{tab:equal_proceeds}. For agent $1$, the optimal outcome is selling all goods, resulting in $MMS_1 = 1$. For all other agents, the optimal outcome is keeping all goods, each in a singleton bundle, resulting in $MMS_i = L$. 

Now, consider the possible outcomes. If at most one good is sold, then agent $1$ gets at most $1 \cdot \frac{1}{n} + (n-1) \cdot \frac{1}{L} \approx \frac{1}{n} \cdot MMS_1$, when $L$ is sufficiently large. If at least two goods are sold, then some agent $i > 1$ does not get to keep any good, by the pigeonhole principle. They get $\frac{2}{n} = \frac{2}{n \cdot L} MMS_i$, which is arbitrarily low with large $L$. 
\end{example}

\begin{table}[ht]
    \centering
    \renewcommand{\arraystretch}{1.3} % Adds vertical padding for fractions
    \begin{tabular}{lcccc}
        \toprule
        & $g_1$ & $\ldots$  & $g_n$ & MMS \\
        \midrule
        Agent $1$ & $\frac{1}{L}$ & $\ldots$ & $\frac{1}{L}$ & $1$  \\
        Agent $2$ & $L$ & $\ldots$ & $L$ & $L$  \\
        $\ldots$ \\
        Agent $n$ & $L$ & $\ldots$ & $L$ & $L$  \\
        \midrule
        Market price $p$ & $1$ & $\ldots$ & $1$ & -- \\ 
        \bottomrule
    \end{tabular}
    \caption{Subjective valuations and market prices for Example~\ref{ex:equal_proceeds}. We consider $L\geq 1$. }
    \label{tab:equal_proceeds}
\end{table}

We show that this is tight using a bag-filling algorithm that guarantees $\frac{1}{n}$ of the TPS, as defined in Definition~\ref{def:tps_selling}. Notice that this is a definition of the TPS with unconstrained proceeds, which is always larger than the MMS with unconstrained proceeds, by Lemma~\ref{lem:mms_tps_ps}. Thus, this shows that there is no ``double discounting'': Example~\ref{ex:mms_equal_proceeds_gap} shows the equal-proceeds MMS can be $\frac{1}{n}$ of the unconstained MMS, and Example~\ref{ex:equal_proceeds} shows we can only guarantee $\frac{1}{n}$ of the equal-proceeds MMS. However, these are two different examples, and so our result shows they cannot be combined to create a $\frac{1}{n^2}$ gap. 

Let us describe the bag-filling algorithm. We use the notation $$v'_i(e) = \max[\min[TPS_i,v_i(e)], p(e)].$$ Let $S_i = \{e \text{ | } p(e) > \min[TPS_i,v_i(e)]\}, K_i = \items \setminus S_i$, i.e., the set of items the agent prefers to sell according to $v'_i$ and the set of items the agent prefers to keep. Notice that for the purpose of $v'_i, S_i, K_i$, we ignore the fact that upon sale, the agent only keeps $\frac{1}{n}$ of the sale proceeds. We also introduce the notation $\tilde{v}_i(e) = \begin{cases} \frac{1}{n}v'_i(e) & e \in S_i \\ v'_i(e) & e \in K_i.\end{cases}$.

%\ufc{For the sake of TPS, the valuation of agent $k$ is $v'_k(e) = \max[\min[TPS_k,v_k(e)], p(e)]$. This definition also partitions the items into those that $k$ intends to sell, and those that $k$ intends to keep (breaking ties arbitrarily). Write the algorithm below with $v'$ instead of $v$. This  will make the description less ambiguous and shorter. In particular, step (3) becomes.

%Else, start filling a bag $B$ with items until some agent $i$  declares that $v'_i(B) \ge \frac{1}{n} TPS_i - s_r$.   Here, every item $e \in B$ that $i$ intends to sell contributes $\frac{p(e)}{n}$ to $v'_i(B)$. If there is more than one such agent, then the winner of $B$ is the agent $i$ that generates the highest amount of sale proceeds from $B$. Let $s_i$ denote the amount of these sales proceeds. $B$ is given to $i$, who keeps those items that she intended to keep and sells the others. Agent $i$ becomes inactive, and $s_r, B_i$ are updated accordingly. } 

%\ygc{Define v', $\tilde{v}$ or whatever to mark the different valuation functions we use in different parts of the algorithm. }

%\ygc{If $i$ generates less proceeds than $k$, then this implies $k$ didn't value the whole thing in terms of $\tilde{v}$ as more than $\frac{1}{n} TPS$, so not more than $TPS$ in terms of v'. And the other case is the one we know where both agents find the bundle acceptable. }

Let $s_r$ denote the sales proceeds that each agent holds at the beginning of round $r$ (by equal-proceeds, all agents hold the same amount). Initially, agents are not assigned any items, which we denote by $B_i = \emptyset$ for every agent $i$. 
In each round, we do the following. 
We define an agent $i$ as active if it holds a value of less than $\frac{1}{n} \cdot TPS_i$ in goods assigned to it and sale proceeds $s_r$. I.e., an agent is active if $v_i(B_i) + s_r < \frac{1}{n} \cdot TPS_i$. 

Initially, all agents are active. In each round, we ensure at least one agent becomes inactive. We do so by the following sequence of rules: %Note also that several agents may become passive at the same round (as sale proceeds are distributed among all agents), but for simplicity of the presentation, we present the algorithm as if this does not happen.

\begin{enumerate}
    \item If there is an item $e$ and an active agent $i$ with $p(e) \ge TPS_i$, sell $e$. Update $s_r = s_r + \frac{1}{n} \cdot p(e)$.
    \item Else, if there is an item $e$ and active agent $i$ with $v_i(e) \ge \frac{1}{n}TPS_i - s_r$, give $e$ to agent $i$, i.e., set $B_i = B_i \cup \{e\}$. 
    \item Else, start filling a bag $B$ with items until some agent $i$  declares that $\tilde{v}_i(B) \ge \frac{1}{n} TPS_i - s_r$. %  Here, every item $e \in B$ that $i$ intends to sell contributes $\frac{p(e)}{n}$ to $v'_i(B)$. 
    If there is more than one such agent, then the winner of $B$ is the agent $i$ that generates the highest amount of sale proceeds from $B$. Let $s_i$ denote the amount of these sales proceeds. $B$ is given to $i$, who keeps those items that she intended to keep and sells the others. Agent $i$ becomes inactive, and $s_r, B_i$ are updated accordingly.
    %Else, start filling a bag $B$ with items until some agent $i$  declares that $v_i(B) \ge \frac{1}{n} TPS_i - s_r$.   $v_i(B)$ is computed by letting $i$ decide which items to keep and which to sell,  where from sold items she keeps only $\frac{1}{n}$ of the sales proceeds. If there are several ways to decide where $v_i(B) \ge \frac{1}{n} TPS_i - s_r$, \ufc{I did not understand the beginning of the sentence. Is there a typo there?} we assume that the agent maximizes the amount of total sale proceeds. 
    %If there is more than one such agent, then the winner of $B$ is the agent $i$ that generates the highest amount of sale proceeds from $B$. Let $s_i$ denote the amount of these sales proceeds. $B$ is given to $i$, who keeps those items that she intended to keep and sells the others. \ufc{What does the last sentence mean?} $i$ becomes inactive, and $s_r, B_i$ are updated accordingly. 
    \end{enumerate} 

\begin{theorem}
The above algorithm guarantees $\frac{1}{n} \cdot TPS_i$ to every agent $i$. 
\end{theorem}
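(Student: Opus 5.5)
The plan is to show that the algorithm never gets stuck while some agent is still active. Each round makes at least one agent inactive, and an inactive agent's utility $v_i(B_i)+s_r$ never decreases afterwards, because $s_r$ only grows and bundles are never taken away. Termination is clear: each rule either sells an item, gives away an item, or removes a bag from the pool, and there are finitely many items. So it suffices to prove the following. Whenever some agent $k$ is still active, rules 1 and 2 do not apply, and a full bag is poured in rule 3, agent $k$ itself reaches the threshold, that is, $\tilde{v}_k(\text{remaining items}) \ge \frac{1}{n}TPS_k - s_r$. Once this holds, bag filling always terminates with a winner.

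The first step is a per-item bound. For any item $e$ and agent $k$ we have $v'_k(e)\le \max\{TPS_k,p(e)\}$. Hence $\tilde{v}_k(e)\ge \frac{1}{n}v'_k(e)$ always holds, with equality on $S_k$ and domination on $K_k$. By the definition of TPS, $\sum_e v'_k(e)=n\,TPS_k$, so
\[
\sum_e \tilde{v}_k(e)\ \ge\ TPS_k .
\]
The idea is then to charge every item that left the pool against either $k$'s money $s_r$ or $\frac{1}{n}TPS_k$. We split the departed items by the rule that removed them.

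\emph{Rule 1 items.} Such an item was sold while some agent was active, and every sale adds $\frac{1}{n}p(e)\ge \frac{1}{n}v'_k(e)$ to $s_r$ whenever $p(e)\ge \min\{TPS_k,v_k(e)\}$. If $e\in K_k$, then $v'_k(e)\le TPS_k$ and the lost term is at most $\frac1n TPS_k$. The difficulty is that rule 1 may fire for some other agent $i$ even though $k$ valued $e$ more than its price. To avoid charging a full $\frac{1}{n}TPS_k$ per such event, I would use that $p(e)\ge TPS_i$ and argue that $k$ receives $\frac{1}{n}p(e)$ back in money. The leftover loss is at most $\frac{1}{n}(v'_k(e)-p(e))$, which is not obviously small.

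\emph{Rule 3 items.} A bag $B$ was given to some $i\neq k$. Before its last item was added, $k$ had $\tilde v_k(B)<\frac1nTPS_k-s_r$. Since rule 2 did not apply, the last item satisfies $\tilde v_k \le v_k<\frac1nTPS_k - s_r$ if it lies in $K_k$; if it lies in $S_k$, it contributes at most $\frac1n p(e)$. When the winner sells items, every agent, including $k$, gains a $\frac1n$ share of the proceeds. The tie-break to the winner generating the most proceeds is what should make this gain dominate $k$'s share of the sold items that $k$ itself would have sold.

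\emph{Rule 2 items.} These are single items with $v_k(e)<\frac1nTPS_k-s_r$ for the active $k$ at that time, so each loss is below $\frac1nTPS_k$.

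Counting, at most $n-1$ agents became inactive, and each caused a loss to $k$, net of $k$'s money gain, of at most $\frac1nTPS_k - s_r$ in $\tilde v_k$ terms plus possibly one item. Since a loss is attributed only to an agent that becomes inactive and the total is $TPS_k$, what remains plus $s_r$ stays at least $\frac1nTPS_k$.

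The main obstacle I expect is exactly this per-round accounting. The bound $\sum\tilde v_k\ge TPS_k$ leaves slack of only a factor about $n$, while a single bag can be worth nearly $\frac2nTPS_k$ to $k$, namely the pre-threshold part plus the last item. My first attempt would be to sharpen the charge so that each round costs $k$ at most $\frac1n TPS_k$ net. I would do this by noting that either the last item is in $S_k$, where it contributes only $\frac1n v'_k$ and $k$ recovers $\frac1n p$ of it via $s_r$ if the winner sells it, or it is in $K_k$ with value below $\frac1nTPS_k-s_r$. If that fails, I would fall back on using the total budget $TPS_k=n\cdot\frac1nTPS_k$ directly against the rounds in which $k$ stayed active.
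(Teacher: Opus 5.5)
There is a genuine gap. Your reduction is the right one: show that when rule~3 starts, the whole remaining pool already triggers a claim for every active $k$, i.e.\ $\tilde v_k(R)\ge \frac1n TPS_k-s_r$. The gap is the per-round accounting, which you flag as the main obstacle but never carry out, and the potential you set it up with cannot support it. You charge losses against $\tilde v_k(R)+s_r$ with budget $\sum_e\tilde v_k(e)\ge TPS_k$. That requires every round to cost $k$ at most $\frac1n TPS_k$ in $\tilde v_k$ units.

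But an item $e\in K_k$ counts in $\tilde v_k$ with full weight $v'_k(e)=\min\{v_k(e),TPS_k\}$, which can be $TPS_k$. Rule~2 only constrains the recipient $i$, so your claim that rule-2 items satisfy $v_k(e)<\frac1n TPS_k-s_r$ is unjustified. Such an item can go to $i$ and cost $k$ a full $TPS_k$ with no money back. Likewise, under rule~1 with $e\in K_k$, $\tilde v_k$ drops by $v'_k(e)$ while $s_r$ rises only by $\frac1n p(e)$, and $p(e)\ge TPS_i$ may be tiny compared to $TPS_k$. Your expression $\frac1n(v'_k(e)-p(e))$ quietly switches to a different potential. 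The ``plus possibly one item'' in your counting paragraph and the final fallback are not arguments.

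The missing idea is to run the invariant in $v'_k$ units with all proceeds counted globally. For every active $k$, maintain $v'_k(R)+n s\ge |T|\cdot TPS_k$. This holds with equality at the start by the definition of TPS. Then show each round lowers the left side by at most $TPS_k$:
\begin{itemize}
\item Rule~1: the loss is $v'_k(e)-p(e)\le\min\{v_k(e),TPS_k\}\le TPS_k$. So the quantity you called ``not obviously small'' is fine.
\item Rule~2: rule~1 failed, so $p(e)<TPS_k$ and hence $v'_k(e)\le TPS_k$, regardless of $v_k(e)$.
\item Rule~3: let $a=v'_k(B\cap K_k)$, $b=p(B\cap S_k)$, and $d$ the winner's proceeds. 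The loss is $a+b-d$.
\begin{itemize}
\item If $b\le d$, the loss is at most $a\le\tilde v_k(B)<\frac2n TPS_k\le TPS_k$. This uses minimality of $B$ and failure of rules~1--2, which is where your last-item observation belongs.
\item If $b>d$, the tie-break rule forces $k\notin W$. So $na+b<TPS_k-ns$, and the loss is below $TPS_k$. Your remark about the tie-break gestures at this case but does not use it.
\end{itemize}
\end{itemize}
Only at the very end convert via $v'_k\le n\tilde v_k$, which gives $\tilde v_k(R)+s\ge\frac{|T|}{n}TPS_k\ge\frac1n TPS_k$, exactly your target. Working in $v'_k$ units is what defuses the ``$\frac2n TPS_k$ bag'' you worried about: against a budget of $n\cdot TPS_k$, each round costs at most $TPS_k$.
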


%Let $t_{i} = TPS_{i}$. We rely on the effective value $v'_{i}(e) = \max\{p(e), \min\{v_{i}(e), t_{i}\}\}$. Let $K_{i}$ be the items agent $i$ prefers to keep ($p(e) \le \min\{v_{i}(e), t_{i}\}$), and $S_{i}$ the items they prefer to sell. Under equal proceeds, the net value agent $i$ receives from an item is $\tilde{v}_{i}(e) = v'_{i}(e)$ for $e \in K_{i}$, and $\frac{p(e)}{n}$ for $e \in S_{i}$. 
%Note that $v'_{i}(B) = v'_{i}(B \cap K_{i}) + p(B \cap S_{i})$, so pointwise $v'_{i}(e) \le n \cdot \tilde{v}_{i}(e)$.
\begin{proof}
Let $R$ be the set of available goods, $s$ be the money each agent currently holds, and $T$ be the set of active agents. We track the algorithm's progress using the following invariant for every $k \in T$:
\[ v'_{k}(R) + n \cdot s \ge |T| \cdot TPS_{k} \]

Initially, $R = \mathcal{M}$, $s = 0$, and $|T| = n$. The invariant holds with equality because $\sum_{e \in \mathcal{M}} v'_{k}(e) = n \cdot TPS_{k}$. In each round, we ensure at least one agent becomes inactive.

\begin{enumerate}
    \item Rule 1: If $p(e) \ge TPS_{i}$ for some active $i$ and $e \in R$: Sell $e$ and update $s \leftarrow s + \frac{p(e)}{n}$. Agent $i$ reaches the $\frac{TPS_{i}}{n}$ threshold and becomes inactive. For any other active $k$, the effective loss minus returned money is $v'_{k}(e) - p(e) \le \min\{v_{k}(e), TPS_{k}\} \le TPS_{k}$.
    
    \item Rule 2: Else, if $v_{i}(e) \ge \frac{TPS_{i}}{n} - s$ for some active $i$ and $e \in R$: Give $e$ to $i$. Since Rule 1 failed, $p(e) < TPS_{k}$, so the effective loss to any other active $k$ is $v'_{k}(e) = \max\{p(e), \min\{v_{k}(e), TPS_{k}\}\} \le TPS_{k}$.
    
    \item Rule 3 (Bag-filling): Else, begin bag-filling. As $v'_{k} \le n \cdot \tilde{v}_{k}$, the invariant ensures $n \cdot \tilde{v}_{k}(R) + n \cdot s \ge |T| \cdot TPS_{k}$, meaning $\tilde{v}_{k}(R) + s \ge \frac{TPS_{k}}{n}$. Thus, a bag filled with all of $R$ must trigger a claim. We can thus find a \emph{minimal} bag $B \subseteq R$ where a set of active agents $W$ declares $\tilde{v}_{i}(B) \ge \frac{TPS_{i}}{n} - s$. Let winner $i \in W$ be the one who maximizes the generated proceeds $d = p(B \cap S_{i})$. Agent $i$ takes $B$, sells $B \cap S_{i}$, keeps $B \cap K_{i}$, and becomes inactive. We update $s \leftarrow s + \frac{d}{n}$.
\end{enumerate}

For any remaining active agent $k \neq i$, let $a = v'_{k}(B \cap K_{k})$ and $b = p(B \cap S_{k})$. The effective value lost from the pool is $v'_{k}(B) = a + b$. $k$'s net loss, accounting for the system-wide proceeds $d$ returned to the pool, is $v'_{k}(B) - d$.

\begin{itemize}
    \item If $b \le d$ (winner sells at least as much as $k$ would): The net loss is $a + b - d \le a \le a + \frac{b}{n} = \tilde{v}_{k}(B)$. By the minimality of $B$ and the failure of prior rules, $\tilde{v}_{k}(B) < \frac{2TPS_{k}}{n} \le TPS_{k}$ (as $n \ge 2$).
    
    \item If $b > d$ (winner sells less than $k$ would): By the tie-breaking rule, $k \notin W$, so $\tilde{v}_{k}(B) < \frac{TPS_{k}}{n} - s$. Expanding this gives $a + \frac{b}{n} < \frac{TPS_{k}}{n} - s$. Multiplying by $n$ yields $n \cdot a + b < TPS_{k} - n \cdot s$. The net loss is $a + b - d \le a + b \le n \cdot a + b < TPS_{k} - n \cdot s \le TPS_{k}$.
\end{itemize}

In all cases, the net loss to any active agent $k$ is bounded by $TPS_{k}$. Updating the invariant for the remaining items $R'$ and new shared money $s'$ yields: \[ v'_{k}(R') + n \cdot s' \ge (v'_{k}(R) + n \cdot s) - TPS_{k} \ge |T| \cdot TPS_{k} - TPS_{k} = (|T| - 1) \cdot TPS_{k} \]. Since $|T|$ decreases by at least one each round, the algorithm terminates in at most $n$ rounds, successfully guaranteeing every agent at least $\frac{t_{i}}{n}$.
\end{proof}

\end{document}